\documentclass[10pt,a4paper]{article}

\usepackage{hyperref}

\usepackage{a4wide}                % See geometry.pdf to learn the layout options. There are lots.
\usepackage{graphicx}
\usepackage{xcolor}
\usepackage{subcaption}
\usepackage{float}
\usepackage[format=hang,labelfont=sc,font=small]{caption}
\usepackage{multirow}
\usepackage{amsmath,amssymb,amsthm,amsfonts,mathrsfs}
\numberwithin{equation}{section}
\allowdisplaybreaks
\usepackage{mathtools}
\graphicspath{{figures/}}
\DeclareGraphicsExtensions{.pdf}

\newcommand{\revision}[1]{\textcolor{black}{#1}}
\newcommand{\minor}[1]{\textcolor{black}{#1}}

% ===================================
% PACCHETTI PER VERSIONI PROVVISORIE
%\usepackage{filemod}
%\date{Ultima modifica: \FilemodToday{\jobname.tex}}
%\usepackage[textsize=footnotesize,backgroundcolor=yellow!70,bordercolor=orange]{todonotes}
%
%\xdefinecolor{rwthblue}{rgb}{0,0.3294,0.6235}
%\xdefinecolor{rwthlightblue}{rgb}{0.5569,0.7294,0.8980}
%\newcommand{\todom}
%{
%	\todo[color=rwthlightblue,bordercolor=rwthblue,author=Michael,inline]
%}
%
%
%\xdefinecolor{greenlight}{rgb}{0.53, 0.66, 0.42}
%\xdefinecolor{green}{rgb}{0.0, 0.5, 0.0}
%\newcommand{\todog}
%{
%	\todo[color=greenlight,bordercolor=green,author=Gabriella,inline]
%}

%\usepackage[color]{showkeys}
%\definecolor{refkey}{rgb}{.35,.75,0}
%\definecolor{labelkey}{rgb}{.15,.55,0}
%
%\usepackage{fancyhdr}
%\pagestyle{fancy}
%\renewcommand*{\thefilemoddate}[3]{#3-#2-#1}
%\renewcommand*{\thefilemodtime}[4]{#1:#2}
%\lhead{\texttt{\jobname.tex}}
%\rhead{\textsf{\filemodprint{\jobname.tex}}} 
% ====================================

\newcommand{\dd}{\mathrm{d}}

\newtheorem{remark}{Remark}[section]
\newtheorem{property}{Property}
\newtheorem{definition}[remark]{Definition}
\newtheorem{theorem}[remark]{Theorem}
\newtheorem{proposition}[remark]{Proposition}

\numberwithin{figure}{section}
\numberwithin{table}{section}

\title{The BGK approximation of kinetic models for traffic}
\date{\today}
\author{M.~Herty 
	\thanks{
		Institut f\"{u}r Geometrie und Praktische Mathematik - 
		RWTH Aachen University --
		Templergraben 55, 52062 Aachen, Germany --
		{\sl herty@igpm.rwth-aachen.de}
	}
	\and 
	G.~Puppo 
	\thanks{
		Dipartimento di Scienza ed Alta Tecnologia -
		Universit\`a degli Studi dell'Insubria --
		Via Valleggio 11, 22100 Como, Italy --
		{\sl gabriella.puppo@uninsubria.it}
	}
	\and
	S.~Roncoroni 
	\thanks{
		Department of Mathematics and Statistics -
		University of Reading --
		Exhibition Road, London SW7 2AZ, UK --
		{\sl s.roncoroni@pgr.reading.ac.uk}
	}
	\and
	G.~Visconti
	\thanks{
		Institut f\"{u}r Geometrie und Praktische Mathematik - 
		RWTH Aachen University --
		Templergraben 55, 52062 Aachen, Germany --
		{\sl visconti@igpm.rwth-aachen.de}
	}
}

\begin{document}
\maketitle

\begin{abstract}
	We study spatially non-homogeneous kinetic models for vehicular traffic flow. Classical formulations, as for instance the BGK equation, lead to unconditionally unstable solutions in the congested regime of traffic. We address this issue by deriving a modified formulation of the BGK-type equation. The new kinetic model allows to reproduce conditionally stable non-equilibrium phenomena in traffic flow. In particular, stop and go waves appear as bounded backward propagating signals occurring in bounded regimes of the density where the model is unstable. The BGK-type model introduced here also offers the mesoscopic description between the microscopic follow-the-leader model and the macroscopic Aw-Rascle and Zhang model. 
\end{abstract}

\paragraph{MSC} 90B20, 35Q20, 35Q70

\paragraph{Keywords} Vehicular traffic, BGK models, Chapman-Enskog expansion, multi-scale models, kinetic equations

\section{Introduction} \label{sec:Introduction}

There are mainly three modeling scales in the mathematical description of vehicular traffic flow. The microscopic one is based on the prediction of the trajectory of each single vehicle via systems of ordinary differential equations. The macroscopic one is based on the assumption that traffic flow behaves like a fluid where individual vehicles cannot be identified. Therefore, the flow is represented by a density function and evolved in space and time via partial differential equations. The last one is the mesoscopic scale which offers an intermediate description of traffic between the microscopic and the macroscopic scale. In fact, mesoscopic or kinetic models are characterized by a statistical description of the microscopic states of vehicles but, at the same time,  still provide the macroscopic aggregate representation of traffic flow e.g.  linking collective dynamics to pairwise interactions among vehicles at a smaller microscopic scale.

Prototype examples of microscopic models are given by the optimal velocity model by Bando~\cite{Bando1995} and the follow-the-leader model~\cite{FTL1961}. The first and classical example of a macroscopic model for traffic was introduced by Lighthill and Whitham~\cite{lighthill1955PRSL} and independently by Richards~\cite{richards1956OR}. This model is based on the single conservation law for the density assuming that the dynamics is governed by a velocity function which is a function of the density only. An extension of this is provided by the second-order macroscopic model introduced by Aw and Rascle~\cite{aw2000SIAP} and by Zhang~\cite{Zhang2002}. It is based on a system of two partial differential equations describing also the evolution of the speed of the flow. There is a proved link between common microscopic and macroscopic models. More precisely, the follow-the-leader model approaches to the Lighthill-Wihtham-Richards model and to the Aw-Rascle-Zhang model in the limit of infinitely many vehicles~\cite{aw2002SIAP,DiFrancescoFagioliRosini2017,DiFrancescoEtAl2017,HoldenRisebro2018b,HoldenRisebro2018a}.

This work is mainly concerned with the analysis of traffic flow models at kinetic level. However, we point-out that throughout the paper the study is performed by linking models at different scales. Classical kinetic models for vehicular traffic~\cite{KlarWegener96,klar1997Enskog,paveri1975TR,Prigogine61,PrigogineHerman} are based on a Boltzmann-type equation that, in classical kinetic theory, describes the statistical behavior of a system of particles in a gas. Other type of kinetic formulations are available for modeling traffic flow and they also allow to simplify the complex structure of the integro-differential Boltzmann-type equation. Among them, we mention Vlasov-Fokker-Planck type models in which the integral structure of the collision kernel is replaced by differential operators, obtained also by means of suitable time scaling, e.g. see~\cite{HertyPareschi10,IllnerKlarMaterne}, and discrete-velocity models, e.g. see~\cite{DelitalaTosin2007,FermoTosin13,HertyPareschiSeaid2006}.

We study spatially non-homogeneous Boltzmann-type kinetic models for traffic and in particular the BGK approximation, originally introduced by Bhatnagar, Gross and Krook~\cite{BGK1954} for mesoscopic models of gas particles. The BGK equation replaces the Boltzmann-type kernel with a relaxation towards the equilibrium distribution of the full kinetic equation. For this reason, the BGK model provides a good description of the Boltzmann-type equation in the limit of many interactions, since in this regime the kinetic distribution is close to equilibrium. However, when the interaction frequency is extremely high to immediately drive the kinetic distribution towards the equilibrium one, the kinetic equation converges to the solution of the Lighthill-Whitham-Richards model. Since the flow is at equilibrium, non-equilibrium phenomena are not described in this regime. If the interaction frequency is still large but there is some residual between the kinetic and the equilibrium distribution, then the kinetic model is able to reproduce non-equilibrium waves. We are interested in the study of the stability of the BGK model in the latter case. Using a Chapman-Enskog expansion we show that the BGK equation for traffic flow problems yields  an advection-diffusion equation having a negative diffusion coefficient in dense traffic. This analysis is closely related to the investigation of the sub-characteristic condition for relaxation systems~\cite{Chen92hyperbolicconservation,Jin95therelaxation}. \minor{This implies possible} unbounded growth of non-equilibrium waves and the corresponding behavior is investigated at the particle description and \revision{also at the macroscopic level of the BGK model by looking at the system of the second-order moment equations}.

Due to the possible instability of the kinetic model, here we aim to derive a novel BGK equation which is stable or weakly-stable in dense traffic. As in Section~\ref{sec:unstableKinetic}, we define the BGK model being stable if the sub-characteristic condition is always verified and weakly-stable if the sub-characteristic condition is violated in a proper subdomain of the admissible values of the density. As discussed in~\cite{SeiboldFlynnKasimovRosales2013}, a weakly-stable model for traffic is not necessarily unrealistic:  the instability occurring in the regime  can be regarded as model for the appearance of moving unstable but bounded waves, as e.g. stop-and-go waves. The \minor{boundedness} of these instabilities is numerically observed in Section~\ref{sec:macroModels}. In~\cite{SeiboldFlynnKasimovRosales2013} the analysis of the sub-characteristic condition has been performed for the Aw-Rascle and Zhang model, showing that it could be either satisfied or weakly-satisfied, depending on the choices of the equilibrium speed or flux and of the pressure function. Based on this study, our goal is to develop a BGK-type equation which has as \revision{second-order macroscopic structure a Aw-Rascle and Zhang type model} so that it may be at least weakly-stable. The new kinetic model for traffic is derived by starting from the follow-the-leader model that can be seen as a discretization of the Aw-Rascle and Zhang model. Therefore, we do not only introduce a well-posed or weakly well-posed kinetic equation but we also contribute in deriving the mesoscopic description of common models for traffic flow. The diagram in Figure~\ref{fig:diagram} schematically synthesizes the contribution of this work.

\begin{figure}[t!]
	\centering
	\includegraphics[width=\textwidth]{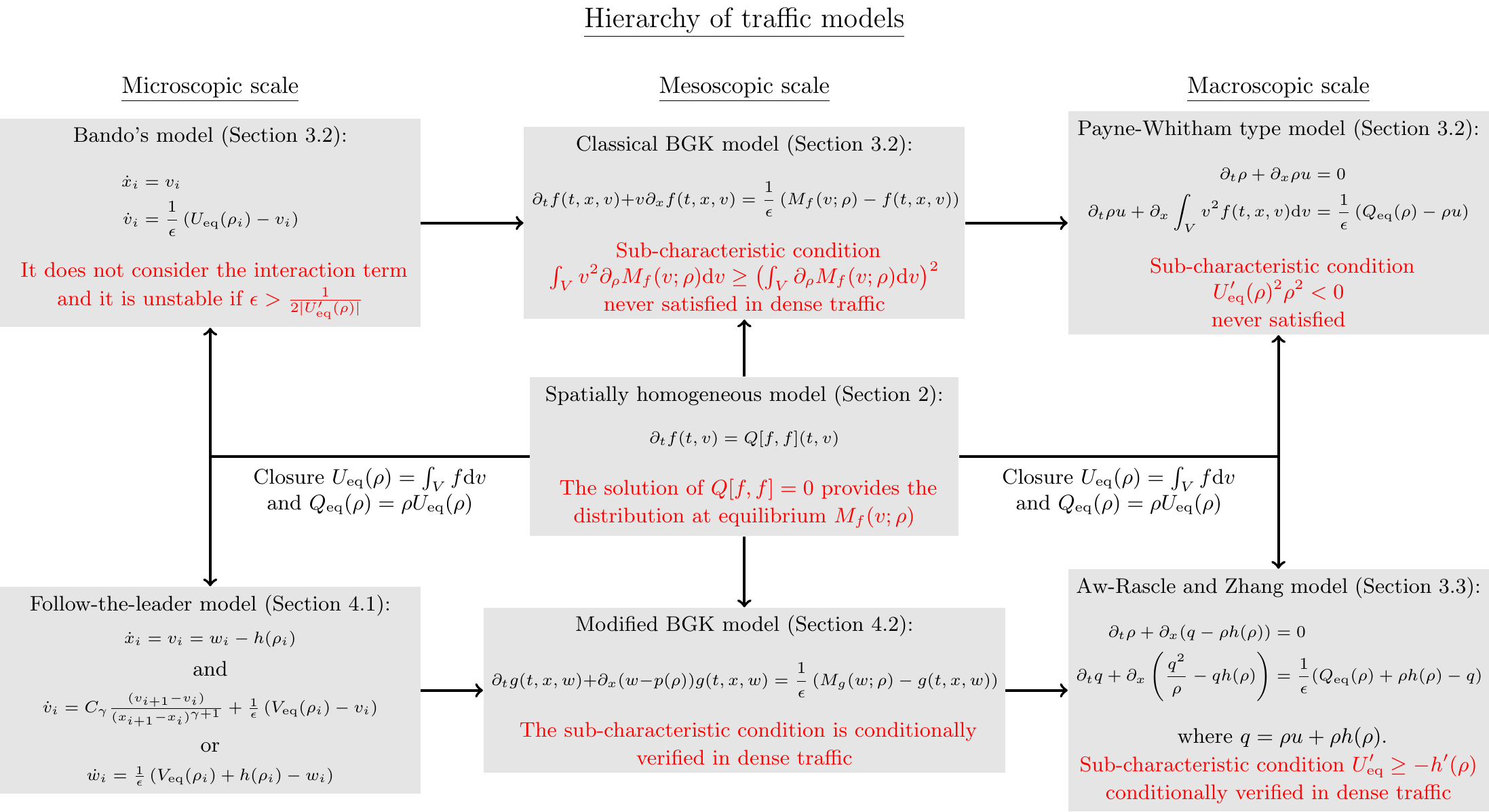}
	\caption{Schematic summary of the work.\label{fig:diagram}}	
\end{figure}

The manuscript is organized as follows. In Section~\ref{sec:homogeneous} we briefly review the homogeneous kinetic model for traffic flow introduced in~\cite{PSTV2}. In Section~\ref{sec:inhomogeneous} we introduce the non-homogeneous version of the kinetic model and we study non-equilibrium phenomena by using the classical BGK approximation of the full kinetic equation. We show that model is unstable, namely it is able to produce unbounded growth of small perturbations, via Chapman-Enskog expansions and \revision{Grad's moment method}. In Section~\ref{sec:correctBGK} we propose a way to derive an alternative formulation of a BGK-type model for traffic flow. This approach has also the advantage of prescribing the mesoscopic step between the microscopic follow-the-leader model and the macroscopic Aw-Rascle and Zhang model. Finally, we discuss the results and possible perspectives in Section~\ref{sec:conclusions}.

\section{Review of a homogeneous kinetic model} \label{sec:homogeneous}

Recent homogeneous kinetic model for traffic flow have been discussed e.g. in~\cite{PSTV2,TesiRonc}. The model proposed therein is used as starting point for the discussion in the following. However, the particular choice below is not required for the shown results later. 

General  kinetic traffic models of the form read 
\begin{equation} \label{eq:generalKinetic}
	\partial_t f(t,x,v) + v \partial_x f(t,x,v) = \frac{1}{\epsilon} Q[f,f](t,x,v),
\end{equation}
where
\begin{equation} \label{eq:kineticDistr}
	f(t,x,v) : \mathbb{R}^+ \times \mathbb{R} \times V \rightarrow \mathbb{R}^+
\end{equation}
is the mass distribution function of the flow, i.e. $f(t,x,v)\dd x\dd v$ gives the number of vehicles in $[x,x+\dd x]$ with velocity in $[v,v+\dd v]$ at time $t>0$. Moments of the kinetic distribution function $f$ allow to define the usual macroscopic quantities for traffic flow
\begin{equation} \label{eq:macroQuantities}
	\rho(t,x)= \int_V f(t,x,v) \dd v, \quad (\rho u)(t,x) = \int_V v f(t,x,v) \dd v, \quad u(t,x) = \frac{1}{\rho} \int_V v f(t,x,v) \dd v
\end{equation}
yielding  density,  flux and  mean speed of vehicles, respectively, at time $t$ and position $x$. The space $V:=[0,V_M]$ is the space of microscopic speeds. We suppose that $V$ is limited by a maximum speed $V_M>0$, which may depend on several factors, and that the density is also limited by a maximum density $\rho_M$. Throughout the work  we will take dimensionless quantities, thus $V_M=1$ and $\rho_M=1$.

In~\cite{PSTV2}  the spatially homogeneous kinetic model corresponding to~\eqref{eq:generalKinetic} has been discussed: 
\begin{equation} \label{eq:homogeneousKinetic}
	\partial_t f(t,v) = \frac{1}{\epsilon} Q[f,f](t,v).
\end{equation}
Both in~\eqref{eq:generalKinetic} and in~\eqref{eq:homogeneousKinetic}, $\epsilon$ is a positive quantity. It is reasonable to assume that it may depend on the density $\rho$ and possibly its first derivative, see Remark~\ref{rem:backward}.  The equilibrium solution of~\eqref{eq:homogeneousKinetic} is  called equilibrium distribution, or Maxwellian in analogy with kinetic models for rarefied gas dynamics, and we indicate it by $M_f$. 

The equilibrium distribution $M_f$ depends on the modeling of the collision kernel $Q[f,f]$ that accounts for the kinetic interactions between vehicles. The formulation of $Q[f,f]$ draws inspiration from the classical Boltzmann collision kernel for gas particles. Thus $Q[f,f]$ is split in the difference of a gain term and a loss term. More specifically,
\begin{equation} \label{eq:collisionKernel}
	Q[f,f](t,x,v) = \int_V \dd v_* \int_V \dd v^* f(t,v_*,x) \mathcal{P}(v_*\to v | v^*;\rho) f(t,v^*,x) - f(t,v,x) \int_V \dd v^* f(t,v^*,x),
\end{equation}
where we consider binary interactions, in which a test vehicle with velocity $v_*$ interacts with a field vehicle with speed $v^*$, emerging with speed $v$ as a result of the interaction and with probability given by $\mathcal{P}$. The kinetic model for traffic flow studied in~\cite{PSTV2} can be generalized to the following probability distribution accounting for stochastic speed transitions, see~\cite{TesiRonc}:
\begin{equation} \label{eq:rule}
\mathcal{P}(v_*\to v | v^*;\rho)=
\begin{cases} 
	P\delta (v-\min{ \lbrace v_* + \Delta_a, V_M \rbrace }) + (1-P)\delta (v-\max{\lbrace v_*- \Delta_b, 0}\rbrace) & v_* \le v^* \\ 
	P\delta (v-\min{\lbrace v_*+ \Delta_a, V_M}\rbrace)+(1-P)\delta (v-\max{\lbrace v^*- \Delta_b, 0}\rbrace) & v_* > v^* 
\end{cases}
\end{equation}
where $P=P(\rho)\in[0,1]$ is a decreasing function of the density giving the probability of accelerating, $\Delta_a$ and $\Delta_b$ are the acceleration and the braking parameters, respectively. We can think of $\Delta_a$ as the instantaneous physical acceleration of a vehicle. The parameter $\Delta_b$ instead corresponds to an uncertainty in the estimate of the field vehicle speed, either one's own, or the one of the vehicle ahead, as in the first and the second line respectively of~\eqref{eq:rule}. Note that the model is continuous across the line $v_*=v^*$ and that mass conservation holds true since
$$
	\mathcal{P}(v_*\to v | v^*;\rho) \geq 0, \quad \int_V \mathcal{P}(v_*\to v | v^*;\rho) \mathrm{d}v = 1.
$$
In the following we will consider $\Delta_b=0$ and $\Delta_a = \Delta v$ constant, so that one recovers exactly the model in~\cite{PSTV2}. %More precisely, $\Delta_b=0$ corresponds to no uncertainty: the test vehicle with speed $v_*$ has an exact perception of speed, and therefore is able to maintain its own speed $v=v_*$ when it interacts with a faster vehicle ($v_*<v^*$), while it can brake exactly to the speed of the vehicle ahead $v=v^*$, when it interacts with a slower vehicle ($v_*>v^*$). For $0<\Delta_b\leq \Delta_a$ and both constant, instead the model in~\cite{TesiRonc} is obtained.

In~\cite{PSTV2} a result on the existence and uniqueness of a particular class of stationary solutions has been established:

\begin{theorem}[Theorem 3.1 and Theorem 3.4 in~\cite{PSTV2}] \label{th:continuousEq}
	Let $P=P(\rho)$ be a given function of the density $\rho\in[0,\rho_M]$ such that $P\in[0,1]$. Let $\{v_j\}_{j=1}^N$ be a set of velocities in $[0,V_M]$. The distribution function
	\[
		M_f(v;\rho) = \sum_{j=1}^N f^{\infty}_j(\rho) \delta_{v_j}(v), \quad f^{\infty}_j > 0 \quad \forall\; j=1,\dots,N,
	\]
	with $\sum_{j=1}^N f_j^\infty(\rho)=\rho$, is the unique stable weak stationary solution of the model~\eqref{eq:homogeneousKinetic}-\eqref{eq:rule} with $\Delta_b=0$ provided  $v_j=v_1+j\Delta v$, $j=1,\dots,N$, and
	\begin{equation*}
		f_j^\infty = \begin{cases}
		0 & P \geq \tfrac12 \\ \frac{-2(1-P)\sum_{k=1}^{j-1} f_k^\infty+(1-2P)\rho+\sqrt{\left[(1-2P)\rho-2(1-P)\sum_{k=1}^{j-1}f_k^\infty\right]^2+4P(1-P)\rho f_{j-1}^\infty}}{2(1-P)} &\mbox{otherwise}
		\end{cases}
	\end{equation*}
	for $j=1,\dots,N-1$ and $f_N^\infty = \rho - \sum_{j=1}^{N-1} f_j^\infty$.
\end{theorem}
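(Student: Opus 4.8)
The plan is to reduce the weak stationarity condition $Q[M_f,M_f]=0$ to a finite nonlinear recursion for the weights and then to isolate the unique \emph{nonnegative} and \emph{stable} solution. The starting observation is that, with $\Delta_b=0$ and $\Delta_a=\Delta v$ constant, the interaction rule~\eqref{eq:rule} maps the lattice $\{v_j=v_1+j\Delta v\}$ into itself: acceleration sends $v_j\mapsto v_{j+1}$ (with the cap $v_N=V_M$ absorbing $v_N\mapsto v_N$), the ``stay'' outcome leaves the velocity unchanged, and deceleration sends a faster test vehicle to the field velocity, which is again a lattice point. Hence it suffices to look for stationary states of the form $M_f=\sum_{j=1}^N f_j^\infty\delta_{v_j}$, for which the homogeneous model~\eqref{eq:homogeneousKinetic} collapses to an ODE system for $(f_1,\dots,f_N)$ and $Q$ acts as a discrete gain-minus-loss operator.

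First I would compute this discrete operator by inserting the ansatz into~\eqref{eq:collisionKernel} and testing against $\delta_{v_j}$. The loss term contributes $\rho f_j$, while the gain splits into an acceleration part $P\rho f_{j-1}$ (with the convention $f_0=0$, since no velocity lies below the slowest lattice point, and an extra $P\rho f_N$ when $j=N$ because of the cap) and a stay/deceleration part equal to $(1-P)f_j\big[f_j+2\sum_{k>j}f_k\big]$. Setting gain equal to loss and eliminating $\sum_{k>j}f_k=\rho-\sum_{k\le j}f_k$ turns the stationarity equation at each $v_j$ into the quadratic
\[
(1-P)(f_j^\infty)^2+\Big[(2P-1)\rho+2(1-P)\textstyle\sum_{k=1}^{j-1}f_k^\infty\Big]f_j^\infty-P\rho f_{j-1}^\infty=0 ,
\]
whose positive root is exactly the expression claimed in the statement. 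Since total mass is conserved by $Q$, only the equations for $j=1,\dots,N-1$ are independent and the last weight is fixed by $f_N^\infty=\rho-\sum_{j<N}f_j^\infty$; one still has to check $f_N^\infty\ge 0$, which I would obtain from an inductive upper bound on the partial sums.

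The selection of the correct root, and hence uniqueness, I would argue from the sign of the constant term. For $j\ge 2$ the product of the two roots equals $-P\rho f_{j-1}^\infty/(1-P)\le 0$, so exactly one root is nonnegative and the recursion is forced; the only genuine branching occurs at the seed $j=1$, where $f_0=0$ gives the factorised equation $f_1^\infty\big[(1-P)f_1^\infty+(2P-1)\rho\big]=0$. Its nontrivial root $f_1^\infty=(1-2P)\rho/(1-P)$ is positive precisely when $P<\tfrac12$, which produces the nondegenerate profile; when $P\ge\tfrac12$ this root is nonpositive, leaving only $f_1^\infty=0$ and therefore the degenerate equilibrium $f_N^\infty=\rho$, $f_j^\infty=0$ for $j<N$. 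This already yields the dichotomy in the statement.

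The main obstacle is the \emph{stability} claim, which is what truly singles out one equilibrium when $P<\tfrac12$: there the degenerate state $f_N^\infty=\rho$ remains a stationary solution, so uniqueness cannot come from positivity alone. To settle this I would linearise the homogeneous ODE system around each equilibrium on the invariant hyperplane $\{\sum_j\delta f_j=0\}$ and study the spectrum of the Jacobian $DQ$: the assertion is that the nondegenerate profile has spectrum in the stable half-plane while the degenerate one acquires an unstable direction for $P<\tfrac12$. Carrying out this spectral analysis explicitly---or, alternatively, constructing a Lyapunov functional adapted to this traffic kernel, for which no classical $H$-theorem is available---is the delicate technical step, and I expect it to be the content of the stability part of the cited result.
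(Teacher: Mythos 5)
You should first be aware that the paper contains no proof of this statement: Theorem~\ref{th:continuousEq} is imported verbatim from the cited reference (Theorems 3.1 and 3.4 of that work), so your reconstruction can only be judged on its own terms. Its algebraic core is correct and almost certainly follows the same route as the original: with $\Delta_b=0$ the rule~\eqref{eq:rule} leaves the lattice $\{v_j\}$ invariant; the gain/loss bookkeeping in~\eqref{eq:collisionKernel} (acceleration gain $P\rho f_{j-1}$, the extra cap term $P\rho f_N$ at $j=N$, stay/braking gain $(1-P)f_j[f_j+2\sum_{k>j}f_k]$, loss $\rho f_j$) is exactly right; mass conservation does make one of the $N$ equations redundant; and the nonnegative root of your quadratic is precisely the expression in the statement, with the $P\geq\tfrac12$ versus $P<\tfrac12$ dichotomy emerging from the seed equation as you describe.

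The genuine gap is the stability claim, which you defer rather than prove, and it is not a peripheral refinement: it is the entire content of the words ``unique stable'' in the statement. Moreover, the branching structure is richer than you assert. It is not true that ``the only genuine branching occurs at the seed $j=1$'': whenever $f_1^\infty=\dots=f_{m-1}^\infty=0$, the constant term of the quadratic at $j=m$ vanishes and the equation factors again, so the root-selection argument by itself leaves an entire family of stationary states --- one positive profile supported on $\{v_m,\dots,v_N\}$ for every starting index $m$, the degenerate state $f_N^\infty=\rho$ being the extreme case $m=N$. These are exactly the equilibria ``depending on the initial distribution'' mentioned in the remark at the end of Section~\ref{sec:homogeneous}, which the cited work shows to be unstable. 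A complete proof must therefore discriminate the $m=1$ profile against this whole family, not merely against $f_N^\infty=\rho$; your proposal names the linearization/Lyapunov strategy but computes no spectrum and exhibits no functional, so the selection mechanism that the theorem actually asserts is left unestablished. A secondary, smaller gap: the admissibility bound $f_N^\infty=\rho-\sum_{j<N}f_j^\infty>0$ is invoked via an unproven ``inductive upper bound on the partial sums''; without it the recursion is not known to produce a distribution of total mass $\rho$ with all weights positive, which the statement requires.
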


Modeling with  kinetic equations such as~\eqref{eq:generalKinetic} is driven by the interaction kernel $Q[f,f]$ and for this reason homogeneous kinetic models are widely studied for traffic flow problems. %Indeed, they provide information on the soundness of the modeling of the microscopic interactions by investigating the macroscopic quantities resulting from the equilibrium distribution. 
Using the steady-state of the kinetic model it is possible to define the flux and the mean speed of vehicles at equilibrium as
\begin{equation} \label{eq:macroQuantitiesEq}
	Q_\text{eq}(\rho) = \left(\rho U_\text{eq}(\rho)\right) = \int_V v M_f(v;\rho) \dd v, \quad U_\text{eq}(\rho) = \frac{1}{\rho} \int_V v M_f(v;\rho) \dd v.
\end{equation}
Notice that Theorem~\ref{th:continuousEq} states that $M_f$ depends only on a finite number of speeds, that it is parameterized by the local density $\rho$ and that, for each fixed value of $\rho$, it is unique. Therefore, the maps $\rho \mapsto Q_\text{eq}(\rho)$ and $\rho \mapsto U_\text{eq}(\rho)$ are actually functions prescribing a uniquely defined correspondence between the density and the quantities~\eqref{eq:macroQuantitiesEq}. These relations provide the so-called simulated fundamental diagrams of traffic which are used in order to validate the model.

\begin{figure}[t!]
	\centering
	\includegraphics[width=0.49\textwidth]{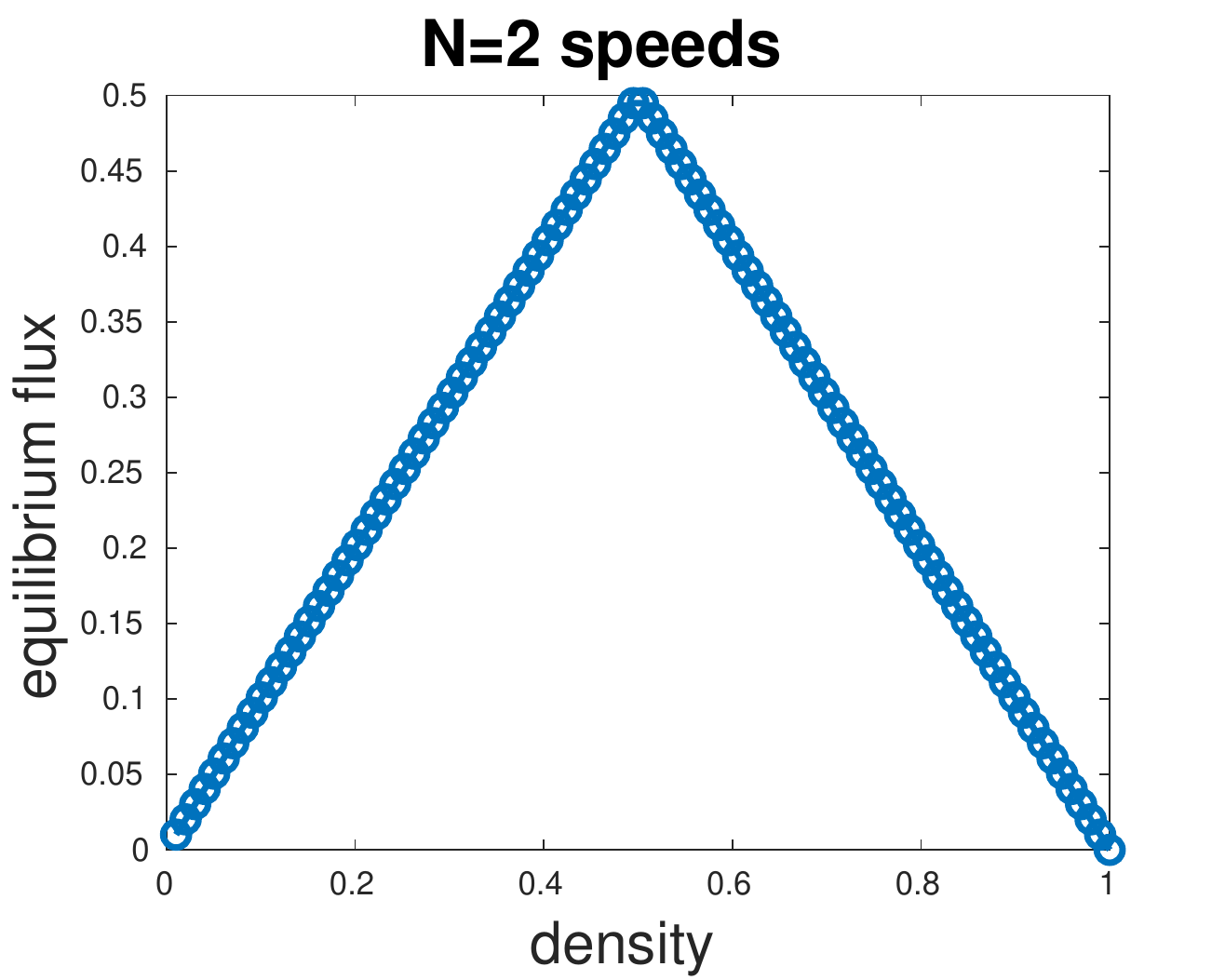}
	\includegraphics[width=0.49\textwidth]{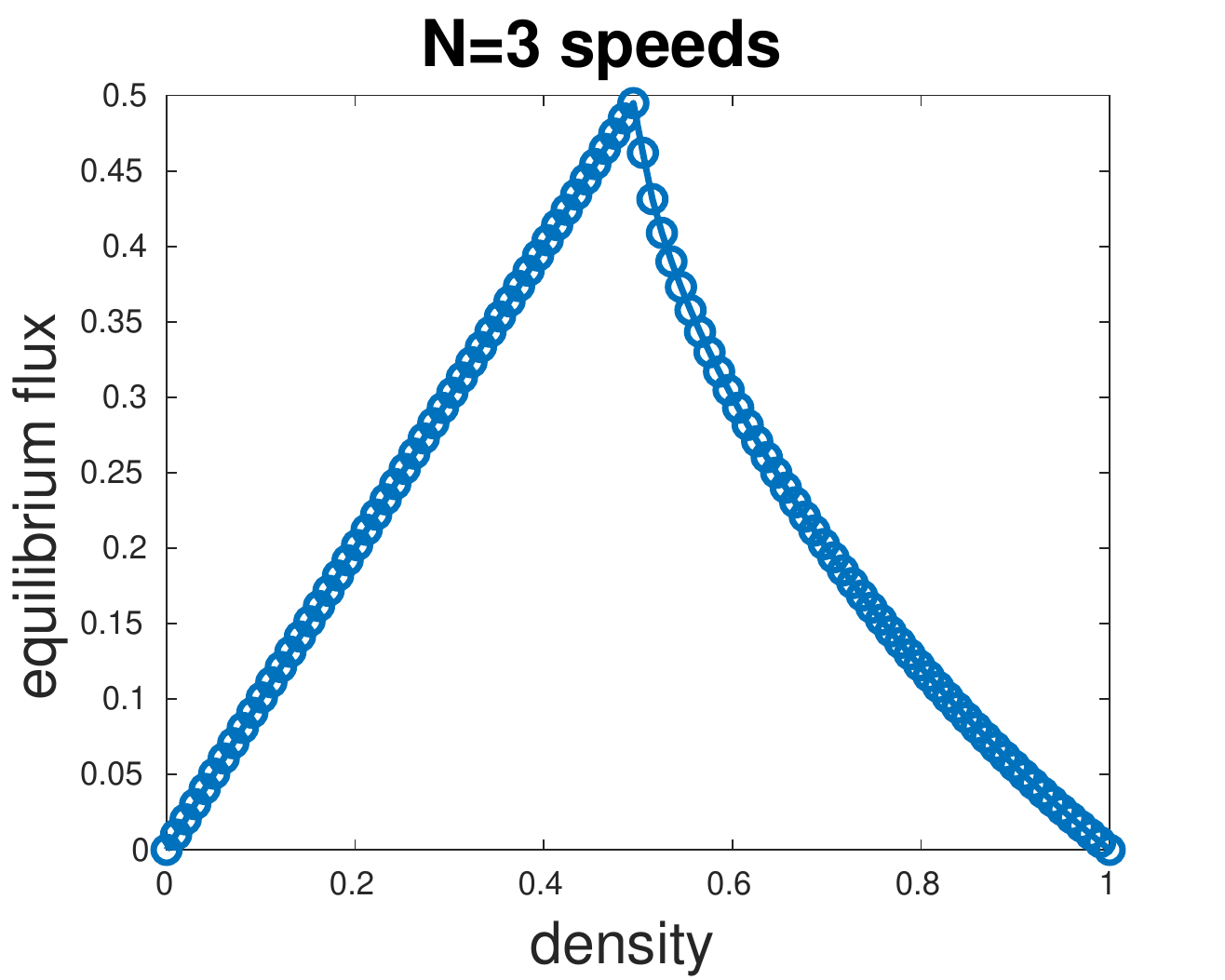}
	\\
	\includegraphics[width=0.49\textwidth]{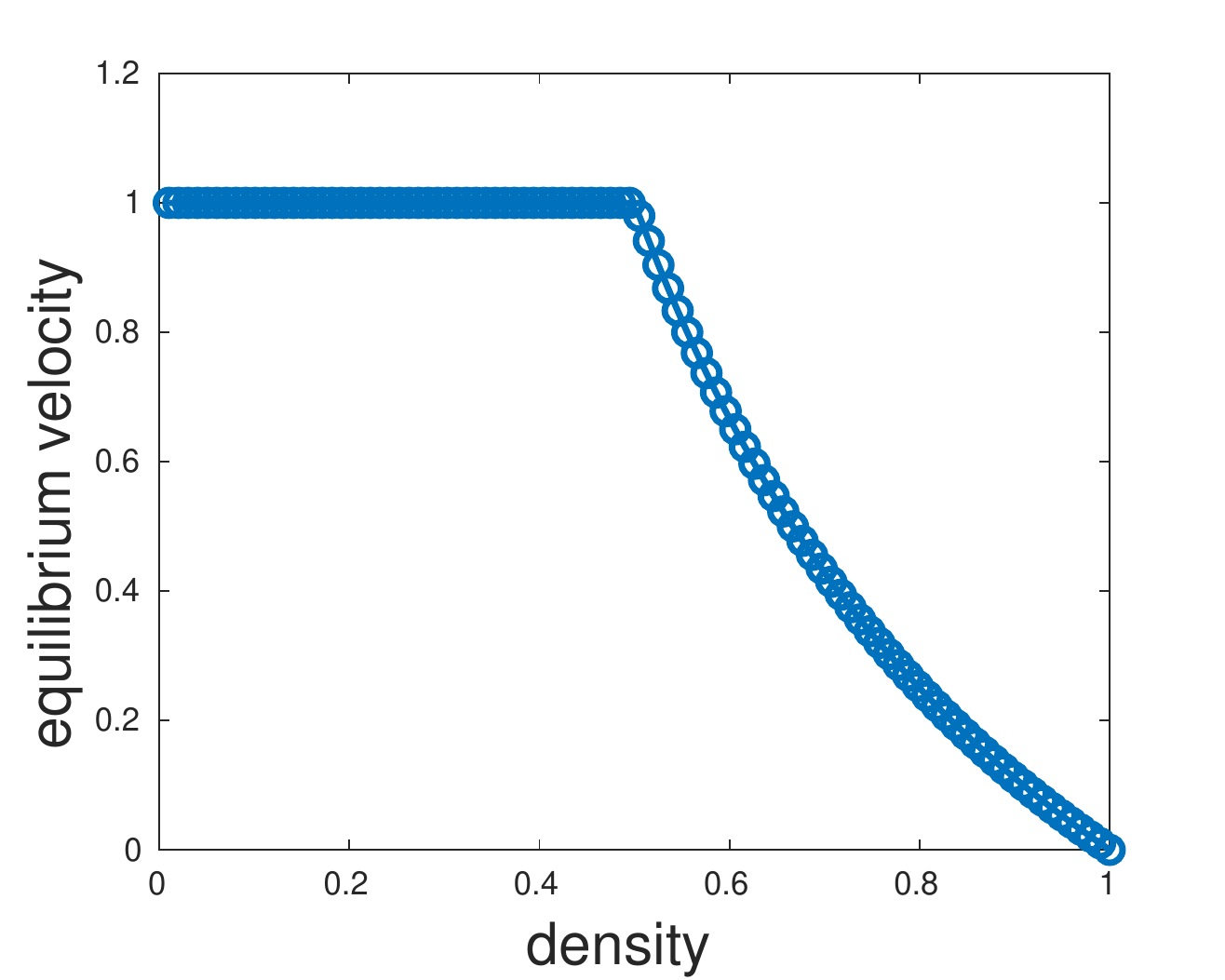}
	\includegraphics[width=0.49\textwidth]{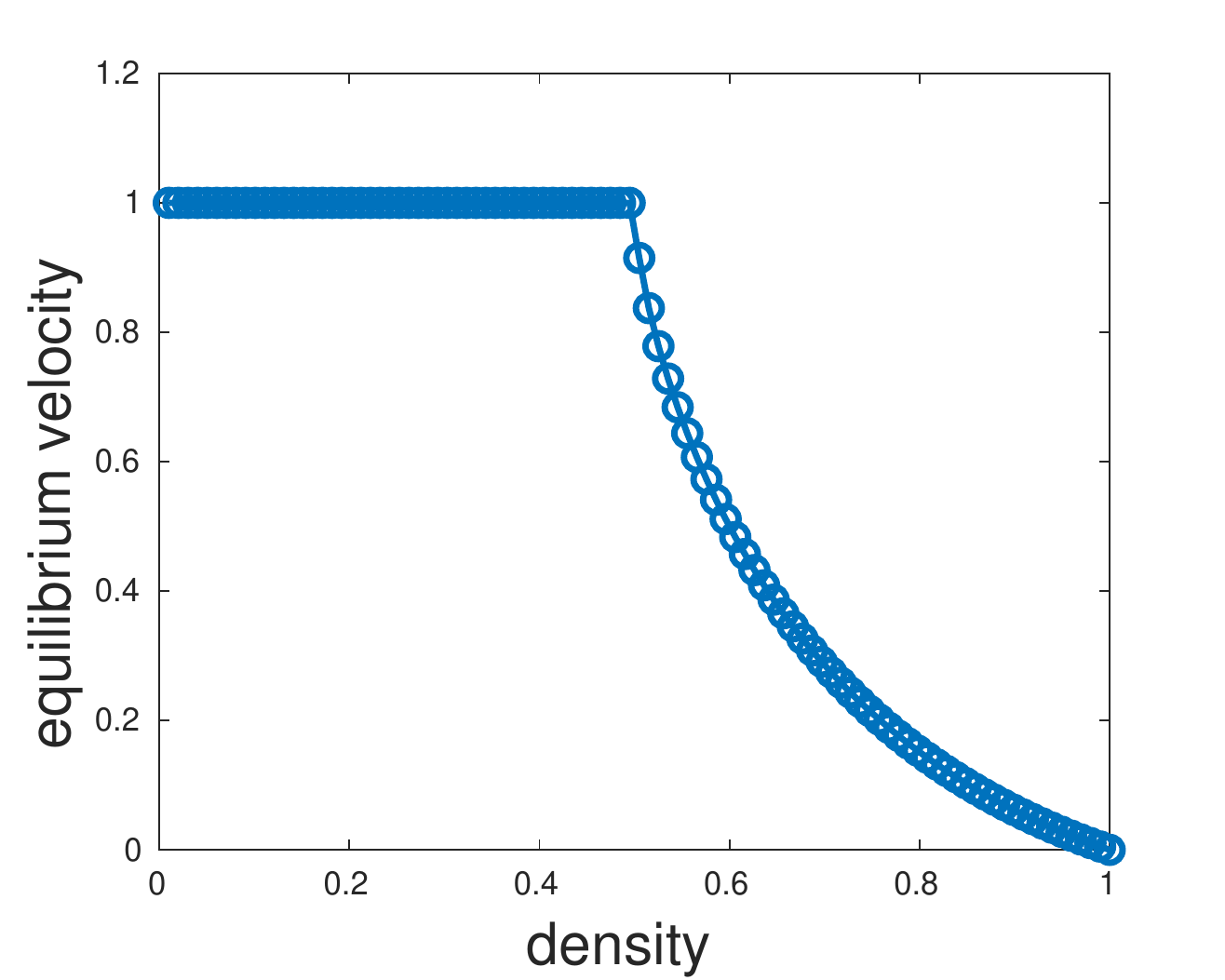}
	\caption{Equilibrium flux diagrams (top row) and equilibrium speed diagrams (bottom row) of the homogeneous kinetic model~\eqref{eq:homogeneousKinetic}-\eqref{eq:rule} with $N=2$ and $N=3$ speeds for the case $\Delta a=\frac12$ and $\Delta b=0$.\label{fig:fundDiag}}
\end{figure}
\begin{figure}[t!]
	\centering
	\includegraphics[width=0.49\textwidth]{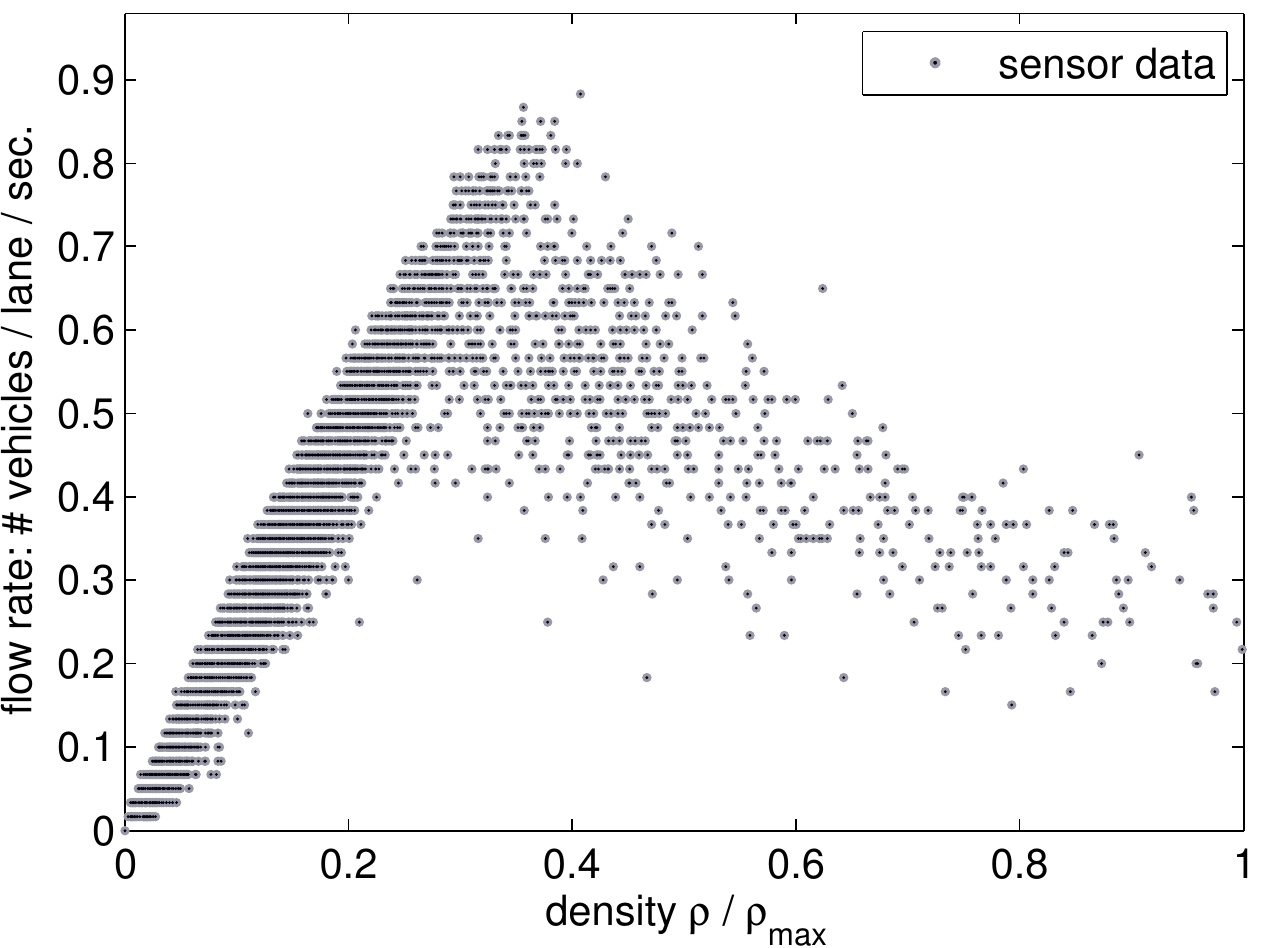}
	\includegraphics[width=0.49\textwidth]{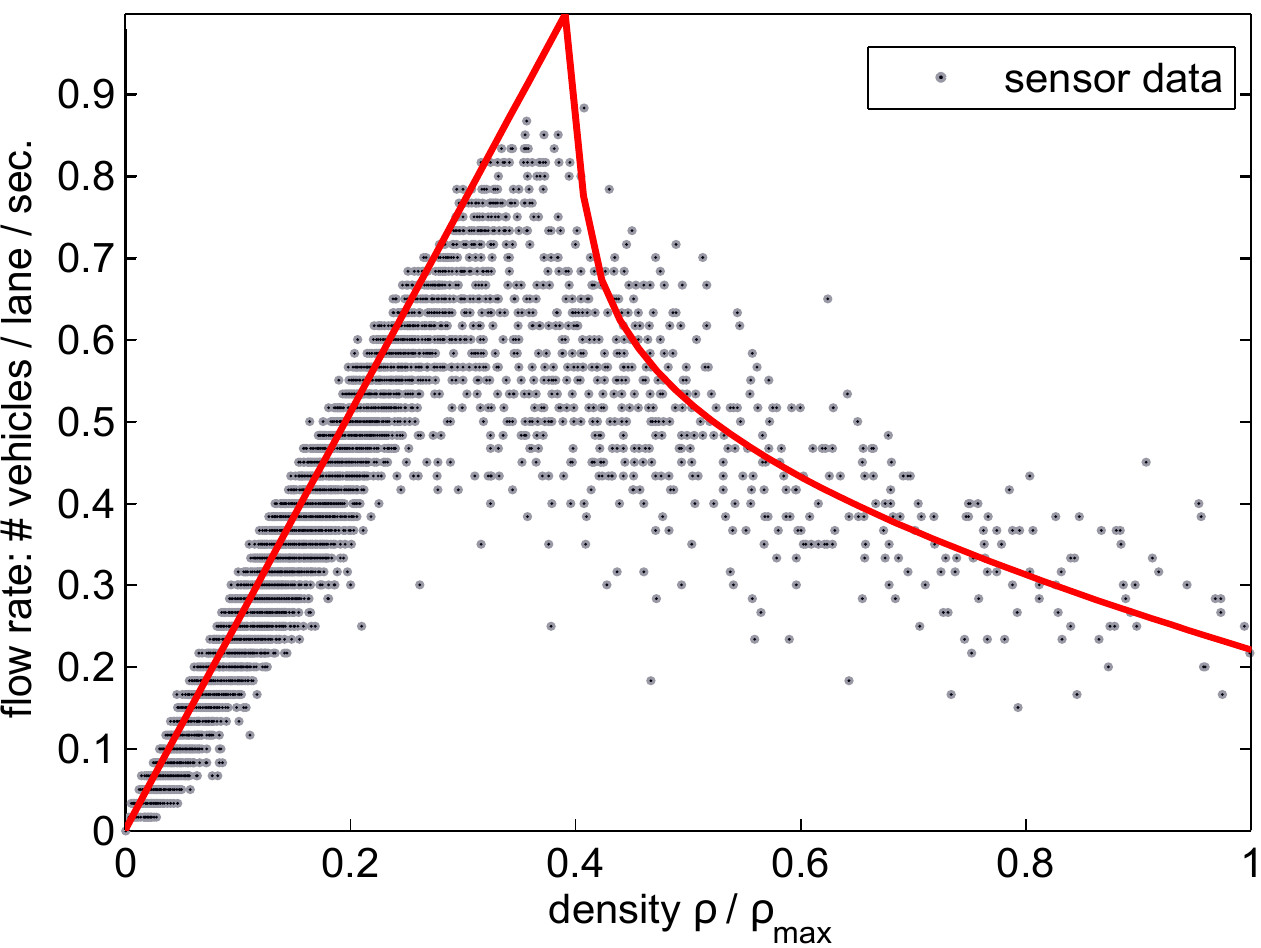}
	\caption{Left: experimental flux diagram from measurements by the Minnesota Department of Transportation, reproduced by kind permission of Seibold et al.~\cite{SeiboldFlynnKasimovRosales2013}. Right: comparison between experimental data and the flux diagram resulting from the model~\eqref{eq:homogeneousKinetic}-\eqref{eq:rule}, with $N=3$ speeds and $P(\rho) = 1 - \rho^{\frac14}$ for the case $\Delta a=\frac12$ and $\Delta b=0$.\label{fig:diagComparison}}
\end{figure}

In Figure~\ref{fig:fundDiag} we show the fundamental diagrams of  homogeneous kinetic model~\eqref{eq:homogeneousKinetic}-\eqref{eq:rule} that are computed by using the equilibrium distribution prescribed by Theorem~\ref{th:continuousEq}. The top row shows the correspondence $\rho \mapsto Q_\text{eq}(\rho)$ and the bottom row shows $\rho \mapsto U_\text{eq}(\rho)$. We take $P(\rho) = 1-\rho$ and we consider the case of $N=2$ speeds (left panels) and $N=3$ speeds (right panels) with $\Delta v = \frac{1}{N-1}$. In Figure~\ref{fig:diagComparison} we show that the fundamental diagrams can fit experimental flux diagrams. This is for instance the case of the experimental measurements provided by the Minnesota Department of Transportation in 2013 and tuning the parameters of the model as $N=3$ speeds, $\Delta v=\frac12$ and $P(\rho)=1-\rho^{\frac14}$. Observe that the flux at maximum density is not zero since we compute the diagram up to a value of the density $\widetilde{\rho}$ such that $\widetilde{\rho}<\rho_M$. This is due to the observation
that experimental data contain a residual movement even in the congested phase. The model is able to describe the phase transition and to catch the two regimes of traffic compared with experimental data. For low values of $\rho$, more precisely for $\rho<\rho_{\mbox{\scriptsize crit}}$, where $\rho_{\mbox{\scriptsize crit}}$ is the solution of $P(\rho)=\frac12$, the flux is linear in $\rho$. This is the phase of free flow, when $U_\text{eq}(\rho)=V_M$. For larger values of $\rho$, i.e. $\rho$ larger than the critical density $\rho_{\mbox{\scriptsize crit}}$, the role of the interactions increases, and the flux decreases. This corresponds to the congested phase of traffic flow, see also~\cite{kerner2004BOOK}.  

\begin{remark}
	Thanks to Theorem~\ref{th:continuousEq} we have that the equilibrium distribution $M_f$ has a fixed dependence on $v$, and depends on $x$ and $t$ only through the local density $\rho(x,t)$. However, for $\Delta_b=0$, it was shown in~\cite{PSTV2} that unstable equilibria can be found, for which $M_f$ depends not only on $\rho$, but also on the initial distribution $f(t=0,x,v)$. Those equilibria are unstable, in the sense that they disappear and drive to the solutions given by Theorem~\ref{th:continuousEq} when the initial data are perturbed. If the braking uncertainty $\Delta_b > 0$, it was shown in~\cite{TesiRonc} that only stable equilibria are found, they still depend on a finite number of speeds. In this work we still focus on the case $\Delta_b = 0$ and consider initial distributions $f(t=0,x,v)$ such that only stable equilibria occur.
\end{remark}

\section{Uncontrolled instabilities in standard non-homogeneous kinetic models} \label{sec:inhomogeneous}

\subsection{Motivation: non-equilibrium regimes} \label{sec:motivation}

The properties of the fundamental diagrams derived in Section~\ref{sec:homogeneous} are the result of the modeling of the microscopic interactions. Although the model is able to reproduce typical features of the experimental diagrams, we point-out that the simulated diagrams are only single valued. This is a direct consequence of the uniqueness of the equilibrium distribution provided by Theorem~\ref{th:continuousEq}. This means that the model cannot reproduce and explain the scattering behavior of the experimental measurements, at least not at equilibrium. Mathematical models being able to describe multivalued fundamental diagrams were also studied and they link the observed scattered data to several reasons: for instance the presence of heterogeneous components, such as different drivers~\cite{LoSchiavo2002,MendezVelasco13} or populations of vehicles~\cite{PgSmTaVg3}, or the deviation of microscopic speeds at equilibrium~\cite{FermoTosin14}, or the presence of non-equilibrium phenomena such as stop and go waves~\cite{KimZhang2008,SeiboldFlynnKasimovRosales2013,ZhangKim2005}.

In this work we will focus on the last motivation and  consider non-equilibrium phenomena that might lead to complex traffic phenomena as e.g.  stop and go waves. Such phenomena cannot be described by spatially homogeneous models but there is the need to reconsider the space dependence and thus to take into account the full kinetic equation~\eqref{eq:generalKinetic}.

%In this section we present and study classical formulations of non-homogeneous kinetic models applied to traffic flow problems. We show that in fact non-equilibrium phenomena translate into multivalued fundamental diagrams but we prove that standard kinetic models can give rise to unconditionally unstable solutions. This fact justifies the purpose of developing a new non-homogeneous kinetic model for vehicular traffic flow later in Section~\ref{sec:correctBGK}.

\subsection{Non-homogeneous kinetic models} \label{sec:unstableKinetic}

In the non-homogeneous case, the relaxation parameter $\epsilon$ plays a  crucial role since it balances the weight between the convection and the source term. In analogy to gas dynamics, where the interaction frequency is given by the Knudsen number, different values of $\epsilon$ lead to different regimes. The situation is synthesized in Table~\ref{tab:epsilon}. If we allow for $\epsilon = 0$, i.e. we suppose that the interactions are so frequent to instantaneously relax $f$ to the local equilibrium distribution $M_f$, we are in the so-called equilibrium flow regime where~\eqref{eq:generalKinetic} is solving the conservation law for the density
\begin{equation} \label{eq:modelLWR}
	\partial_t \rho + \partial_x Q_\text{eq}(\rho) = 0
\end{equation}
with $Q_\text{eq}(\rho)$ defined in~\eqref{eq:macroQuantitiesEq}. Equation~\eqref{eq:modelLWR} is a first-order macroscopic model for traffic flow ~\cite{lighthill1955PRSL,richards1956OR}.  Instead, we expect that if $\epsilon$ is small, but not vanishing, then we are in a regime where the kinetic equation~\eqref{eq:generalKinetic} is solving the continuity equation~\eqref{eq:modelLWR} with a diffusive perturbation of order $O(\epsilon)$. For $\epsilon \asymp 1$ we are in the transitional regime where the counterpart of the kinetic equation is given by extended continuum hydrodynamic equations. In traffic flow, this is the case of the Aw-Rascle~\cite{aw2000SIAP} and Zhang~\cite{Zhang2002} model. For $\epsilon > 1$, but not too large, we are in the kinetic regime or rarefied gas regime and finally for $\epsilon \to \infty$ we obtain the regime of the collision-less kinetic equation.

\begin{table}[t!]
	\begin{center}
		\caption{Classification of flow regimes for the Boltzmann-type kinetic equation~\eqref{eq:generalKinetic}.}
		\label{tab:epsilon}
		\small
		\begin{tabular}{|c|l|l|l|}
			\hline
			$\epsilon$ & Regime & Kinetic model & Continuum flow model \\
			\hline
			$0$ & Equilibrium flow & \multirow{4}{*}{Boltzmann} & Mass conservation law (LWR model) \\
			$\to 0$ & Viscous flow & & Diffusion equation \\
			$\asymp 1$ & Transitional & & Extended hydrodynamic equations \\
			$>1$ & Rarefied & & - \\ \cline{3-3}
			$\to\infty$ & Free molecular flow & Collision-less Boltzmann & - \\
			\hline
		\end{tabular}
	\end{center}
\end{table} 

Here, we will consider a type of approximation for the full non-linear Boltzmann-type equation which formally holds for small values of $\epsilon$, namely the BGK approximation. This technique is based on replacing the Boltzmann-type kernel in~\eqref{eq:generalKinetic} by a relaxation term:
\begin{equation} \label{eq:BGK}
	\partial_t f(t,x,v) + v \partial_x f(t,x,v) = \frac{1}{\epsilon} \left( M_f(v;\rho) - f(t,x,v) \right),
\end{equation}
where $M_f(v;\rho)$ is the distribution of the microscopic speeds $v$ at equilibrium, corresponding to the density $\rho$ at time $t$ and position $x$ derived from the spatially homogeneous equation~\eqref{eq:homogeneousKinetic}. Equation~\eqref{eq:BGK} assumes that the equilibrium distribution is known. For small values of $\epsilon$, the BGK model provides the same behavior of the full kinetic equation~\eqref{eq:generalKinetic}. In contrast, \eqref{eq:BGK} is only an approximation of~\eqref{eq:generalKinetic} in the regime with $\epsilon$ large. %We will study the BGK model~\eqref{eq:BGK} but we point-out that, due to the previous consideration, it is natural to assume that the its properties are shared by the corresponding full kinetic equation when the flow is not rarefied.

\begin{figure}[t!]
	\centering
	\includegraphics[width=0.49\textwidth]{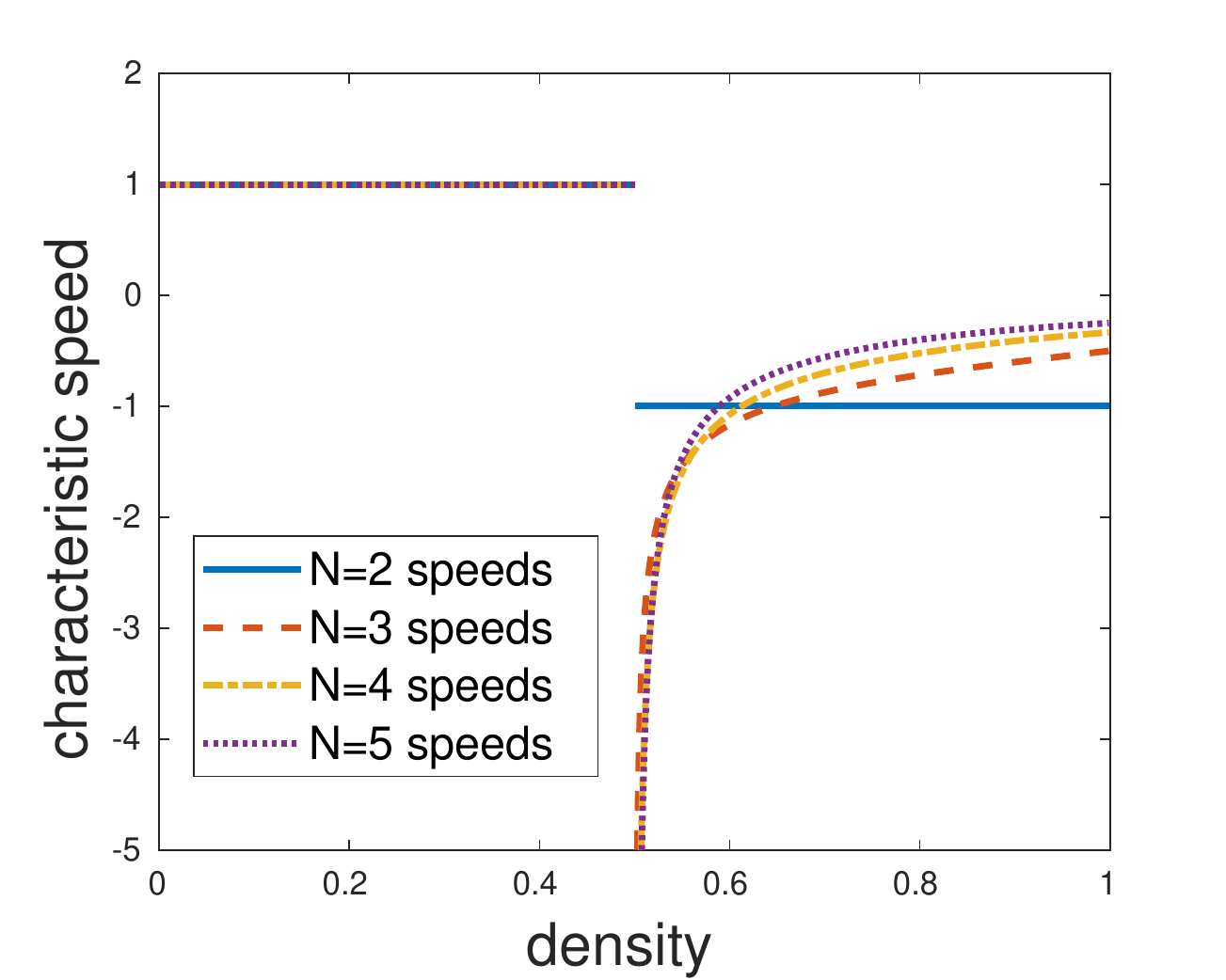}
	\includegraphics[width=0.49\textwidth]{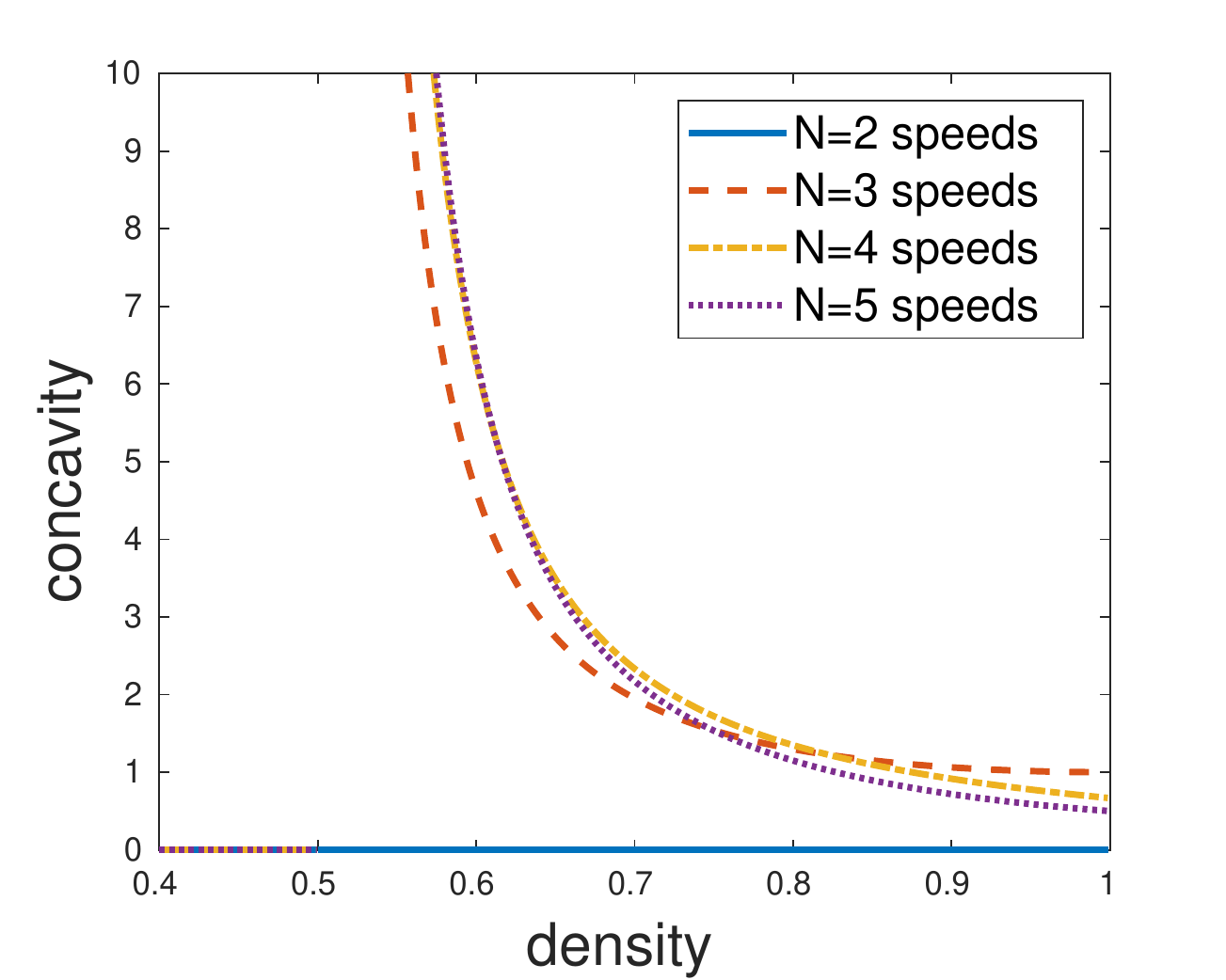}
	\caption{Left: characteristic speeds at equilibrium of the homogeneous kinetic model~\eqref{eq:homogeneousKinetic}. Right: concavity of the equilibrium flux $Q_\text{eq}$ given in~\eqref{eq:macroQuantitiesEq} for the homogeneous kinetic model~\eqref{eq:homogeneousKinetic}.\label{fig:propertiesFD}}
\end{figure}

\begin{remark}[Backward propagating information] \label{rem:backward}
	Let us consider $\epsilon$ fixed and $\epsilon \gg 1$, i.e. the regime in which the transport term in~\eqref{eq:BGK} is stronger then the relaxation term. It is well known that in this regime, since the velocities in traffic are non-negative, the information propagate in the wrong direction for congested traffic situations. In this phase, in fact, we expect to see backward propagation of signals but the transport part move them forward. This phenomenon is shared both by a full kinetic model~\eqref{eq:generalKinetic} and the BGK equation~\eqref{eq:BGK}. For this reason, several models with the ability of overcoming this drawback were introduced in the mathematical literature~\cite{FermoTosin13,klar1997Enskog}. This effect could be linked to a fixed choice of $\epsilon$. Instead, in analogy with the Knudsen number in gas models, $\epsilon$ should be a decreasing function of the density. In this way, $\epsilon$ is large in the free flow phase since the interactions are less frequent and the convective term rules the dynamics. On the contrary, since $\epsilon$ decreases when density increases, i.e. interactions among vehicles are dominant, we expect that relaxation is faster when $\rho$ is high. It is clear from the fundamental diagrams that, in the regimes in which $f$ is close to the local equilibrium $M_f$, signals move backward in congested phases. See also the left panel of Figure~\ref{fig:propertiesFD} where we show the characteristic speeds of the kinetic model when $f$ is close to equilibrium for several values of the speed number $N$. In other words we claim that the regimes described in Table~\ref{tab:epsilon} depend on the phase of traffic.
\end{remark}

\paragraph{Numerical scheme.} We briefly describe the numerical discretization of BGK using a first-order implicit-explicit (IMEX) scheme in time~\cite{PareschiRusso2005,PieracciniPuppo2006} by treating the convective term explicitly and the collision term implicitly. This allows to treat the stiffness for small values of $\epsilon$. Moreover, observe that, due to the linearity of the source term with respect to $f$, the scheme remains still explicit.

Let us consider the computational domain $(t,x,v)\in[0,T_M]\times[0,L]\times [0,1]$ where $T_M$ is the final time and introduce a mesh $I_{ij} = [x_{i-\frac12},x_{i+\frac12}]\times \{v_j\}$, where $x_i = (i-\frac12)\delta x$, $i=1,\dots,N_x$ and $v_j=(j-1)\delta v$, $j=1,\dots,N_v$. Here, the numerical parameter $\delta v$ is a sub-multiple of the physical parameter $\Delta v$ defined in Section~\ref{sec:homogeneous}. In this way, $N$, i.e. the number of discrete speeds defining the equilibrium distribution $M_f$, see Theorem~\ref{th:continuousEq}, is a sub-multiple of $N_v$. Although in the transient speeds that are not spaced by $\Delta v$ give a contribution to the dynamics, while at equilibrium only the $N$ speeds spaced by $\Delta v$ give a non-zero contribution, here we always consider the simple case with $\delta v=\Delta v$.
For piece-wise constant approximation of cell-averages in space we have  $f_{ij}^n \approx f(n\delta t,x_i,v_j)$ and obtain
\begin{subequations} \label{eq:fullScheme1}
	\begin{align}
	f_{ij}^{(1)} &= \frac{\epsilon}{\epsilon+\delta t} f_{ij}^n + \frac{\delta t}{\epsilon+\delta t}  M_f(v_j;\rho_i^n) \label{eq:stageValue}\\
	f_{ij}^{n+1} &= \frac{\epsilon}{\epsilon+\delta t} f_{ij}^n - \frac{\delta t}{\delta x} \left(\mathcal{F}\left(f_{i+1,j}^{(1)},f_{ij}^{(1)}\right)-\mathcal{F}\left(f_{ij}^{(1)},f_{i-1,j}^{(1)}\right)\right) + \frac{\delta t}{\epsilon+\delta t} M_f(v_j;\rho_i^n) \label{eq:stepBGK1}
	\end{align}
\end{subequations}
for $i=1,\dots,N_x$, $j=1,\dots,N_v$ and $n\geq 0$. Observe that the explicit knowledge of the equilibrium distribution $M_f$, given by Theorem~\ref{th:continuousEq}, is crucial for the performance of the numerical scheme. The quantity $\mathcal{F}$ can be one of the known Lipschitz continuous, conservative and consistent numerical flux function. The  asymptotic preserving~\cite{FilbetJin2010,PareschiRusso2011} result holds for the scheme~\eqref{eq:fullScheme1}.
\begin{proposition}[Asymptotic preserving] \label{th:AP}
	Scheme~\eqref{eq:fullScheme1} is asymptotic preserving in the sense that it solves a global first-order discretization of the macroscopic equation~\eqref{eq:modelLWR} with the flux function $Q_\text{eq}(\rho)$ given by~\eqref{eq:macroQuantitiesEq} when $\epsilon \to 0$.
\end{proposition}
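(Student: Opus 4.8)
The plan is to pass to the limit $\epsilon \to 0$ directly in the two-stage update~\eqref{eq:fullScheme1} and then take the discrete velocity moment, using the conservation properties of the equilibrium distribution $M_f$ guaranteed by Theorem~\ref{th:continuousEq}. First I would examine the scalar coefficients $\frac{\epsilon}{\epsilon+\delta t}$ and $\frac{\delta t}{\epsilon+\delta t}$ for fixed $\delta t>0$: as $\epsilon\to 0$ the former tends to $0$ and the latter to $1$. Substituting into the predictor~\eqref{eq:stageValue} shows that the intermediate state relaxes exactly onto the Maxwellian, $f_{ij}^{(1)} \to M_f(v_j;\rho_i^n)$, so the non-equilibrium part of $f_{ij}^n$ is erased in the first stage. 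Inserting the same limits into~\eqref{eq:stepBGK1} makes the $f_{ij}^n$ term disappear and leaves the explicit step $f_{ij}^{n+1} = M_f(v_j;\rho_i^n) - \frac{\delta t}{\delta x}\big(\mathcal{F}(M_f(v_j;\rho_{i+1}^n),M_f(v_j;\rho_i^n)) - \mathcal{F}(M_f(v_j;\rho_i^n),M_f(v_j;\rho_{i-1}^n))\big)$, whose right-hand side depends on the data at level $n$ only through the local densities.

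Next I would take the discrete zeroth moment in $v$, i.e. the quadrature defining $\rho_i^n$ from $f_{ij}^n$. Because $M_f$ is supported on the grid speeds $v_j=v_1+j\Delta v$ and satisfies $\sum_j f_j^\infty(\rho)=\rho$ by Theorem~\ref{th:continuousEq}, the moment of the relaxation term returns $\rho_i^n$ exactly. Introducing the induced macroscopic numerical flux $\mathcal{G}(a,b) := \sum_j \mathcal{F}(M_f(v_j;a),M_f(v_j;b))$, the moment of the transport term collapses into a flux difference of $\mathcal{G}$, so that the kinetic recursion closes into the scalar, conservative finite-volume step $\rho_i^{n+1} = \rho_i^n - \frac{\delta t}{\delta x}\big(\mathcal{G}(\rho_{i+1}^n,\rho_i^n) - \mathcal{G}(\rho_i^n,\rho_{i-1}^n)\big)$, which is nothing but explicit Euler in time.

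Finally I would verify that $\mathcal{G}$ is an admissible numerical flux for~\eqref{eq:modelLWR}. Consistency follows from the consistency of $\mathcal{F}$ with the kinetic flux $v_j f$ together with the first-moment identity of Theorem~\ref{th:continuousEq}, namely $\mathcal{G}(\rho,\rho) = \sum_j \mathcal{F}(M_f(v_j;\rho),M_f(v_j;\rho)) = \sum_j v_j M_f(v_j;\rho) = Q_\text{eq}(\rho)$ by~\eqref{eq:macroQuantitiesEq}. Lipschitz continuity and conservation are inherited from the corresponding properties assumed on $\mathcal{F}$ and from the (Lipschitz) dependence of $M_f$ on $\rho$; combined with the first-order explicit-Euler time stepping this yields a global first-order discretization of~\eqref{eq:modelLWR}, as claimed.

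I expect the main obstacle to be precisely this last step. The algebraic limit and the moment computation are essentially routine once the conservation identities of Theorem~\ref{th:continuousEq} are invoked, but one must argue carefully that the velocity-moment of the chosen kinetic flux $\mathcal{F}$ (upwind, Lax--Friedrichs, etc.) genuinely produces a \emph{consistent} flux for the nonlinear equilibrium flux $Q_\text{eq}$, and that the closure in $\rho$ alone is authentic, i.e. that no higher moment of $f$ survives in the $\epsilon\to 0$ limit. The latter is exactly what the vanishing of the $\frac{\epsilon}{\epsilon+\delta t}$ coefficient ensures, and it is the structural reason the IMEX scheme~\eqref{eq:fullScheme1} is asymptotic preserving rather than merely consistent for $\epsilon>0$.
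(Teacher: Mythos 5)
Your proposal is correct and follows essentially the same route as the paper's proof: pass to the limit $\epsilon \to 0$ in both stages of the scheme, so that $f_{ij}^{(1)} \to M_f(v_j;\rho_i^n)$ and the update becomes a purely explicit step on Maxwellians, and then take the zeroth discrete moment to close the recursion into a conservative, consistent finite-volume scheme for~\eqref{eq:modelLWR}. The only difference is one of care rather than substance: where the paper silently reuses the symbol $\mathcal{F}$ for the macroscopic flux, you make the induced flux $\mathcal{G}(a,b)=\sum_j \mathcal{F}\bigl(M_f(v_j;a),M_f(v_j;b)\bigr)$ explicit and verify its consistency $\mathcal{G}(\rho,\rho)=Q_\text{eq}(\rho)$ via Theorem~\ref{th:continuousEq}, which sharpens the paper's ``and hence'' step.
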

\begin{proof}
	By equation~\eqref{eq:stageValue}, we have that $f_{ij}^{(1)} \to M_f(v_j;\rho_i^n)$, if $\epsilon \to 0$. Substituting $f_{ij}^{(1)}$ in~\eqref{eq:stepBGK1} then, for $\epsilon \to 0$, we find
	$$
		f_{ij}^{n+1} = M_f(v_j;\rho_i^n) - \frac{\delta t}{\delta x} \left(\mathcal{F}\left(M_f(v_j;\rho_{i+1}^n),M_f(v_j;\rho_i^n)\right)-\mathcal{F}\left(M_f(v_j;\rho_i^n),M_f(v_j;\rho_{i-1}^n)\right)\right),
	$$
	and hence 
	$$
		\rho_i^{n+1} = \rho_i^n - \frac{\delta t}{\delta x} \left(\mathcal{F}\left(\rho_{i+1}^n,\rho_i^n\right)-\mathcal{F}\left(\rho_i^n,\rho_{i-1}^n\right)\right). 
	$$
This scheme  is an explicit conservative, consistent and stable discretization of the macroscopic equation~\eqref{eq:modelLWR} with the flux function $Q_\text{eq}(\rho)$ given by~\eqref{eq:macroQuantitiesEq}, provided $\mathcal{F}(\cdot,\cdot)$ is a conservative, consistent and stable numerical flux function for the non-linear conservation law~\eqref{eq:modelLWR}.
\end{proof}

Hence, we choose the time-step $\delta t$ using the CFL condition applied to the non-linear conservation law~\eqref{eq:modelLWR} and in view of the nonlinear transport on macroscopic level we employ 
%We point-out that, since the characteristic speeds of~\eqref{eq:BGK} are only positive, in principle the numerical flux function $\mathcal{F}$ can be chosen as the Upwind flux. But, in view of Proposition~\ref{th:AP}, then~\eqref{eq:fullScheme1} would not be asymptotic preserving since the Upwind flux does not provide a stable numerical flux function for the non-linear conservation law~\eqref{eq:modelLWR} which involves also negative characteristics (see the left panel in Figure~\ref{fig:propertiesFD}). In the following, we employ the 
the Lax-Friedrichs flux:
$$
	\mathcal{F}\left(f_{i+1,j}^{(1)},f_{ij}^{(1)}\right) = \frac12\left(v_jf_{i+1,j}^{(1)}+v_jf_{ij}^{(1)} -\alpha\left(f_{i+1,j}^{(1)}-f_{i+1,j}^{(1)}\right)\right).
$$

\paragraph{Non-equilibrium effects.}

We show numerically that the multivaluedness in fundamental diagrams can be reproduced as a result of non-equilibrium effects.

\begin{figure}[t!]
	\centering
	\includegraphics[width=0.49\textwidth]{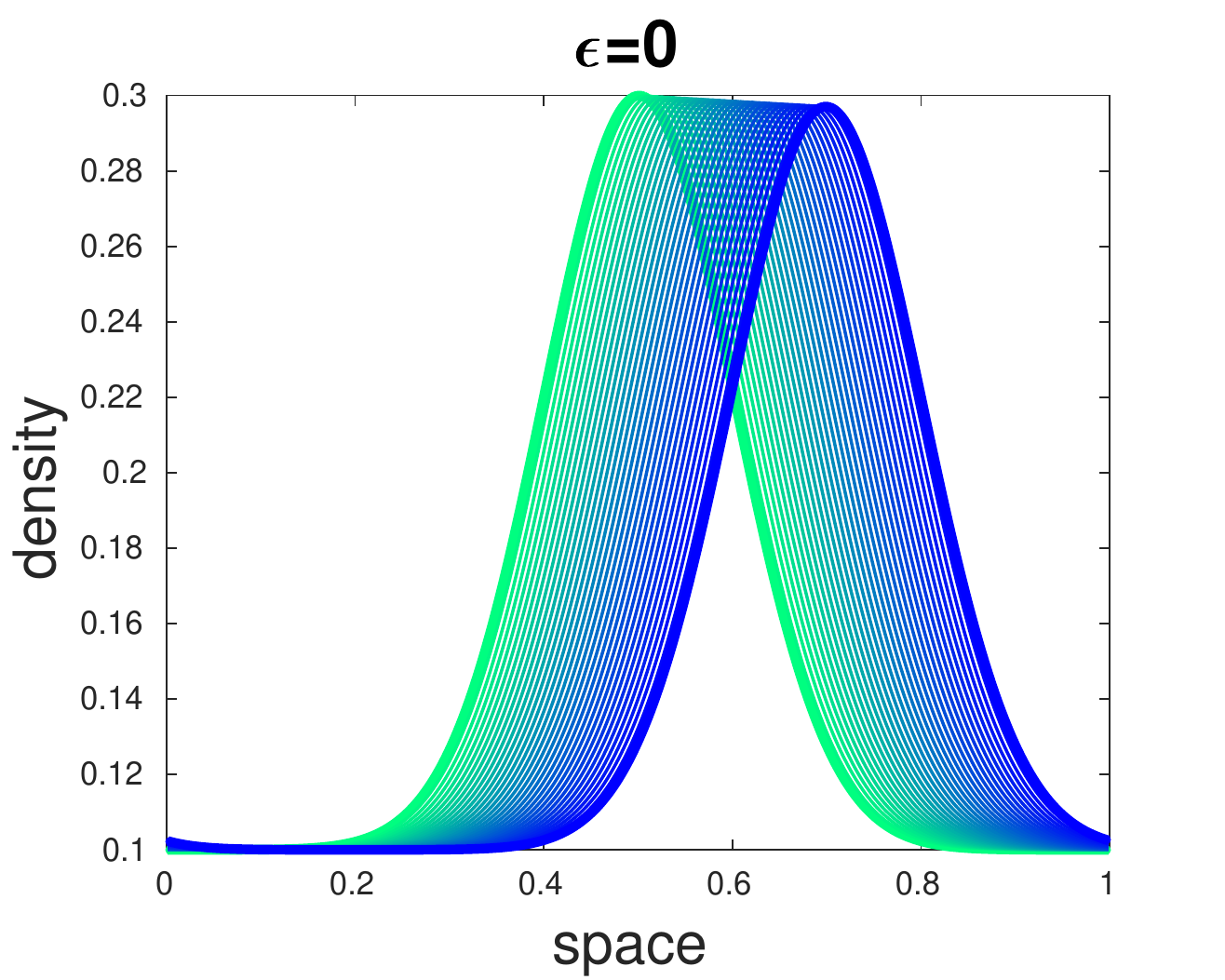}
	\includegraphics[width=0.49\textwidth]{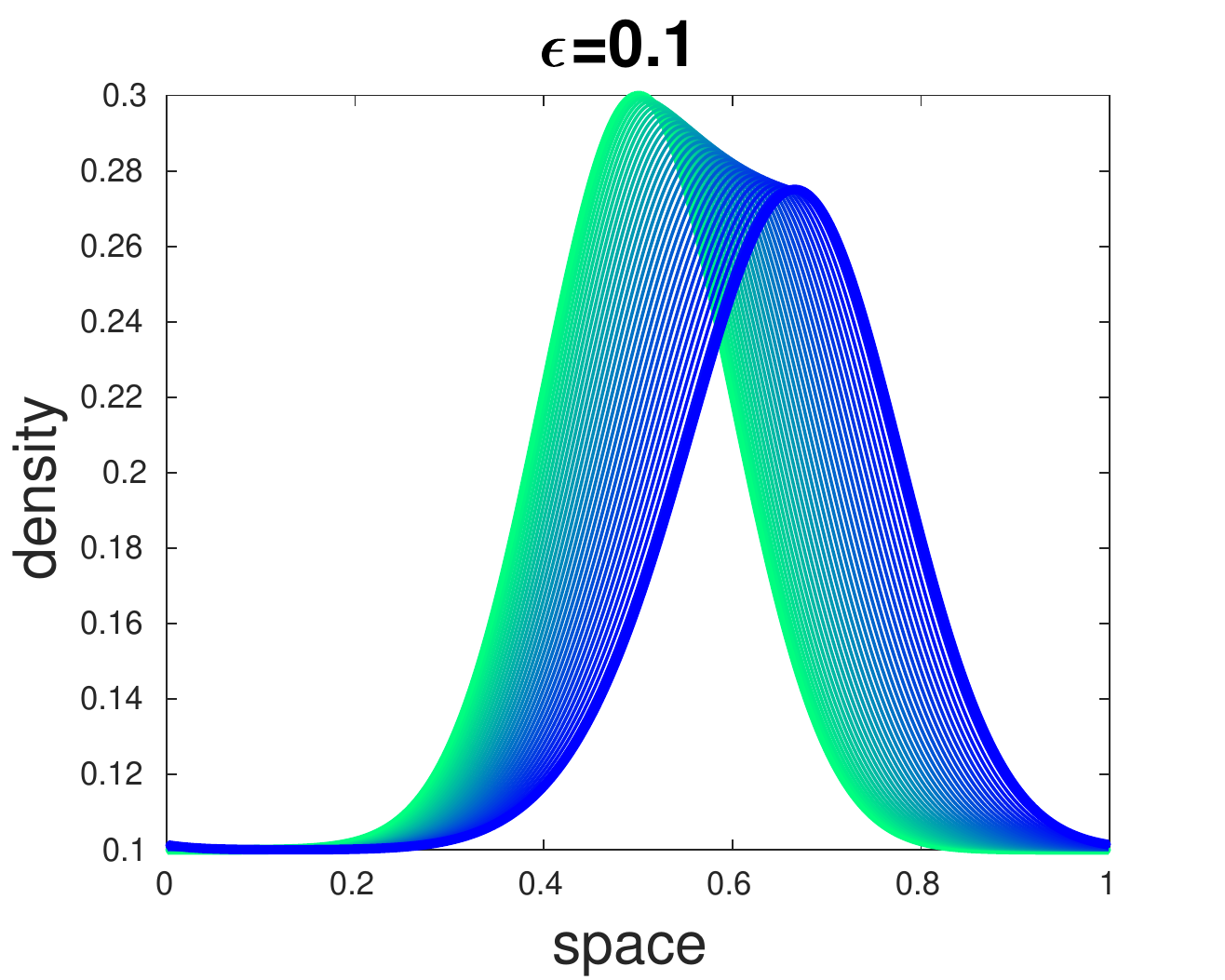}
	\\
	\includegraphics[width=0.49\textwidth]{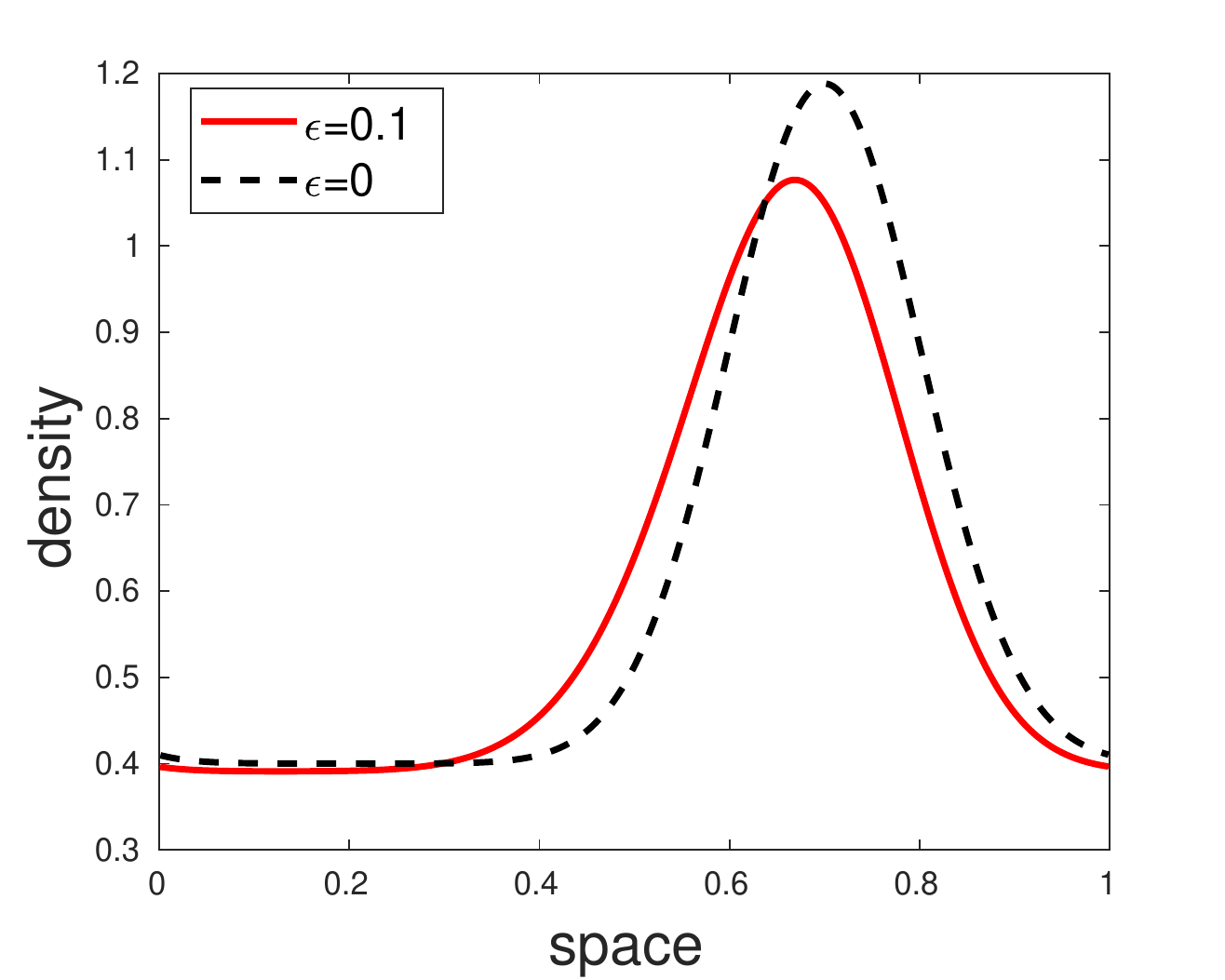}
	\includegraphics[width=0.49\textwidth]{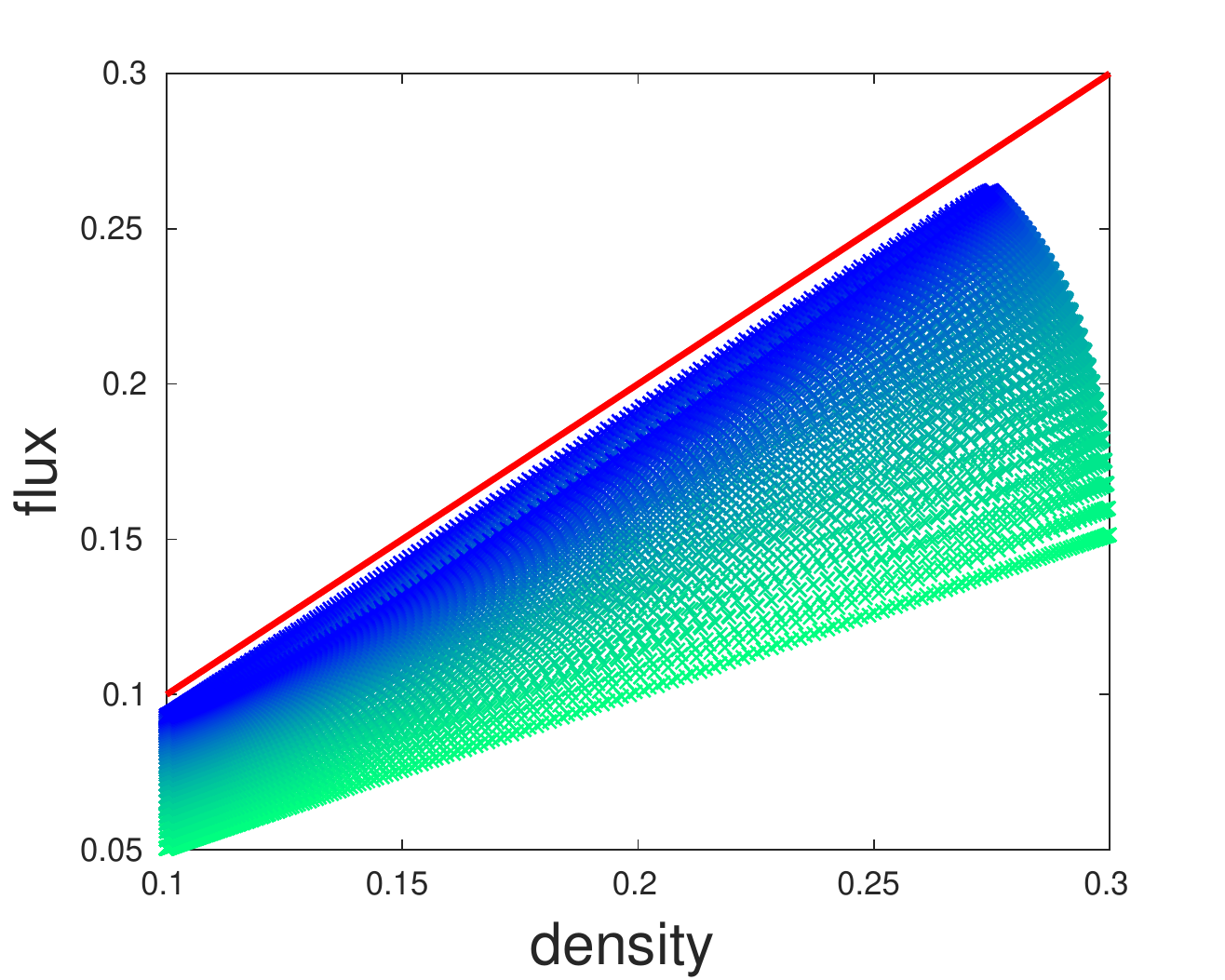}
	\caption{Propagation of a low density perturbation up to the final time $T_M=0.2$. Top left: solution at equilibrium with $\epsilon=0$. Top right: solution with $\epsilon=10^{-1}$. Bottom left: comparison of the two solutions at final time. Bottom right: flux-diagram during the evolution of the BGK model with $\epsilon=10^{-1}$ and at equilibrium (red line).\label{fig:noneqfree}}
\end{figure}

\begin{figure}[t!]
	\centering
	\includegraphics[width=0.49\textwidth]{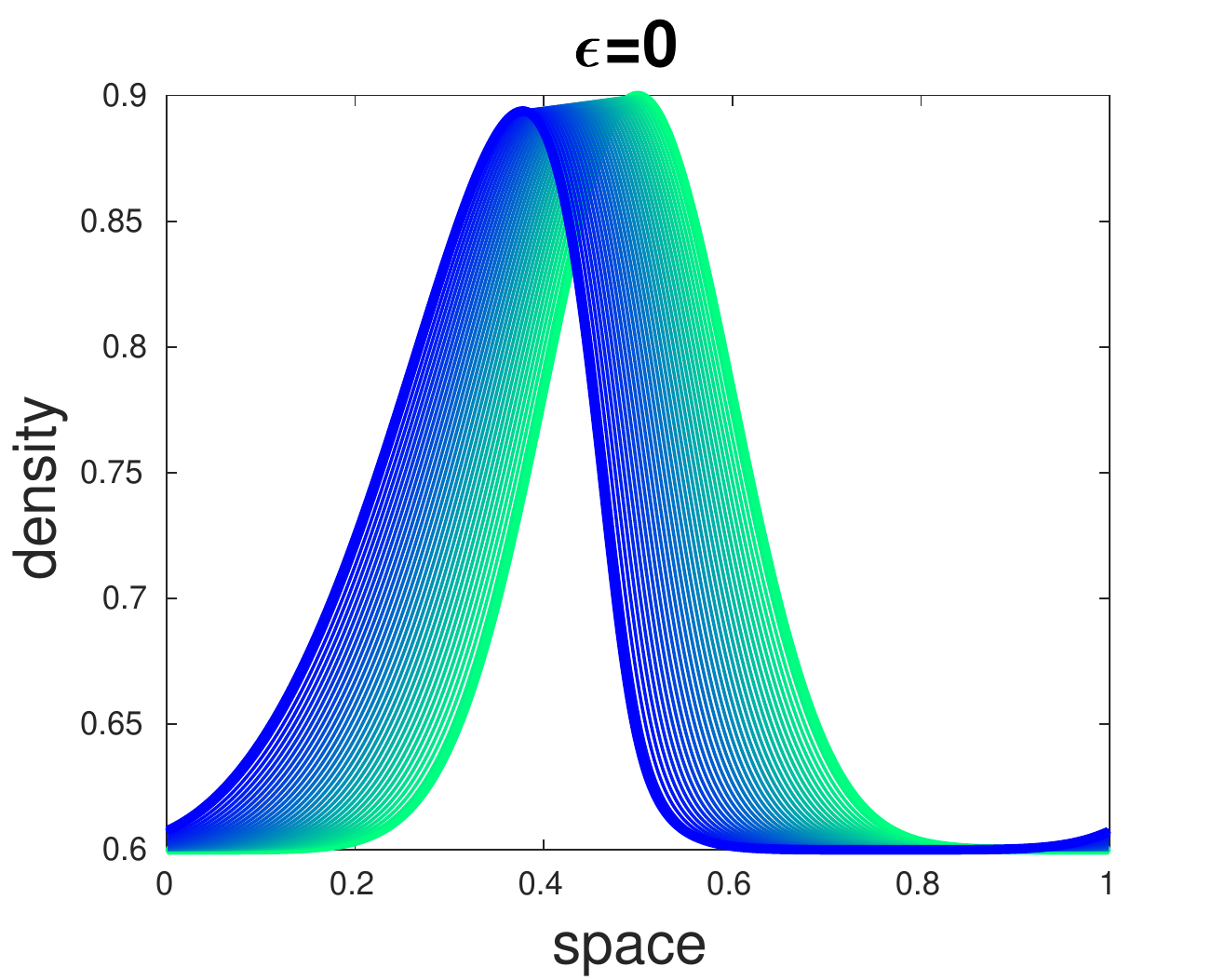}
	\includegraphics[width=0.49\textwidth]{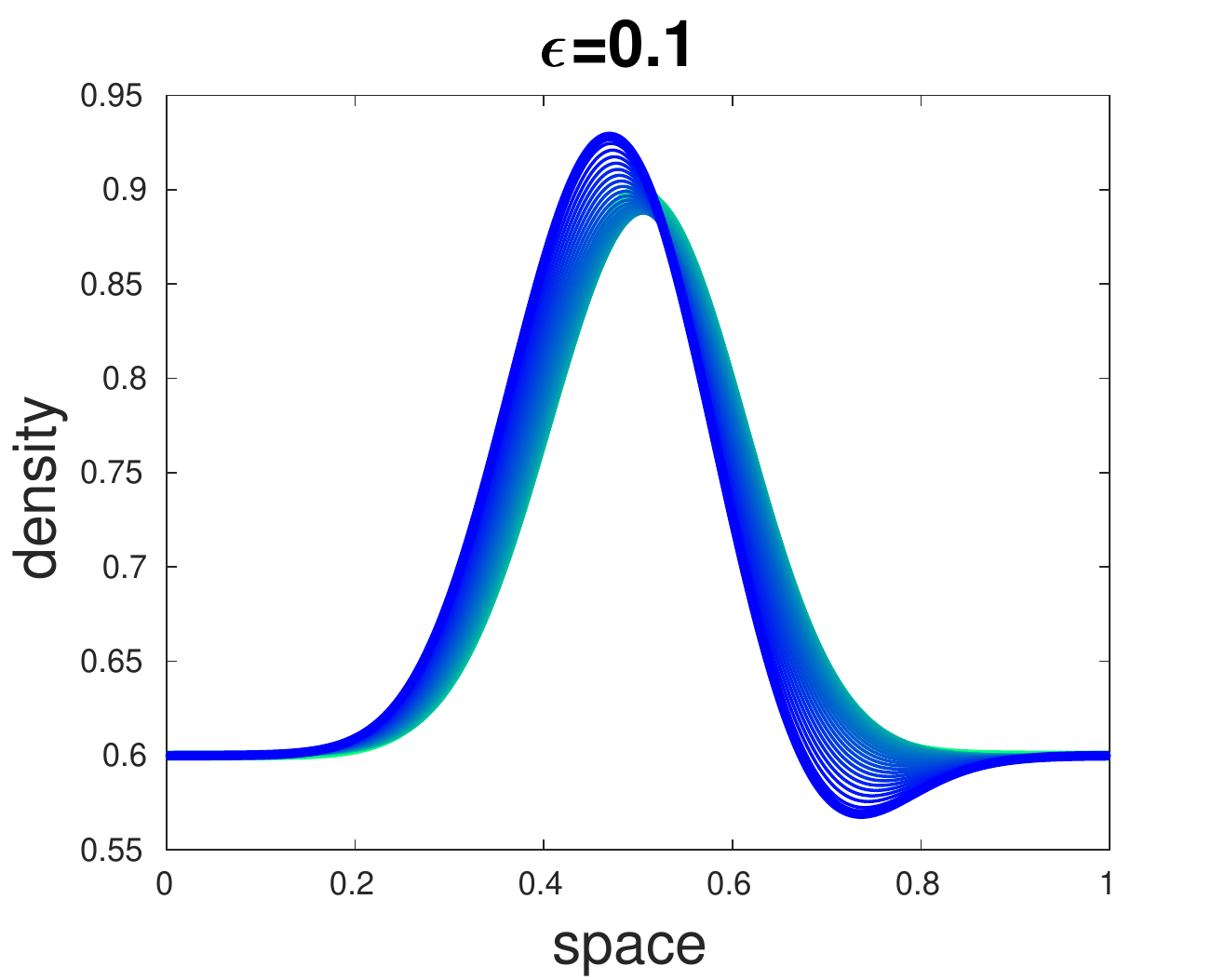}
	\\
	\includegraphics[width=0.49\textwidth]{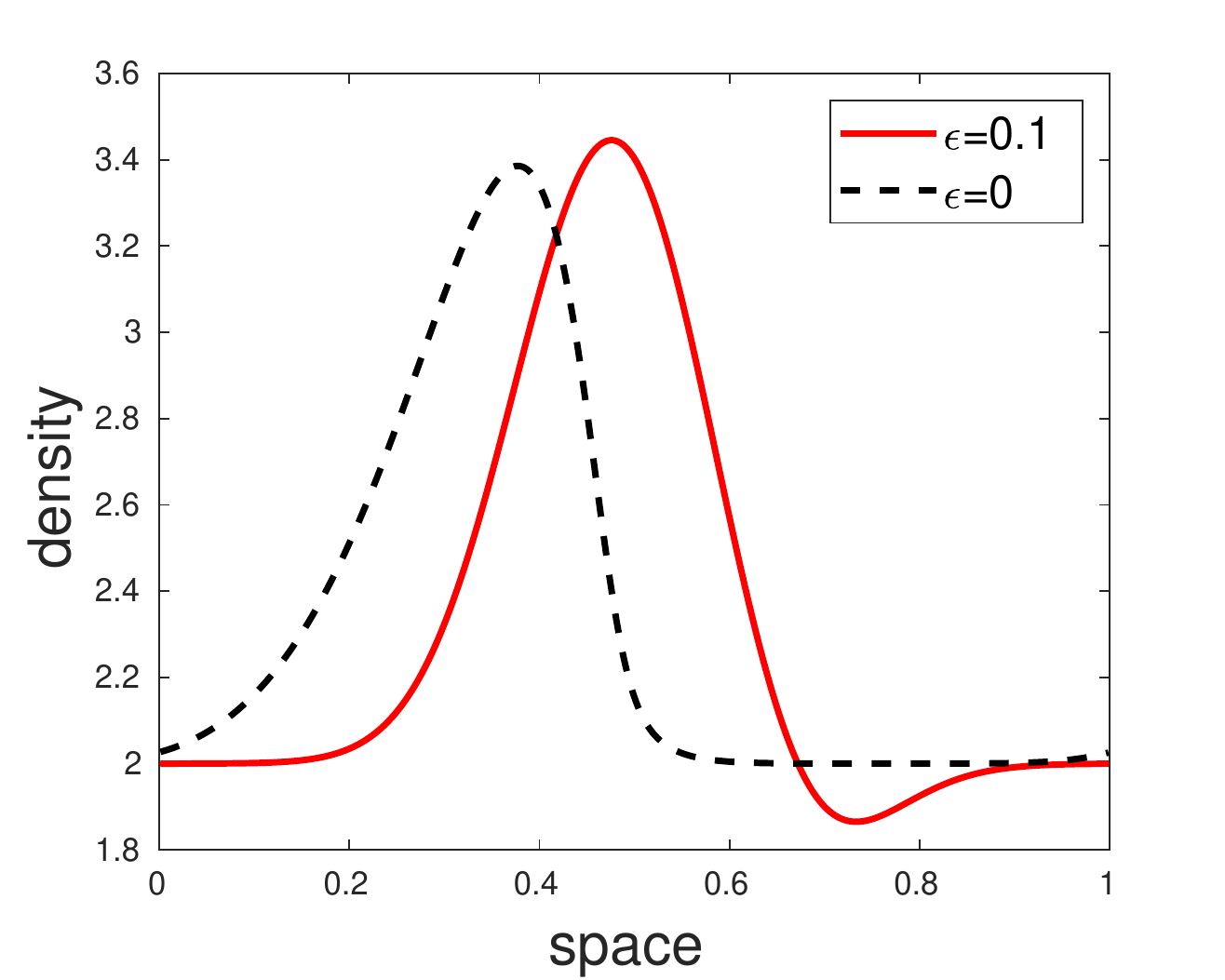}
	\includegraphics[width=0.49\textwidth]{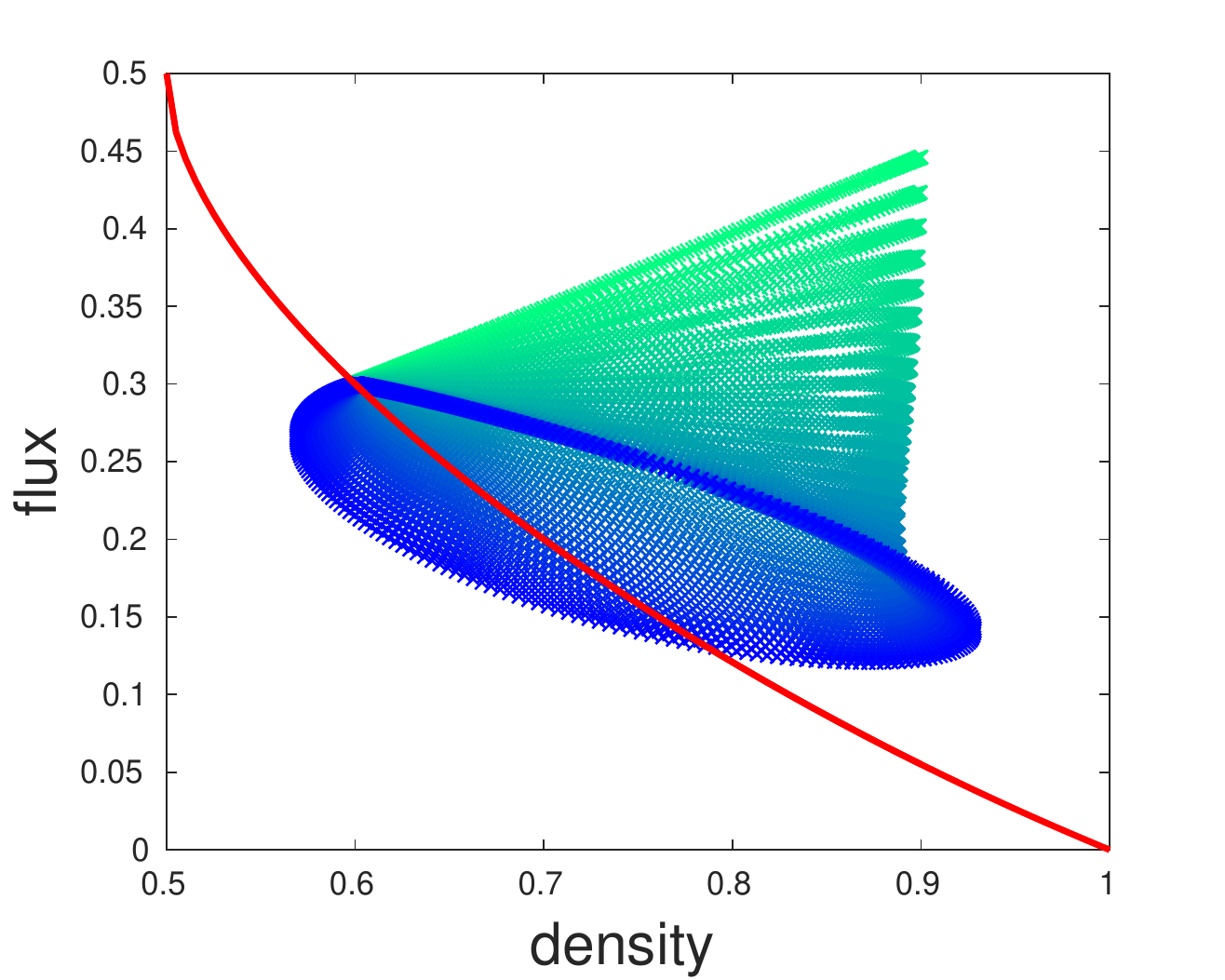}
	\caption{Propagation of a high density perturbation up to the final time $T_M=0.2$. Top left: solution at equilibrium with $\epsilon=0$. Top right: solution with $\epsilon=10^{-1}$. Bottom left: comparison of the two solutions at final time. Bottom right: flux-diagram during the evolution of the BGK model with $\epsilon=10^{-1}$ and at equilibrium (red line).\label{fig:noneqcongested}}
\end{figure}

\begin{figure}[t!]
	\centering
	\includegraphics[width=0.49\textwidth]{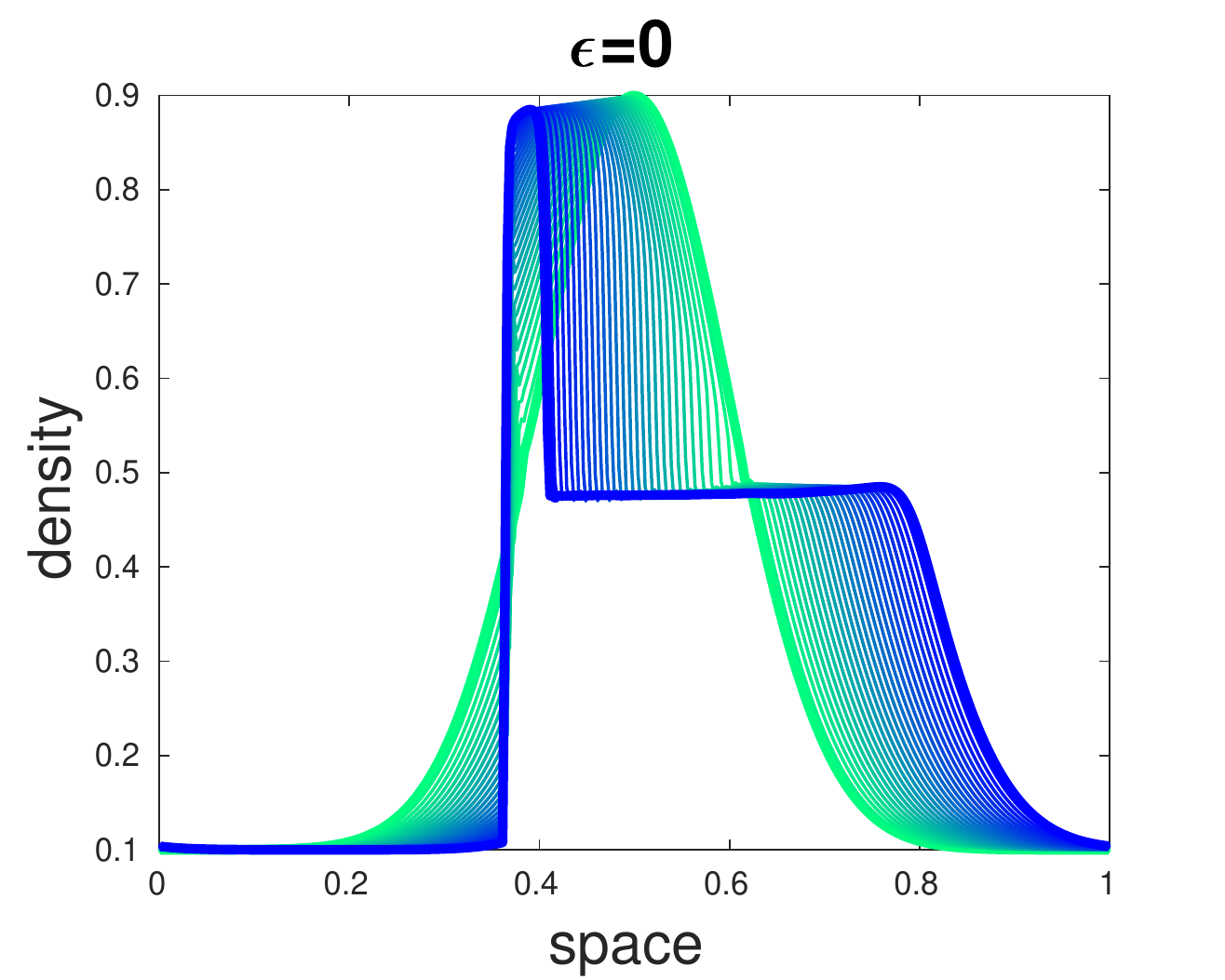}
	\includegraphics[width=0.49\textwidth]{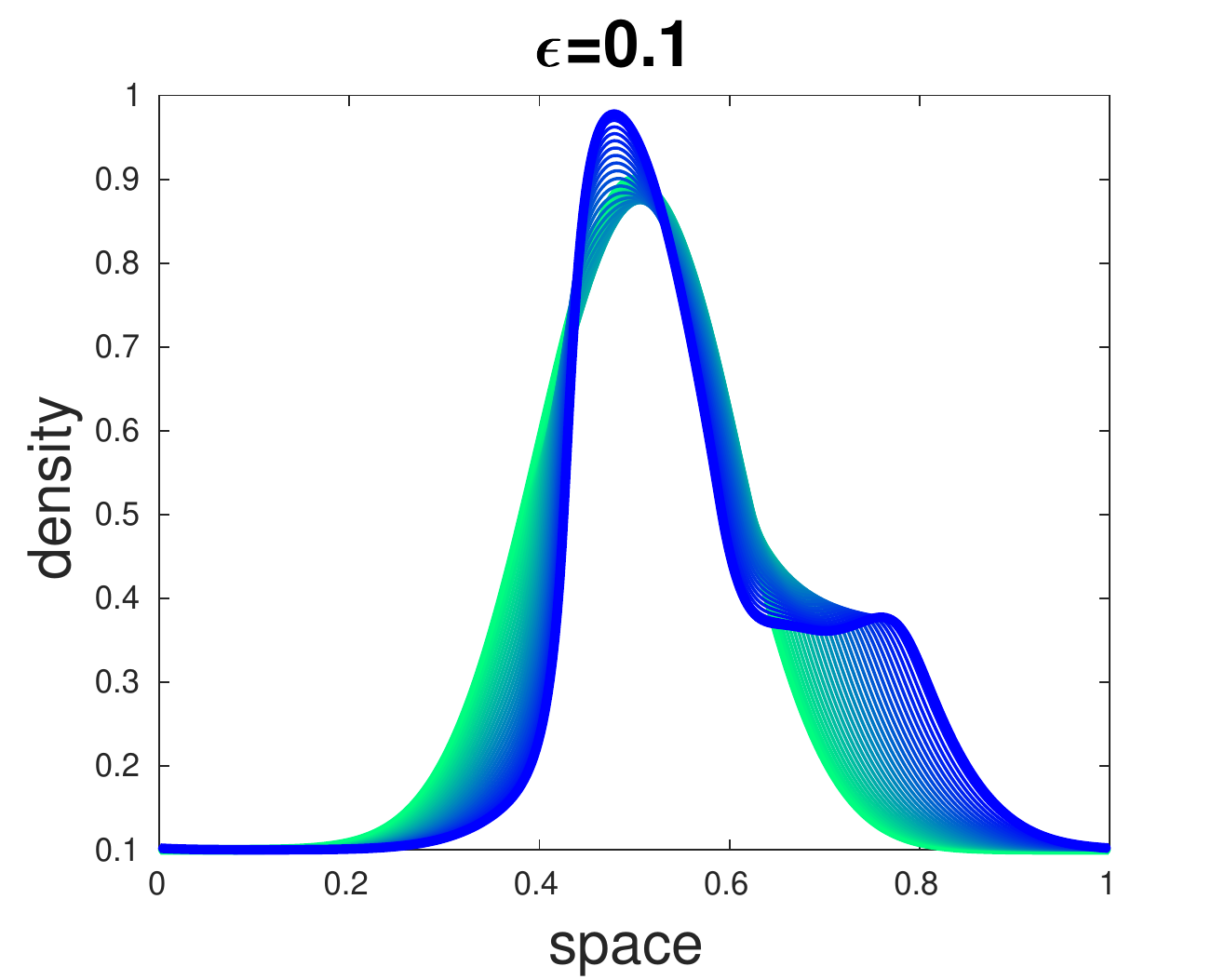}
	\\
	\includegraphics[width=0.49\textwidth]{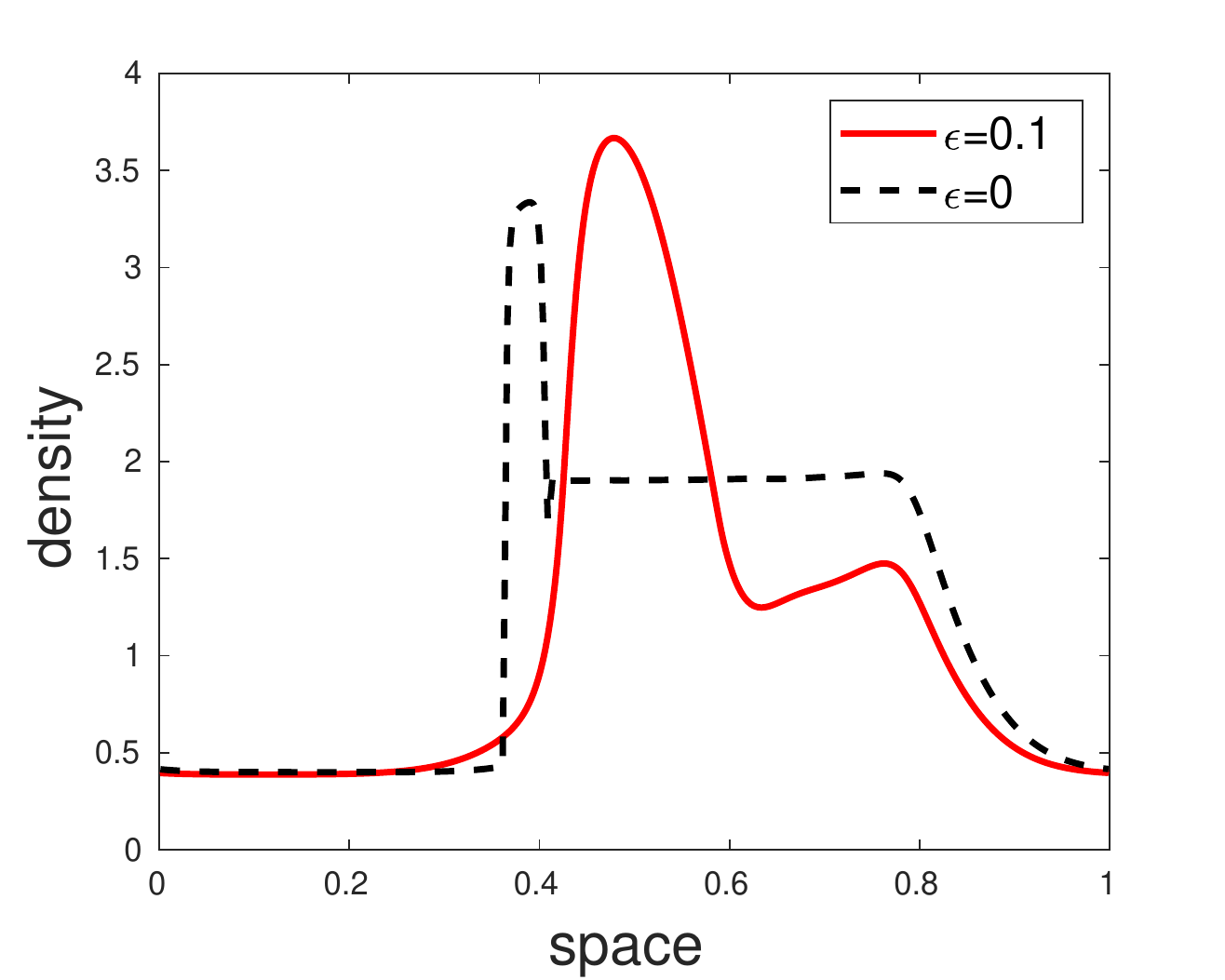}
	\includegraphics[width=0.49\textwidth]{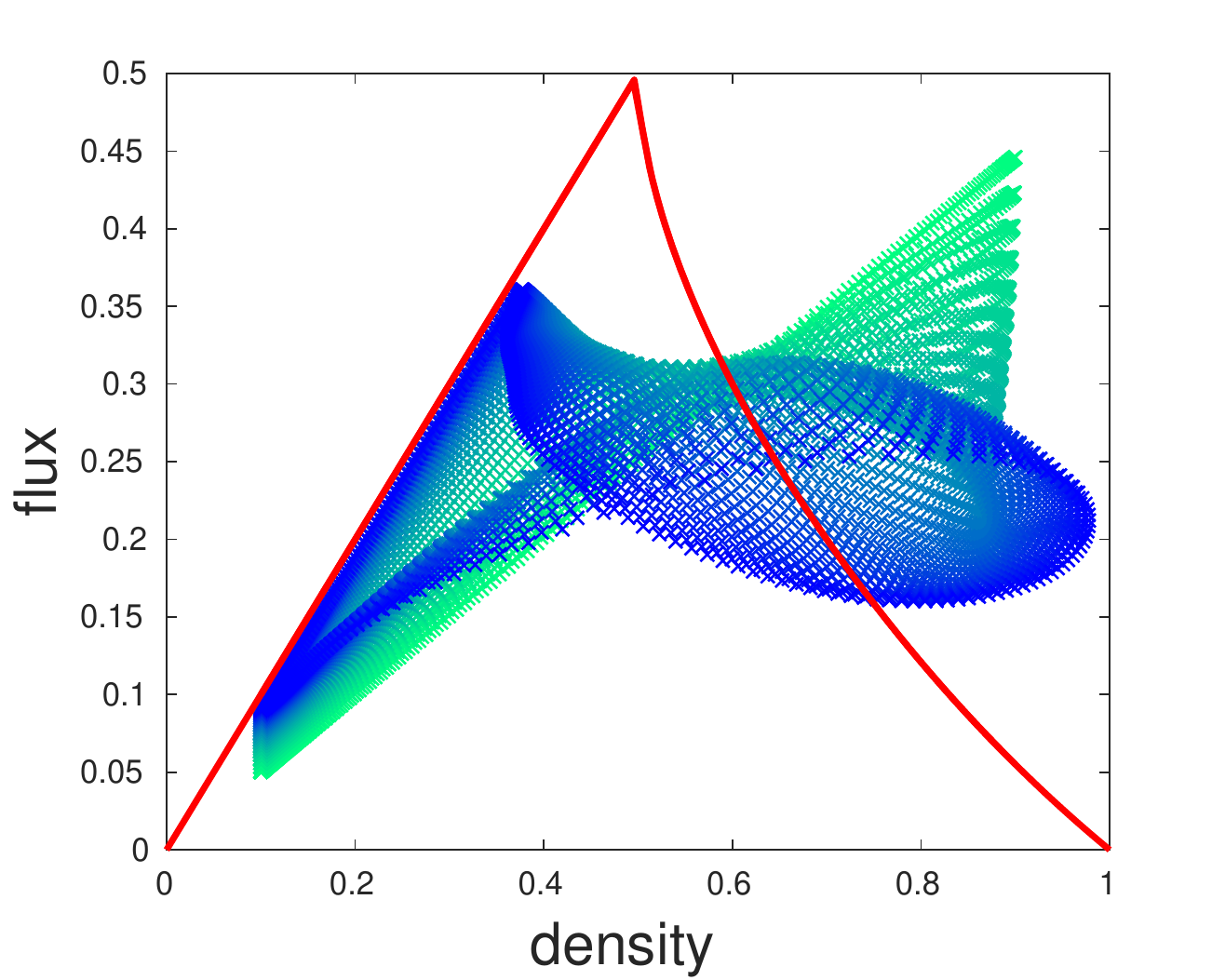}
	\caption{Propagation of a large perturbation up to the final time $T_M=0.2$. Top left: solution at equilibrium with $\epsilon=0$. Top right: solution with $\epsilon=10^{-1}$. Bottom left: comparison of the two solutions at final time. Bottom right: flux-diagram during the evolution of the BGK model with $\epsilon=10^{-1}$ and at equilibrium (red line).\label{fig:noneqtotal}}
\end{figure}

The top rows of Figure~\ref{fig:noneqfree}, \ref{fig:noneqcongested} and \ref{fig:noneqtotal} contain the solution obtained with $\epsilon=0$, i.e. in the equilibrium regime, on the left, and the solution found with $\epsilon=10^{-1}$ on the right. The solution is shown in several instants of time, starting from the green curve and ending with the blue thick profile. The final solutions are compared in the bottom left part of the figure: the profile with $\epsilon=10^{-1}$ is depicted with the continuous red line, and the profile with $\epsilon=0$ with the dashed black line.

The dots in the right bottom panel contain the evolution of the flux $\rho u$ defined in~\eqref{eq:macroQuantities} which is drawn as a function of $\rho$. Note that, at equilibrium, the fundamental diagram $Q_\text{eq}(\rho)$ is a function of $\rho$. When the system is not in equilibrium, the flux $\rho u$ depends not only on $\rho$, but also on the distribution $f$. 
Again, the green dots are obtained in the initial stages of the simulation, and they fade into blue as the solution evolves. If the flow were in equilibrium, as in the case $\epsilon=0$, then $f$  coincides with the equilibrium distribution $M_f$ (uniquely determined by $\rho$). Thus the dispersion in these plots is a measure of deviations from equilibrium. Hence, the dispersion of the points in blue may reflect the multivaluedness of the fundamental diagram

In all simulations we consider $N=N_v=3$ speeds and so the physical acceleration is $\Delta v=\frac12$. The spatial domain is $[0,1]$ and discretized by $N_x=400$ cells. The final time is $T_M=0.2$. We consider an initial smooth perturbation in the density
\begin{equation} \label{eq:densityBump}
	\rho_0(x) = \rho_{\min} + \left(\rho_{\max}-\rho_{\min}\right) \exp\left(-50(x-0.5)^2\right).
\end{equation}
with periodic boundary conditions. The initial kinetic distribution $f(t=0,x,v)$ is the uniform distribution over the velocity space with density $\rho_0(x)$.

In Figure~\ref{fig:noneqfree}, $\rho_{\min}=0.1$ and $\rho_{\max}=0.3$. The perturbation in the density is below the critical density $\rho_\text{crit}=\frac12$. Thus here the whole solution moves towards the right for $\epsilon=0$ because in this regime the flux is linear with respect to the density. In the case $\epsilon=10^{-1}$ the solution also moves to the right but with very little deformation. As consequence, the flux diagram shows a very small dispersion, which means that the flow is almost at equilibrium.

If we choose $\rho_{\min}=0.6$ and $\rho_{\max}=0.9$, the initial perturbation occurs on a dense traffic flow, see Figure~\ref{fig:noneqcongested}. Now, the expected propagation speed in the fundamental diagram is negative, and in fact the solution for $\epsilon=0$ moves towards the left. The solution with $\epsilon=10^{-1}$ also moves towards the left, but new maxima and minima appear which means that the maximum principle is no longer satisfied. These are purely non-equilibrium effects, as shown also in the flux diagram. The equilibrium fundamental diagram is shown with the red line, and it can be seen bisecting the $\rho u$ values computed during the evolution of the case $\epsilon=10^{-1}$.

The last smooth test we show has a large initial pulse, with $\rho_{\min}=0.1$ and $\rho_{\max}=0.9$. The results are in Figure~\ref{fig:noneqtotal}. The low density part of the wave on the right accelerates and propagates towards the right. The high density part of the wave propagates backward, increasing the slope of the profile. Thus a shock forms, propagating towards the left. This structure appears quickly in the case $\epsilon=0$ while it is slower for $\epsilon=10^{-1}$ where however we see the appearance of a new maximum. Again the solution with $\epsilon=10^{-1}$ exhibits non equilibrium waves which are represented by the scattered values in the flux diagram.

\begin{remark}[Solutions of Riemann problems]
	As already observed, in the relaxation limit the full kinetic equation~\eqref{eq:generalKinetic} and the BGK approximation~\eqref{eq:BGK} solve the conservation law~\eqref{eq:modelLWR} with $Q_\text{eq}$ computed using the equilibrium distribution of the homogeneous kinetic model. We notice that the fundamental relation $Q_\text{eq}$ defined in~\eqref{eq:macroQuantitiesEq} has two main properties.
	\begin{enumerate}
		\item It is continuous: in fact, from Theorem~\ref{th:continuousEq} it is clear that
		\[
		\lim_{P \to \frac12^+} \int_{\mathcal{V}} v M_f(v;\rho) \mathrm{d}v = \lim_{P \to \frac12^-} \int_{\mathcal{V}} v M_f(v;\rho) \mathrm{d}v = \rho_\text{crit} V_M.
		\]
		Thus, there is no need to treat the macroscopic model~\eqref{eq:modelLWR} with special techniques for conservation laws with discontinuous fluxes.
		\item It is not convex: in fact, it is linear in the free flow regime (i.e. when $\rho\leq\rho_\text{crit}$) and it is convex in the congested regime (i.e. when $\rho>\rho_\text{crit}$). See the right panel in Figure~\ref{fig:propertiesFD} which shows the behavior of the second derivative of $Q_\text{eq}$ for different values of the speed number.
	\end{enumerate}
	
	The non-convexity of the flux-density diagram represents the most interesting property since it implies that the wave structure at equilibrium is non-trivial, because even a simple Riemann problem initial data will result in more than a single wave.
	%	In fact, when the flux is convex, the solution to a Riemann problem is always either a shock or a rarefaction wave. In contrast, when the flux is not convex, the entropy solution might involve both.
	%	
	%	The solution of a Riemann problem can be easily determined from the graph of the flux function by constructing the convex hull of the set
	%	\begin{equation} \label{C:eq:S-RL}
	%	\mathcal{S}_{RL} = \left\{(\rho,y)\in\mathbb{R}^2: \ \rho_R < \rho < \rho_L,\; y < \rho u(\rho)\right\},
	%	\end{equation}
	%	when the initial states are such that $\rho_R < \rho_L$. Otherwise, when $\rho_L < \rho_R$ the convex hull of the set
	%	\begin{equation} \label{C:eq:S-LR}
	%	\mathcal{S}_{LR}=\left\{(\rho,y)\in\mathbb{R}^2: \ \rho_L < \rho < \rho_R,\; y > \rho u(\rho)\right\}
	%	\end{equation}
	%	has to be considered. Then, shock waves correspond to the straight line portions of the boundary of the convex hulls which may not coincide with the flux function; while rarefaction waves and contact discontinuities correspond to the portions of the boundary of the convex hulls which coincide with the flux function.
	For a more
	%	rigorous and
	detailed discussion we refer to~\cite{BressanBook,LevequeBook}.
	
	%	To illustrate the solution of Riemann problems arising from equation~\eqref{eq:modelLWR} with the non-convex flux function $Q_\text{eq}$, we summarize all the possible cases in Figure~\ref{fig:RiemannProblemsA} and in Figure~\ref{fig:RiemannProblemsB}.
	
	%	\begin{figure}[t!]
	%		\centering
	%		\includegraphics[width=\textwidth]{}
	%		\caption{Solution of Riemann problems in the cases $0=\rho_R<\rho_L=\rho_M$ (left) and $0<\rho_L<\rho_\text{crit}<\rho_R<\rho_M$ (right). The shaded area represent the convex hulls of the sets $\mathcal{S}_{RL}$ and $\mathcal{S}_{LR}$, respectively\label{fig:RiemannProblemsA}}
	%	\end{figure}
	%	
	%	\begin{figure}[t!]
	%		\centering
	%		\begin{subfigure}[t]{\textwidth}
	%			\centering
	%			\includegraphics[width=\textwidth]{}
	%			\caption{$\rho_R<\rho_L<\rho_\text{crit}$ (left) and $\rho_L<\rho_R<\rho_\text{crit}$ (right).}
	%		\end{subfigure}\\
	%		\begin{subfigure}[t]{\textwidth}
	%			\centering
	%			\includegraphics[width=\textwidth]{}
	%			\caption{$\rho_\text{crit}<\rho_R<\rho_L$ (left) and $\rho_\text{crit}<\rho_L<\rho_R$ (right).}
	%		\end{subfigure}\\
	%		\begin{subfigure}[t]{\textwidth}
	%			\centering
	%			\includegraphics[width=\textwidth]{}
	%			\caption{$\rho_L<\rho_\text{crit}<\rho_R$ with $Q_\text{eq}(\rho_L)<Q_\text{eq}(\rho_R)$ (left) and $\rho_L<\rho_\text{crit}<\rho_R$ with $Q_\text{eq}(\rho_R)<Q_\text{eq}(\rho_L)$ (right).}
	%		\end{subfigure}
	%		\caption{Solutions of Riemann problems. The shaded area represent the convex hulls.\label{fig:RiemannProblemsB}}
	%	\end{figure}
\end{remark}

\paragraph{Unbounded instabilities.} 

The results of the previous simulations show that non-homogeneous models are important to describe non-equilibrium phenomena. Note that, in the case of the perturbation in dense traffic, the non-equilibrium effects provides new maxima and minima that apparently grow in time. These results suggest  that the appearance of stop and go waves on highways could be interpreted and reproduced as non-equilibrium effects.

\begin{figure}[t!]
	\centering
	\includegraphics[width=0.49\textwidth]{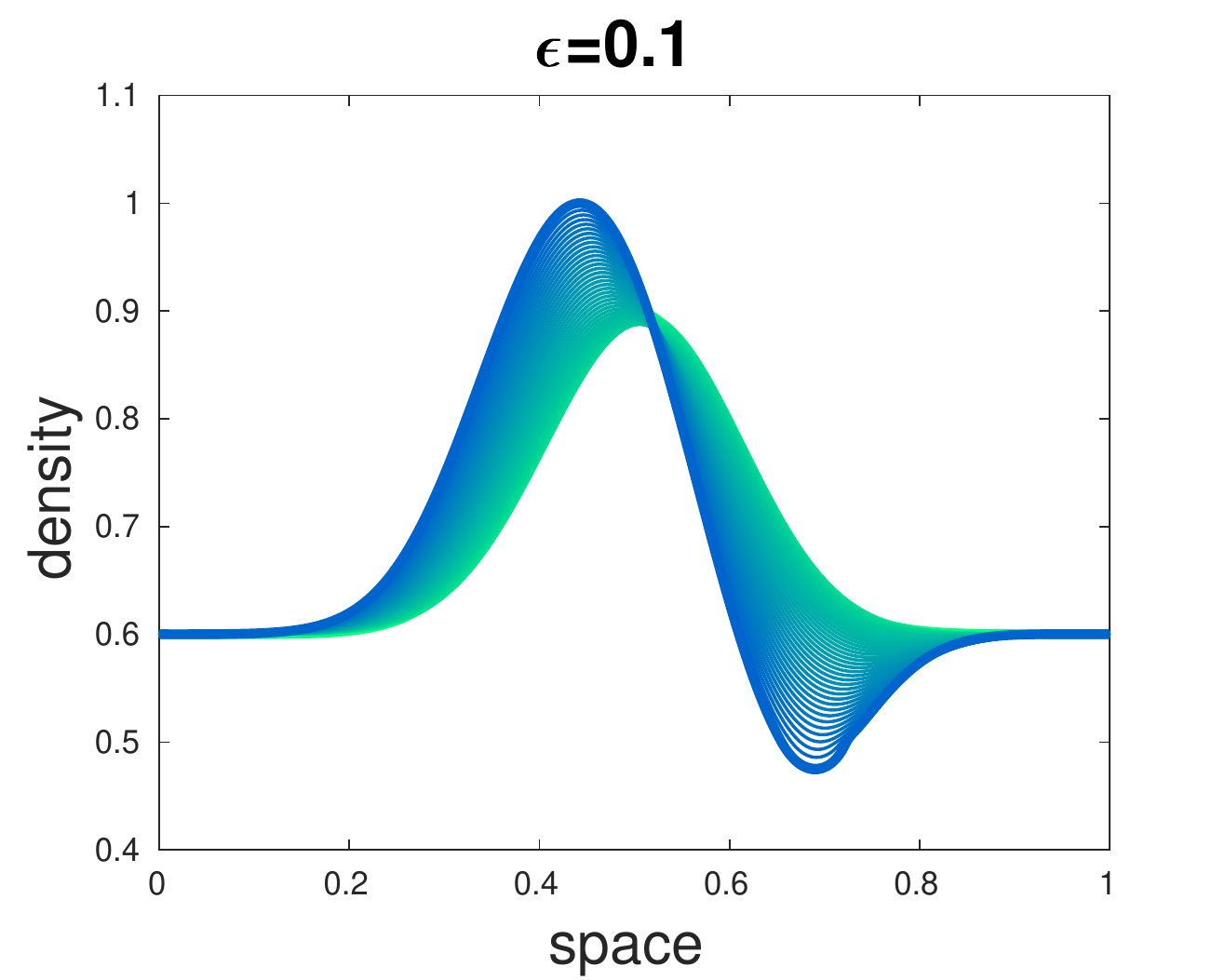}
	\includegraphics[width=0.49\textwidth]{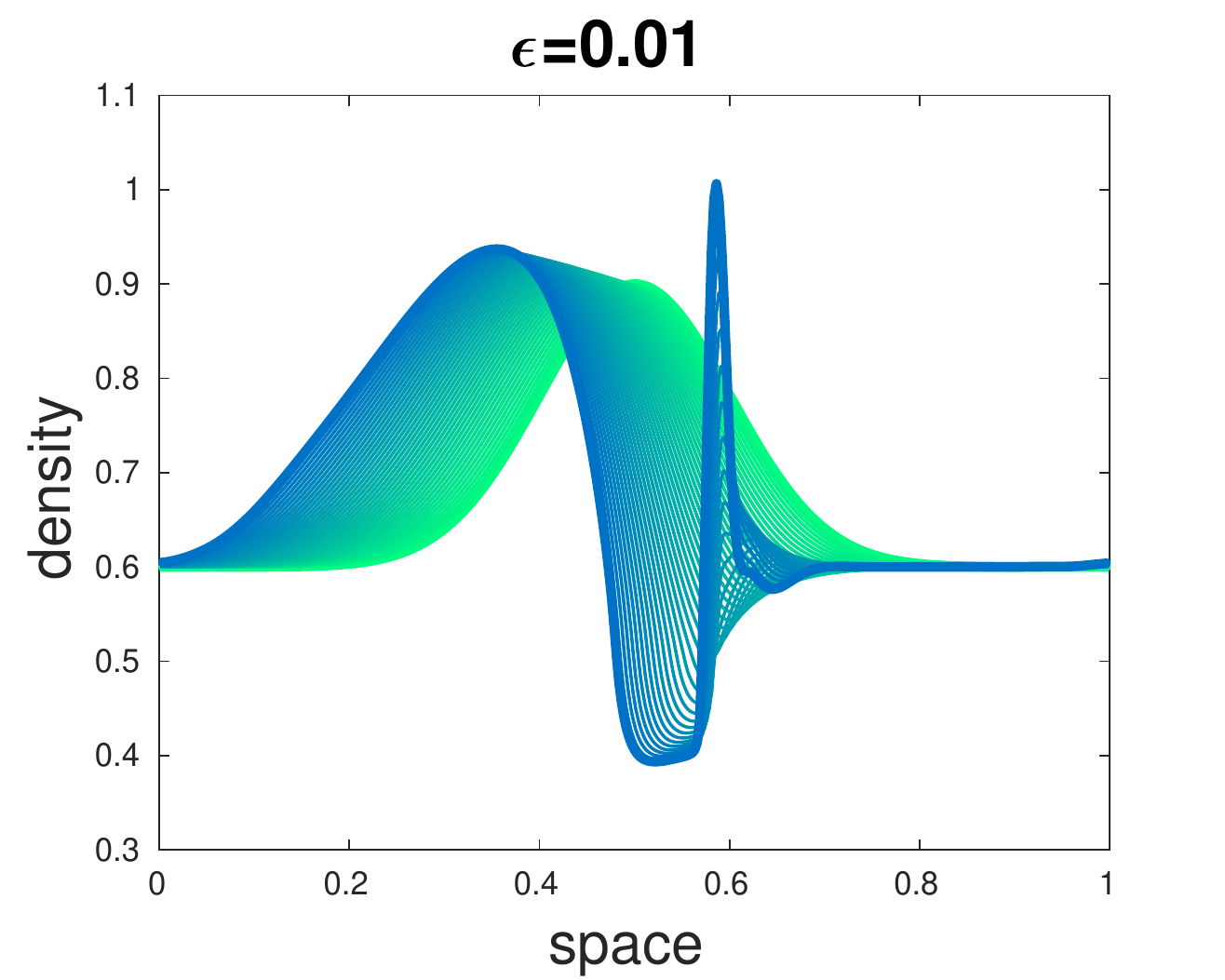}
	\caption{Propagation of a high density perturbation up to the time in which the density profile is bounded by the maximum value of the density. Left: solution with $\epsilon=10^{-1}$. Right: solution with $\epsilon=10^{-2}$.\label{fig:nonequnstable}}
\end{figure}

Here, we prove that the BGK model~\eqref{eq:BGK} has the drawback that it could lead to unbounded unstable waves in the congested regime of traffic. In fact, let us consider the same setting of the simulations provided in Figure~\ref{fig:noneqcongested}, namely a perturbation of the density in the congested regime. In Figure~\ref{fig:nonequnstable} we show the solutions for a larger final time with $\epsilon=10^{-1}$ in the left panel and $\epsilon=10^{-2}$ in the right panel. We stop the simulations as soon as  the density profile exceeds the maximum density. We observe that the BGK model propagates the initial profile backward producing unstable waves that do not remain bounded. This phenomenon will be  investigated below using the Chapman-Enskog expansion. 

Consider the Chapman-Enskog expansion of the BGK equation~\eqref{eq:BGK} as first order perturbation in $\epsilon$ of $f$ around the equilibrium distribution $M_f(v;\rho)$. This leads to the advection-diffusion equation
\begin{equation} \label{eq:advdiff}
	\partial_t \rho + \partial_x Q_{\text{eq}}(\rho) = \epsilon \partial_x (\mu(\rho) \rho_x),
\end{equation}
where the diffusion coefficient $\mu(\rho)$ is given by 
\begin{equation} \label{eq:diffBGK}
	\mu(\rho) = \int_V v^2 \partial_\rho M_f(v;\rho) \mathrm{d}v - \left(\int_V v \partial_\rho M_f(v;\rho) \mathrm{d}v\right)^2 = \int_V v^2 \partial_\rho M_f(v;\rho) \mathrm{d}v - Q_\text{eq}^\prime(\rho)^2.
\end{equation}
In fact, considering the expansion
$$
	f(t,x,v) =  M_f(v;\rho) + \epsilon f_1(t,x,v), \quad \text{with } \int_V f_1(t,x,v) \mathrm{d}v = 0,
$$
observing that
$$
	f_1(t,x,v) = -\partial_t M_f(v;\rho) - v \partial_x M_f(v;\rho) + O(\epsilon),
$$
plugging the approximation of $f$ into the BGK equation~\eqref{eq:BGK} and integrating with respect to the velocity we obtain
\begin{align*}
	\partial_t \rho + \partial_x Q_\text{eq}(\rho) &= - \epsilon \partial_x \int_V v f_1(t,x,v) \mathrm{d}v\\
	&= \epsilon \partial_x \int_V v (\partial_t M_f(v;\rho) + v \partial_x M_f(v;\rho)) \mathrm{d}v + O(\epsilon^2)\\
	&= \epsilon \partial_x \left[ \rho_t \int_V v \partial_\rho M_f(v;\rho) \mathrm{d}v + \rho_x \int_V v^2 \partial_\rho M_f(v;\rho) \right] + O(\epsilon^2)\\
	&= \epsilon \partial_x \left[ -\rho_x Q_\text{eq}^\prime(\rho) \int_V v \partial_\rho M_f(v;\rho) \mathrm{d}v + \rho_x \int_V v^2 \partial_\rho M_f(v;\rho) \right] + O(\epsilon^2).
\end{align*}

If $\mu(\rho)<0$ then the advection-diffusion equation is ill-posed and therefore has solutions with unbounded growth even starting from small perturbations. In the case of the kinetic model~\eqref{eq:BGK}, the sign of the diffusion coefficient depends on the equilibrium distribution $M_f$. The request $\mu(\rho) > 0$ is
\begin{equation} \label{eq:stabilityBGK}
	\partial_\rho \left( \int_V v^2 M_f(v;\rho) \mathrm{d}v \right) > Q_\text{eq}^\prime(\rho)^2
\end{equation}
and, since $Q_\text{eq}(\rho)$ is the fundamental diagram at equilibrium, this condition requires that the square of the characteristic velocities is bounded by the variation of the kinetic energy in each regime. The condition~\eqref{eq:stabilityBGK} depends only on the equilibrium distribution $M_f$ which is given by the homogeneous kinetic model and is the result of the modeling of the microscopic interactions. %This in turn means that condition~\eqref{eq:stabilityBGK} could be also not verified when using an equilibrium distribution providing a reasonable macroscopic description of traffic flow, i.e. a fundamental diagram reproducing the features of the experimental fundamental diagrams.

\begin{figure*}[t!]
	\centering
	\begin{subfigure}[t]{0.5\textwidth}
		\centering
		\includegraphics[width=\textwidth]{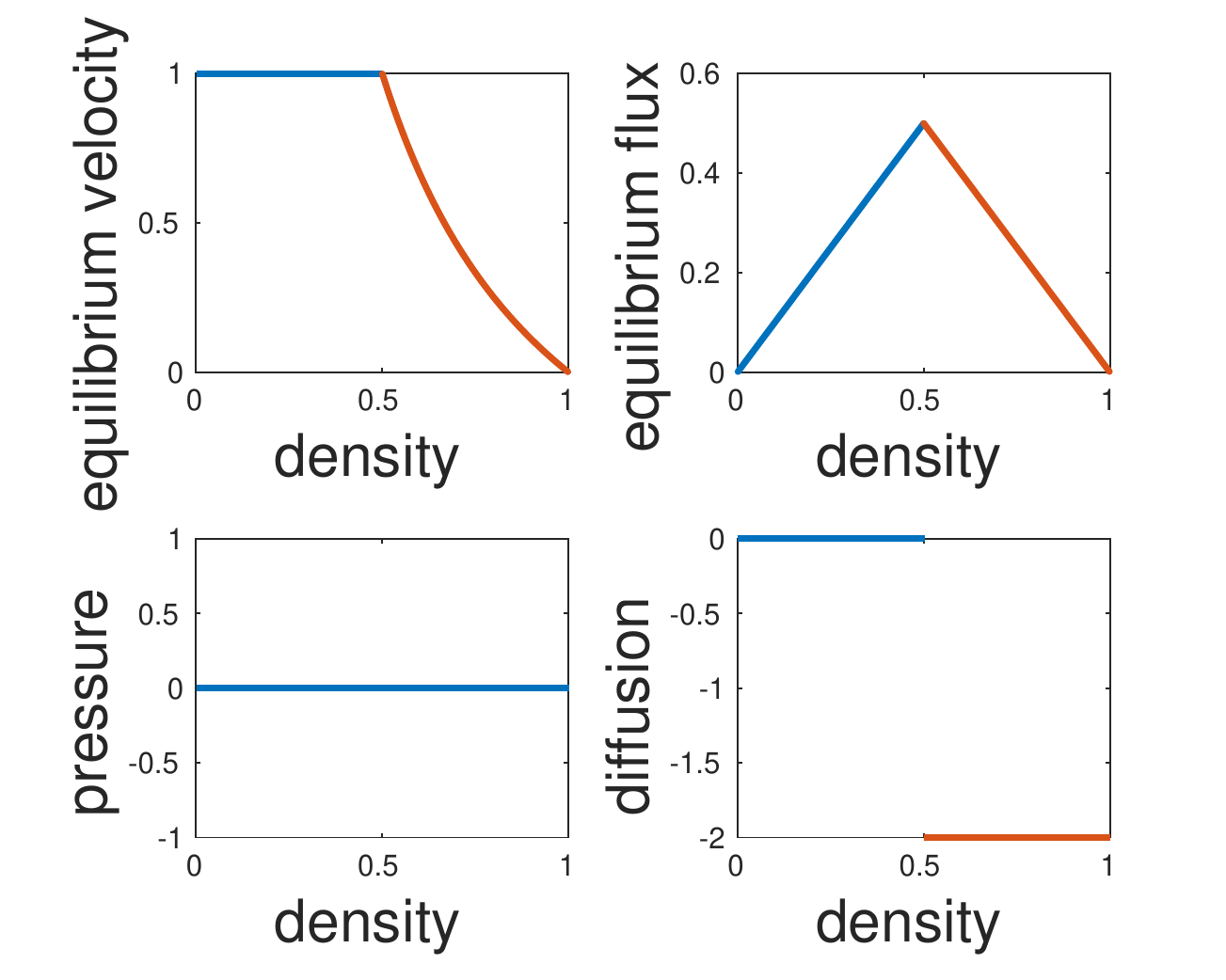}
		\caption{Two speeds.\label{fig:diffBGKa}}
	\end{subfigure}%
	~
	\begin{subfigure}[t]{0.5\textwidth}
		\centering
		\includegraphics[width=\textwidth]{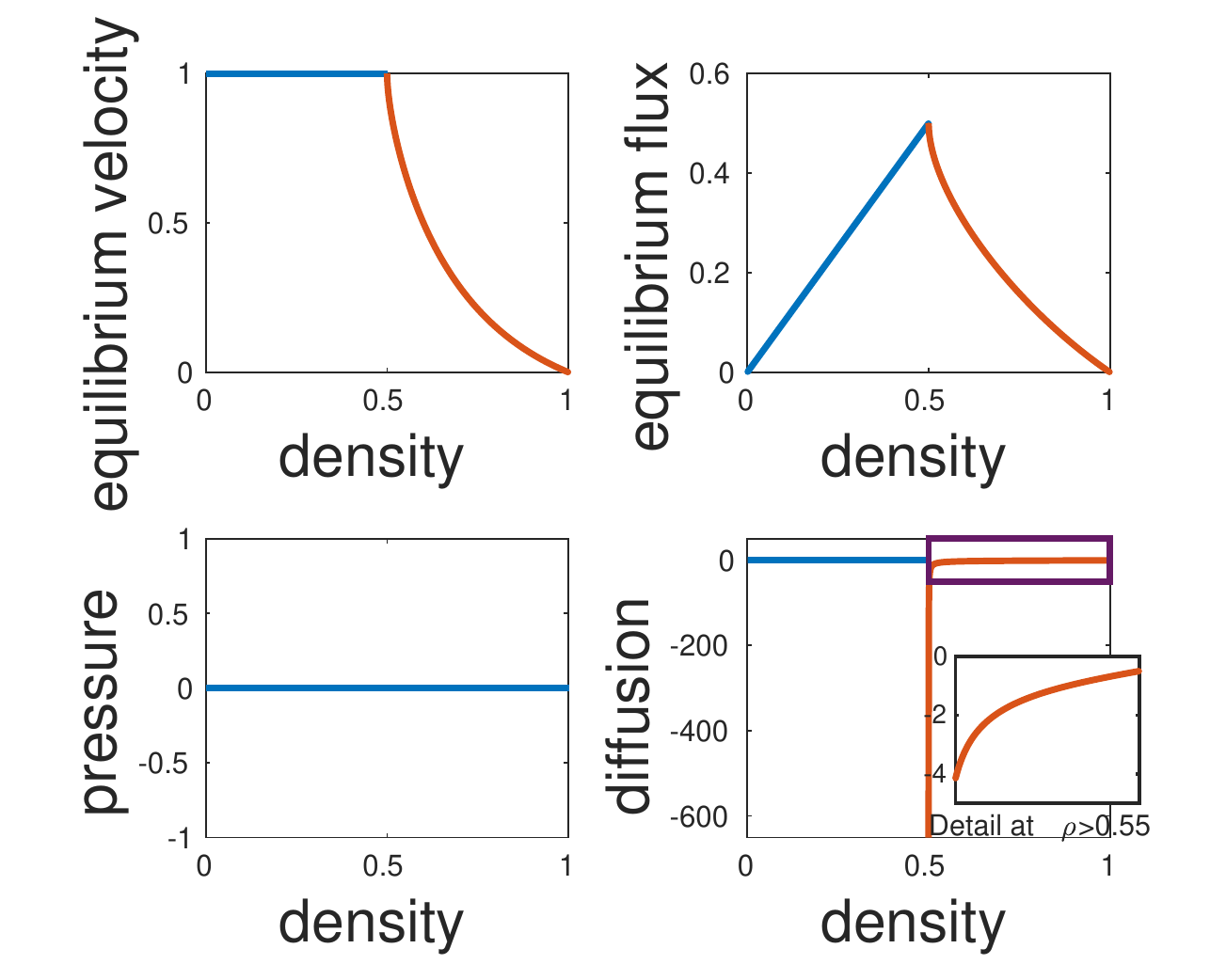}
		\caption{Three speeds.\label{fig:diffBGKb}}
	\end{subfigure}
	\caption{The right bottom panels show the sign of the diffusion coefficient~\eqref{eq:diffBGK} for the BGK model~\eqref{eq:BGK} with the equilibrium distribution of the homogeneous kinetic model~\eqref{eq:homogeneousKinetic}-\eqref{eq:rule}. The blue line corresponds to the positive sign of the coefficient and the red line to the negative sign.\label{fig:diffBGK}}
\end{figure*}

In Figure~\ref{fig:diffBGK} we show the sign of the diffusion coefficient~\eqref{eq:diffBGK} in the case of the equilibrium distribution computed in~\cite{PSTV2}, and recalled in Section~\ref{sec:homogeneous}, with two and three speeds corresponding to $\Delta v=1$ and $\Delta v=\frac12$, respectively. We observe that $\mu(\rho) \geq 0$ if $\rho \leq \rho_\text{crit}=\frac12$, while $\mu(\rho) < 0$ if $\rho > \rho_\text{crit}=\frac12$. This result could explain the unbounded growth of perturbations in the density, numerically observed in Figure~\ref{fig:nonequnstable}.

In the following result, we prove that the instability of the model does not depend on the choice of the equilibrium distribution but it occurs for any suitable equilibrium of kinetic traffic models.

\begin{proposition} \label{th:negativeDiff}
	Assume that $\exists \, \widetilde{\rho} \in (0,1)$ such that
	\begin{equation} \label{eq:conditions}
		Q_\text{eq}^\prime(\rho) = \int_v v \partial_\rho M_f(v;\rho) \mathrm{d}v < 0, \quad \partial_\rho \mathrm{Var}(v) = \partial_\rho \int_V (v-U_\text{eq}(\rho))^2 M_f(v;\rho) \mathrm{d}v < 0
	\end{equation}
	for all $\rho\in(\widetilde{\rho},1)$. Then the quantity $\mu(\rho)$ given in~\eqref{eq:diffBGK}
%	$$
%		\mu(\rho) = \int_V v^2 \partial_\rho M_f(v;\rho) \mathrm{d}v - \left(\int_V v \partial_\rho M_f(v;\rho) \mathrm{d}v\right)^2
%	$$
	is negative $\forall\,\rho\in(\widetilde{\rho},1)$.
\end{proposition}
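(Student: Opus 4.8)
The plan is to rewrite the diffusion coefficient $\mu(\rho)$ entirely in terms of the variance functional $\mathrm{Var}(v) = \int_V (v - U_\text{eq}(\rho))^2 M_f(v;\rho)\,\dd v$ together with the macroscopic quantities $U_\text{eq}$ and $Q_\text{eq}'$, and then to read off the sign from the two hypotheses in~\eqref{eq:conditions}. The starting point is that, by Theorem~\ref{th:continuousEq}, the equilibrium distribution is supported on the \emph{fixed} velocities $v_j = v_1 + j\Delta v$ which do not depend on $\rho$; hence the $\rho$-derivative commutes with the velocity integration, so that $\int_V v^2 \partial_\rho M_f\,\dd v = \partial_\rho \int_V v^2 M_f\,\dd v$, while differentiating $\int_V M_f\,\dd v = \rho$ and $\int_V v M_f\,\dd v = Q_\text{eq}(\rho)$ yields the two moment relations $\int_V \partial_\rho M_f\,\dd v = 1$ and $\int_V v\,\partial_\rho M_f\,\dd v = Q_\text{eq}'(\rho)$.

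First I would differentiate the variance. Writing $\mathrm{Var}(v) = \int_V (v-U_\text{eq})^2 M_f\,\dd v$ and applying the product rule in $\rho$, the contribution coming from the $\rho$-dependence of $U_\text{eq}$ is proportional to $\int_V (v - U_\text{eq}) M_f\,\dd v = Q_\text{eq} - \rho U_\text{eq} = 0$ and therefore drops out. This leaves the clean identity
\begin{equation*}
	\partial_\rho \mathrm{Var}(v) = \int_V (v-U_\text{eq})^2 \partial_\rho M_f\,\dd v = \int_V v^2 \partial_\rho M_f\,\dd v - 2 U_\text{eq} Q_\text{eq}'(\rho) + U_\text{eq}^2,
\end{equation*}
where in the last step I expanded the square and used the two moment relations above. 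Solving for the second-moment derivative and substituting into the definition~\eqref{eq:diffBGK} of $\mu$ gives
\begin{equation*}
	\mu(\rho) = \partial_\rho \mathrm{Var}(v) + 2 U_\text{eq} Q_\text{eq}'(\rho) - U_\text{eq}^2 - Q_\text{eq}'(\rho)^2.
\end{equation*}

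It then remains to fix the sign on $(\widetilde{\rho},1)$. The sharpest route is to complete the square, which collapses the last three terms into $\mu(\rho) = \partial_\rho \mathrm{Var}(v) - (Q_\text{eq}'(\rho) - U_\text{eq})^2$; since the square is nonnegative and $\partial_\rho \mathrm{Var}(v) < 0$ by the second hypothesis, the conclusion $\mu(\rho)<0$ is immediate, and in fact the first hypothesis is not even needed in this form. Alternatively, and keeping both hypotheses in play, I would bound the expression term by term: $\partial_\rho \mathrm{Var}(v) < 0$ by~\eqref{eq:conditions}; the cross term $2 U_\text{eq} Q_\text{eq}'(\rho) \le 0$ because $U_\text{eq}(\rho) \ge 0$ (all microscopic speeds being nonnegative) while $Q_\text{eq}'(\rho) < 0$ by the first hypothesis; and $-U_\text{eq}^2$ and $-Q_\text{eq}'(\rho)^2$ are manifestly nonpositive. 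Summing, $\mu(\rho) < 0$ throughout $(\widetilde{\rho},1)$.

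I expect the only genuinely delicate point to be the commutation of $\partial_\rho$ with the velocity integral and the differentiability of the weights $f_j^\infty(\rho)$: since $M_f$ is a sum of Dirac masses whose set of active speeds can change with $\rho$, one should work on a subinterval of $(\widetilde{\rho},1)$ on which the number of active speeds is constant, so that the weights are smooth and the formal differentiation is justified; at the thresholds where an additional speed switches on, the identity is read in a one-sided sense. Everything else is bookkeeping with the moment relations, and no inequality beyond $U_\text{eq}\ge 0$ and the two stated hypotheses enters the argument.
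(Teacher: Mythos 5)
Your proof is correct, and it takes a genuinely sharper route than the paper's. The paper's own proof uses the decomposition $\int_V v^2 M_f(v;\rho)\,\dd v = \mathrm{Var}(v) + U_\text{eq}(\rho)Q_\text{eq}(\rho)$ and argues that the hypotheses~\eqref{eq:conditions} force
\begin{equation*}
\partial_\rho \int_V v^2 M_f(v;\rho)\,\dd v = \partial_\rho \mathrm{Var}(v) + \partial_\rho\bigl( U_\text{eq}(\rho) Q_\text{eq}(\rho) \bigr) < 0 ,
\end{equation*}
where controlling the sign of $\partial_\rho ( U_\text{eq} Q_\text{eq} ) = U_\text{eq}^\prime Q_\text{eq} + U_\text{eq} Q_\text{eq}^\prime$ needs the first condition $Q_\text{eq}^\prime<0$ together with $U_\text{eq},Q_\text{eq}\geq 0$ (which in turn yields $U_\text{eq}^\prime<0$); negativity of $\mu$ then follows by subtracting $Q_\text{eq}^\prime(\rho)^2\geq 0$. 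You instead expand $\partial_\rho\mathrm{Var}(v)$ via the moment relations and complete the square, arriving at the exact identity
\begin{equation*}
\mu(\rho) = \partial_\rho \mathrm{Var}(v) - \bigl(Q_\text{eq}^\prime(\rho) - U_\text{eq}(\rho)\bigr)^2 = \partial_\rho \mathrm{Var}(v) - \rho^2\, U_\text{eq}^\prime(\rho)^2,
\end{equation*}
from which $\mu<0$ follows from the variance condition alone. This buys something real: it exposes the first condition in~\eqref{eq:conditions} as redundant for the stated conclusion, and the identity is consistent with the ARZ diffusion coefficient~\eqref{eq:diffARZeulnoncons}, which reduces to $-\rho^2 U_\text{eq}^\prime(\rho)^2$ when $h^\prime=0$. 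What the paper's argument buys instead is the stronger intermediate statement that the equilibrium kinetic energy $\int_V v^2 M_f\,\dd v$ is itself decreasing in $\rho$, for which both conditions are used. Your technical caveat about commuting $\partial_\rho$ with the velocity integral for the Dirac-type Maxwellian (fixed atoms $v_j$, $\rho$-dependent weights $f_j^\infty$) is legitimate; the paper passes over it silently and relies on exactly the same commutation in writing~\eqref{eq:diffBGK} and~\eqref{eq:stabilityBGK} interchangeably.
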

\begin{proof}
	Conditions~\eqref{eq:conditions} are sufficient to $\int_V v^2 \partial_\rho M_f(v;\rho) \mathrm{d}v < 0$. In fact
	\begin{align*}
		\partial_\rho \int_V v^2 M_f(v;\rho) \mathrm{d}v &= \partial_\rho \mathrm{Var}(v) + \partial_\rho \left( U_\text{eq}(\rho) Q_\text{eq}(\rho) \right) < 0
	\end{align*}
	$\forall\,\rho\in(\widetilde{\rho},1)$.
\end{proof}

More specifically, for the case of two microscopic speeds and in view of the previous proposition, the model can be seen as the kinetic relaxation system of Jin and Xin~\cite{Jin95therelaxation} that cannot satisfy the sub-characteristic condition since $v>0$ and $Q_\text{eq}^\prime(\rho)<0$ in the congested regime.

\begin{remark}
	We observe that conditions~\eqref{eq:conditions} do occur in traffic flow. In particular, the first one, $Q^\prime_\text{eq}(\rho) < 0$, is always verified in the congested phase of traffic. The second condition instead requires that the variance of the microscopic speeds decreases as the density increases. This is  a reasonable assumption in traffic since we expect that the freedom in choosing a microscopic velocity reduces when traffic becomes dense.
\end{remark}

In view of the results provided by the Chapman-Enskog analysis, we state the following definition which establishes that the BGK model is unstable.

\begin{definition} \label{def:stability}
	A kinetic model is said to be stable if $\mu(\rho)\geq 0$, $\forall\,\rho\in[0,\rho_M]$, weakly-unstable if $\mu(\rho)<0$ on an interval $(\rho_1,\rho_2)$ properly contained in $[0,\rho_M]$ and unstable if $\mu(\rho)<0$ on an interval $(\rho_1,\rho_2)$ in which either $\rho_1=0$ or $\rho_2=\rho_M$.
\end{definition}

\revision{In order to further investigate the instability of the BGK model when applied to traffic flow problems, we must go beyond its hydrodynamic limit. For this reason, we consider higher order macroscopic moments, following the philosophy of Grad's moment method in gas dynamics. Therefore, the second-order macroscopic model derived from the BGK equation~\eqref{eq:BGK} is obtained without closing the continuity equation with the distribution at equilibrium and computing the evolution equation for the first moment.} Then we have
\begin{equation} \label{eq:hydrolimitBGK}
\begin{aligned}
	\partial_t \rho + \partial_x (\rho u) &= 0\\
	\partial_t (\rho u) + \partial_x \int_V v^2 f(t,x,v) \mathrm{d}v &= \frac{1}{\epsilon} \left( Q_\text{eq}(\rho) - \rho u \right).
\end{aligned}
\end{equation}
System~\eqref{eq:hydrolimitBGK} is closed and thus solvable when a closure law for the second moment of the kinetic distribution $f$ is prescribed.

\begin{enumerate}
\item If we consider the following approximation
$$
	\int_V v^2 f(t,x,v) \mathrm{d}v \approx \rho u^2 + h(\rho)
$$
with $h(\rho)$ being an increasing function of the density, then from~\eqref{eq:hydrolimitBGK} it is possible to recover the Payne and Whitham model~\cite{Payne1971}  second-order model for traffic flow:
\begin{equation} \label{eq:modelPW}
\begin{aligned}
\partial_t \rho + \partial_x (\rho u) &= 0\\
\partial_t (\rho u) + \partial_x (\rho u^2 + h(\rho)) &= \frac{1}{\epsilon} \left( Q_\text{eq}(\rho) - \rho u \right).
\end{aligned}
\end{equation}
It is well-know that model~\eqref{eq:modelPW} suffers of the following drawback (see for instance the works by Daganzo~\cite{Daganzo1995}): the characteristic velocities of~\eqref{eq:modelPW} are $\lambda_1 = u - \sqrt{h^\prime(\rho)}$ and $\lambda_2 = u + \sqrt{h^\prime(\rho)}$. Thus one characteristic speed is always larger than the speed $u$ of vehicles.
%violating the following principle, as emphasized in~\cite{Daganzo1995}: ``A fluid particle responds to stimuli from the front and from behind, but a car is an anisotropic particle that mostly responds to frontal stimuli''.
%
\item If we instead approximate
$$
\int_V v^2 f(t,x,v) \mathrm{d}v \approx \rho u^2
$$
then~\eqref{eq:hydrolimitBGK} reads
\begin{equation} \label{eq:modelPWwithoutAcc}
\begin{aligned}
\partial_t \rho + \partial_x (\rho u) &= 0\\
\partial_t u + u \partial_x u &= \frac{1}{\epsilon} \left( U_\text{eq}(\rho) - u \right)
\end{aligned}
\end{equation}
that is again the Payne and Whitham model without the anticipation term. However, the model still suffers of the following drawbacks. First, the system is not strictly hyperbolic since the eigenvalues are $\lambda_1 \equiv \lambda_2 = u$ leading to a lack of strict hyperbolicity.  The possible unbounded growth of small perturbations is  highlighted by the fact that the sub-characteristic condition for the model is $-\rho^2 U_\text{eq}^\prime(\rho)>0$ which is never satisfied.
\item It is easy to show that if we approximate around the equilibrium distribution, i.e.
$$
\int_V v^2 f(t,x,v) \mathrm{d}v \approx \int_V v^2 M_f(v;\rho) \mathrm{d}v,
$$
the relaxation system~\eqref{eq:hydrolimitBGK} has a sub-characteristic condition $\mu(\rho) < 0$ with $\mu(\rho)$ given by~\eqref{eq:diffBGK} and therefore in the whole congested regime.
\end{enumerate}

\begin{remark}
	The particle dynamics resulting from the BGK model~\eqref{eq:BGK} are described by the ODE system
	\begin{equation} \label{eq:modelBando}
	\begin{aligned}
		\dot{x}_i(t) &= v_i(t) \\
		\dot{v}_i(t) &= \frac{1}{\epsilon} \left( U_\text{eq}(\rho) - v_i \right)
	\end{aligned}
	\end{equation}
	where $i$ is the label of a vehicle. System~\eqref{eq:modelBando} represents the microscopic model by Bando~\cite{Bando1995} which is know to be stable provided $\epsilon < \frac{1}{2 |U_\text{eq}^\prime(\rho)|}$, i.e. if the relaxation is large.
\end{remark}

The last results of this paragraph \minor{show} that properties of the BGK model studied via Chapman-Enskog expansion occurs also in its second-order \revision{systems of moments} and in its particle \minor{formulations}.

\revision{Closing BGK at second order moments one provides the structure of common second-order models in the literature on traffic flow. Due to this consideration we aim to understand what is the kinetic structure representing the mesoscopic scale of suitable second-order macroscopic models.}

\minor{In} the following section we investigate properties of the second-order macroscopic model~\cite{aw2000SIAP,Zhang2002,SeiboldFlynnKasimovRosales2013}.

\subsection{The case of the Aw-Rascle and Zhang model} \label{sec:macroModels}

The prototype example of second-order macroscopic traffic models is given by the Aw and Rascle~\cite{aw2000SIAP} and Zhang~\cite{Zhang2002} (ARZ) model where the conservation law for the density $\rho(t,x)$ of vehicles at time $t \geq 0$ and position $x$ is coupled with an additional equation for the evolution of the macroscopic speed of the flow $u(t,x)$. We focus on the non-homogeneous version of the ARZ model which writes in conservation form
\begin{equation} \label{eq:modelARZ}
\begin{aligned}
	\partial_t \rho + \partial_x (q-\rho h(\rho)) &= 0\\
	\partial_t q + \partial_x \left(\frac{q^2}{\rho}-q h(\rho) \right) &= \frac{1}{\epsilon} (Q_{\text{eq}}(\rho) + \rho h(\rho) - q)
\end{aligned}
\end{equation}
where $q$ is the second conservative variable and it is defined as $q:=\rho(u+h(\rho))$. The function $h=h(\rho)$ is a strictly increasing function of the density and it is called hesitation function or traffic pressure although it is homogeneous to a velocity. The quantity $\epsilon$ is a time which rules the relaxation speed of the flux $\rho u$ to the equilibrium flux $Q_{\text{eq}}(\rho)$ which is a given function of the density, i.e. the fundamental diagram. Here $Q_\text{eq}$ is not necessarily the one given in~\eqref{eq:macroQuantitiesEq}.

The ARZ model was introduced in order to overcome the drawbacks of the second-order macroscopic model introduced by Payne and Whitham and recalled in equation~\eqref{eq:modelPW}.
%In fact, as already pointed out, Daganzo~\cite{Daganzo1995} observed that the Payne-Whitham model leads to the following two non-physical effects:
%the possible occurrence of negative velocities (which is realistic in gas dynamics, but not in traffic flow) and the presence of characteristic speeds traveling faster than the velocity of vehicles. Instead,
The characteristic velocity of the ARZ model are
$\lambda_1 = u - \rho h^\prime(\rho) \leq u = \lambda_2$
and thus  information propagates at most with the vehicles' velocity.
%Observe that $\lambda_1 = \lambda_2$ in presence of vacuum states, i.e. when $\rho=0$, and in fact system~\eqref{eq:modelARZ} is non-strictly hyperbolic~\cite{GodvickHancheOlsen2008}: two (or more) eigenvalues coincide and there is no basis of eigenvectors.

In primitive variables the system reads
\begin{equation} \label{eq:modelARZnoncons}
\begin{aligned}
	\partial_t \rho + \partial_x (\rho u) &= 0\\
	\partial_t \big(u+h(\rho)\big) + u \partial_x \big(u+h(\rho)\big) &= \frac{1}{\epsilon} (U_{\text{eq}}(\rho) - u).
\end{aligned}
\end{equation}
In~\eqref{eq:modelARZnoncons} $U_{\text{eq}}(\rho) = \frac{1}{\rho} Q_{\text{eq}}(\rho)$ is the equilibrium speed. %The classical assumption on $U_{\text{eq}}(\rho)$ is being a decreasing function of the density. Using the non-conservative formulation~\eqref{eq:modelARZnoncons}, the source  term can be now seen as the relaxation of the drivers' speed towards a desired velocity.

Both~\eqref{eq:modelARZ} and~\eqref{eq:modelARZnoncons} can be understood as relaxation systems~\cite{Jin95therelaxation}. In the  limit $\epsilon \to 0$, model~\eqref{eq:modelARZnoncons} converges towards solutions of the conservation law~\eqref{eq:modelLWR}. In contrast to second-order macroscopic models, equation~\eqref{eq:modelLWR} has only one characteristic velocity given by
$$\lambda = (\rho U_{\text{eq}}(\rho))^\prime = Q_{\text{eq}}^\prime(\rho).$$
If $\epsilon$ is small, but not vanishing, \eqref{eq:modelARZnoncons}
%converges to the solutions of~\eqref{eq:modelLWR} with a nonlinear diffusion of magnitude $O(\epsilon)$, namely~\eqref{eq:modelARZnoncons}
approaches the advection-diffusion equation~\eqref{eq:advdiff}
where the diffusion coefficient $\mu(\rho)$ is given by
\begin{equation} \label{eq:diffARZeulnoncons}
	\mu(\rho) = -\rho^2 U^\prime_{\text{eq}}(\rho) \big( U^\prime_{\text{eq}}(\rho) + h^\prime(\rho) \big).
\end{equation}
In fact, using a Chapman-Enskog expansion, the coefficient~\eqref{eq:diffARZeulnoncons} is obtained by considering a first-order expansion of the speed $u=U_{\text{eq}}(\rho) + \epsilon u_1$ around the equilibrium velocity function $U_{\text{eq}}(\rho)$
%$$
%	u = U_{\text{eq}}(\rho) + \epsilon u_1, \quad \text{where} \  u_1 = \frac{1}%{\epsilon} (u - U_{\text{eq}}(\rho)) = - \partial_t \big(u+h(\rho)\big) - u \partial_x \big(u+h(\rho)\big)
%$$
%where
%$$
%- u_1 = \frac{1}{\epsilon} (U_{\text{eq}}(\rho) - u) = \partial_t \big(u+h(\rho)\big) + u \partial_x \big(u+h(\rho)\big)
%$$
\begin{align*}
	u %=& U_{\text{eq}}(\rho) + \epsilon u_1 = U_{\text{eq}}(\rho) + \epsilon \left( -\partial_t \big(u+h(\rho)\big) - u \partial_x \big(u+h(\rho)\big) \right) \\
	=& U_{\text{eq}}(\rho) + \epsilon \Big( -\big( U^\prime_{\text{eq}}(\rho) + h^\prime(\rho) \big) \partial_t \rho - \big( U_{\text{eq}}(\rho) U^\prime_{\text{eq}}(\rho) + U_{\text{eq}}(\rho) h^\prime(\rho) \big) \partial_x \rho \Big) +  O(\epsilon^2) \\
	%=& U_{\text{eq}}(\rho) + \epsilon \Big( \big( U^\prime_{\text{eq}}(\rho) + h^\prime(\rho) \big) \partial_x (\rho u) - \big( U_{\text{eq}}(\rho) U^\prime_{\text{eq}}(\rho) + U_{\text{eq}}(\rho) h^\prime(\rho) \big) \partial_x \rho \Big) +  O(\epsilon^2) \\
	=& U_{\text{eq}}(\rho) + \epsilon \Big( \rho U^\prime_{\text{eq}}(\rho) \big( U^\prime_{\text{eq}}(\rho) + h^\prime(\rho) \big) \partial_x \rho \Big) +  O(\epsilon^2).
\end{align*}
Finally, plugging the expansion of $u$ just obtained into the first equation of~\eqref{eq:modelARZnoncons} we get the advection-diffusion equation~\eqref{eq:advdiff} with coefficient $\mu(\rho)$ given by~\eqref{eq:diffARZeulnoncons}. The condition $\mu(\rho) > 0$ provides the so-called sub-characteristic condition~\cite{Chen92hyperbolicconservation,Jin95therelaxation}. %, which links the relaxation speeds with the characteristic velocity of the relaxed system. The sub-characteristic condition ensures the well-posedness of the relaxation system. 
For the ARZ model $\mu(\rho)>0$ is satisfied if
\begin{equation} \label{eq:subchCondRed}
	0>U_{\text{eq}}^\prime(\rho) > -h^\prime(\rho).
\end{equation}
% The condition~\eqref{eq:subchCondRed} 
%prescribes therefore a relation between the equilibrium velocity and the hesitation function.
The sub-characteristic condition guarantees the stability of solutions of the ARZ model based on initial conditions where $\rho(t,x) = \overline{\rho}$ and $\overline{u} = U_\text{eq}(\overline{\rho})$. In fact, the LWR model~\eqref{eq:modelLWR} and the ARZ model~\eqref{eq:modelARZnoncons} share solutions of the type $(\overline{\rho},\overline{u})$ and in this case the sub-characteristic condition~\eqref{eq:subchCondRed} can be rewritten in terms of the characteristic velocities of the two models computed in $(\overline{\rho},\overline{u})$ as
$$
	\lambda_1(\overline{\rho},\overline{u}) = U_{\text{eq}}(\overline{\rho}) - \overline{\rho} h^\prime(\overline{\rho})< \lambda(\overline{\rho}) = U_{\text{eq}}(\overline{\rho}) + \overline{\rho} U^\prime_{\text{eq}}(\overline{\rho}) < \lambda_2(\overline{\rho},\overline{u}) = U_{\text{eq}}(\overline{\rho}).
$$
The above condition implies that the characteristic velocity of the LWR model must lie between the two characteristic velocities of the ARZ model. Whitham~\cite{Whitham1959} proved that this inequality is verified if and only if solutions of the second-order macroscopic model with initial condition $(\overline{\rho},\overline{u})$ are linearly stable, i.e. small perturbations to $(\overline{\rho},\overline{u})$ decay in time \cite{Chen92hyperbolicconservation}.
%This result can also be reinterpreted as convergence of the solutions of the ARZ model towards solutions of the LWR model in the limit $\epsilon \to 0$. We refer also to the interlacing theorem for relaxation systems~\cite{Chen92hyperbolicconservation}.% when the sub-characteristic condition is always satisfied.

We stress the fact that condition~\eqref{eq:subchCondRed} strongly 
%depends on the choice of the equilibrium velocity $U_\text{eq}(\rho)$ and of the hesitation function $h(\rho)$ and therefore it can be also not satisfied in some density regimes.
restricts the possible choice of $U_\text{eq}$ and $h$. For instance, the Greenshields' closure $U_\text{eq}(\rho) = 1-\rho$ and the hesitation function $h(\rho) = C \rho^\gamma$, $C,\gamma>0$ given in~\cite{aw2000SIAP} violate~\eqref{eq:subchCondRed} if $\rho^{\gamma-1} \leq \frac{1}{C\gamma}$.

\begin{figure}
	\centering
	\begin{subfigure}[t]{0.5\textwidth}
		\centering
		\includegraphics[width=\textwidth]{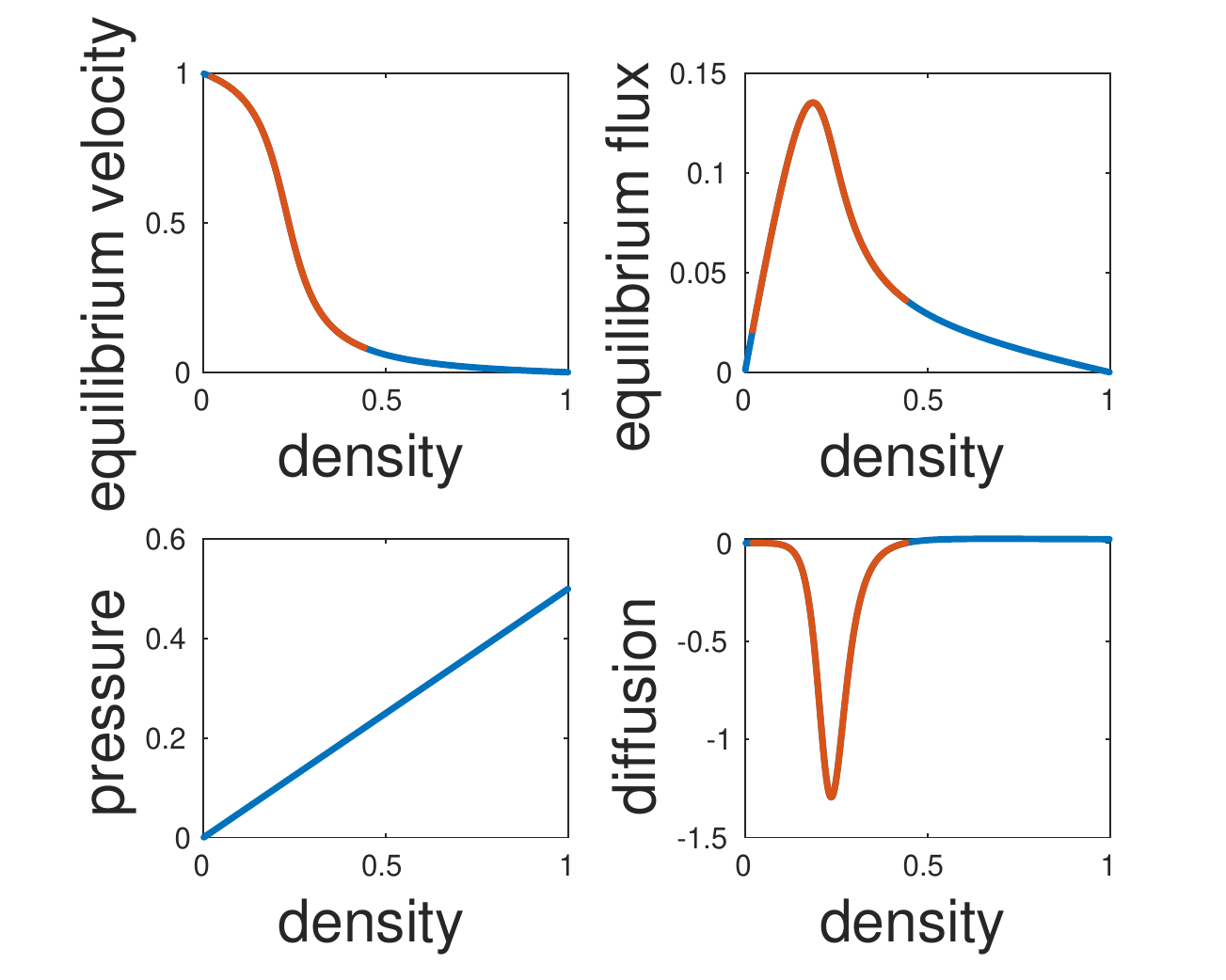}
		\caption{Closure $U_\text{eq}(\rho) = \frac{\frac{\pi}{2}+\arctan(11\frac{\rho-0.22}{\rho-1})}{\frac{\pi}{2}+\arctan(11\cdot0.22)}$ given in~\cite{aw2002SIAP} with pressure $h(\rho)=\rho/2$.}
	\end{subfigure}%
	~
	\begin{subfigure}[t]{0.5\textwidth}
		\centering
		\includegraphics[width=\textwidth]{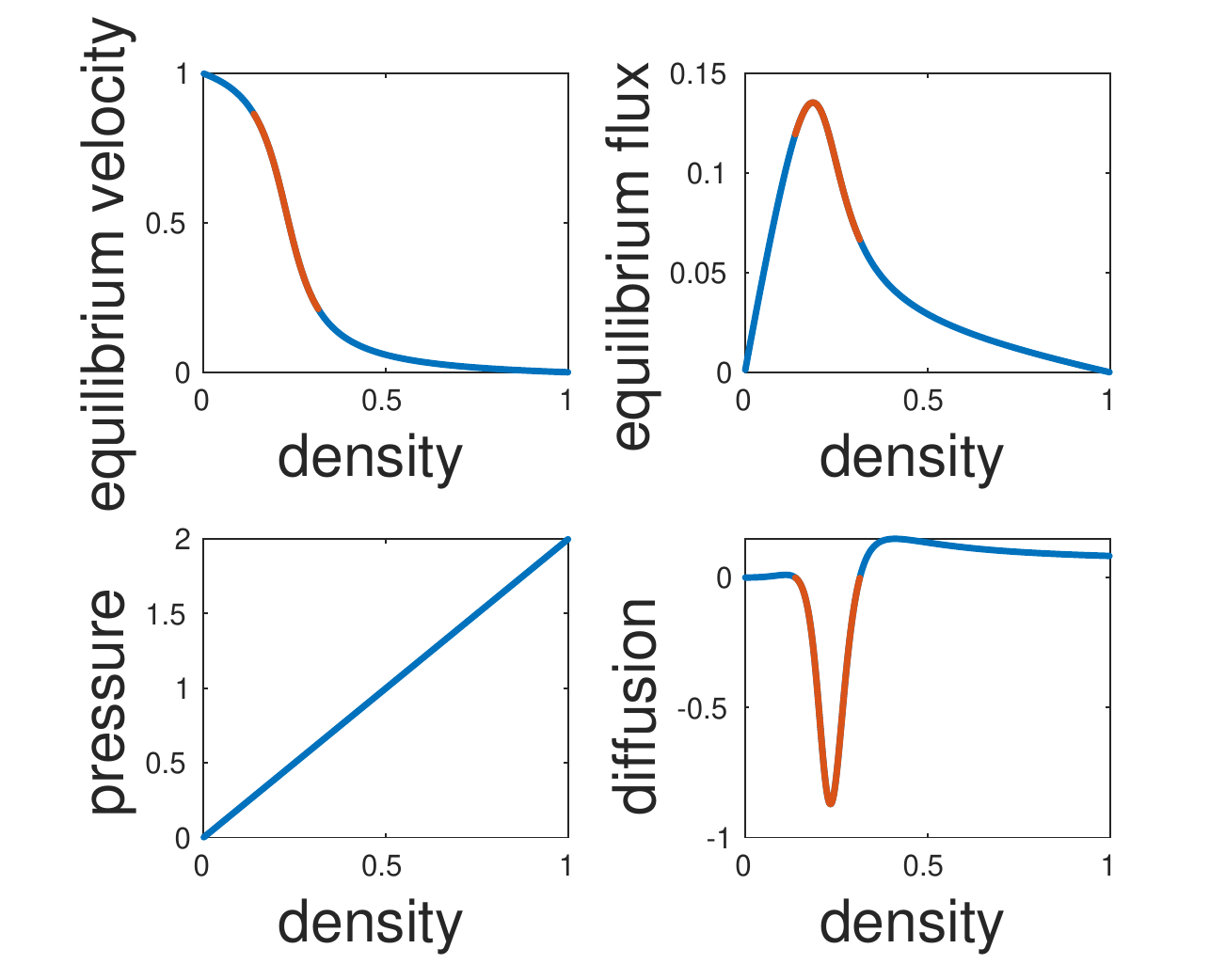}
		\caption{Closure $U_\text{eq}(\rho) = \frac{\frac{\pi}{2}+\arctan(11\frac{\rho-0.22}{\rho-1})}{\frac{\pi}{2}+\arctan(11\cdot0.22)}$ given in~\cite{aw2002SIAP} with pressure $h(\rho)=2\rho$.}
	\end{subfigure}
	\\
	\begin{subfigure}[t]{0.5\textwidth}
		\centering
		\includegraphics[width=\textwidth]{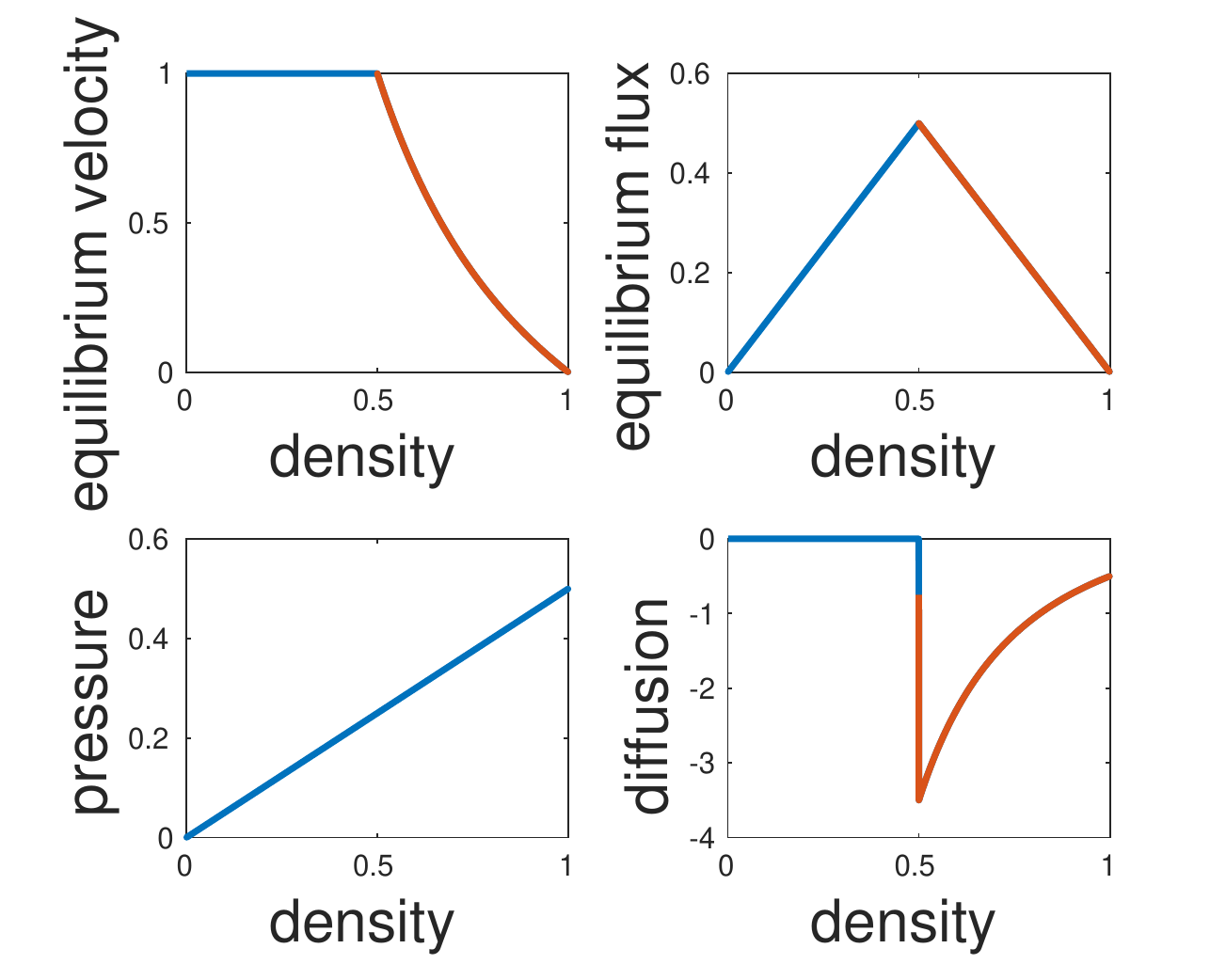}
		\caption{Closure $U_\text{eq}(\rho) = \frac{1}{\rho}\int_V vM_f(v;\rho)\mathrm{d}v$ with two speeds and pressure $h(\rho)=\rho/2$.}
	\end{subfigure}%
	~
	\begin{subfigure}[t]{0.5\textwidth}
		\centering
		\includegraphics[width=\textwidth]{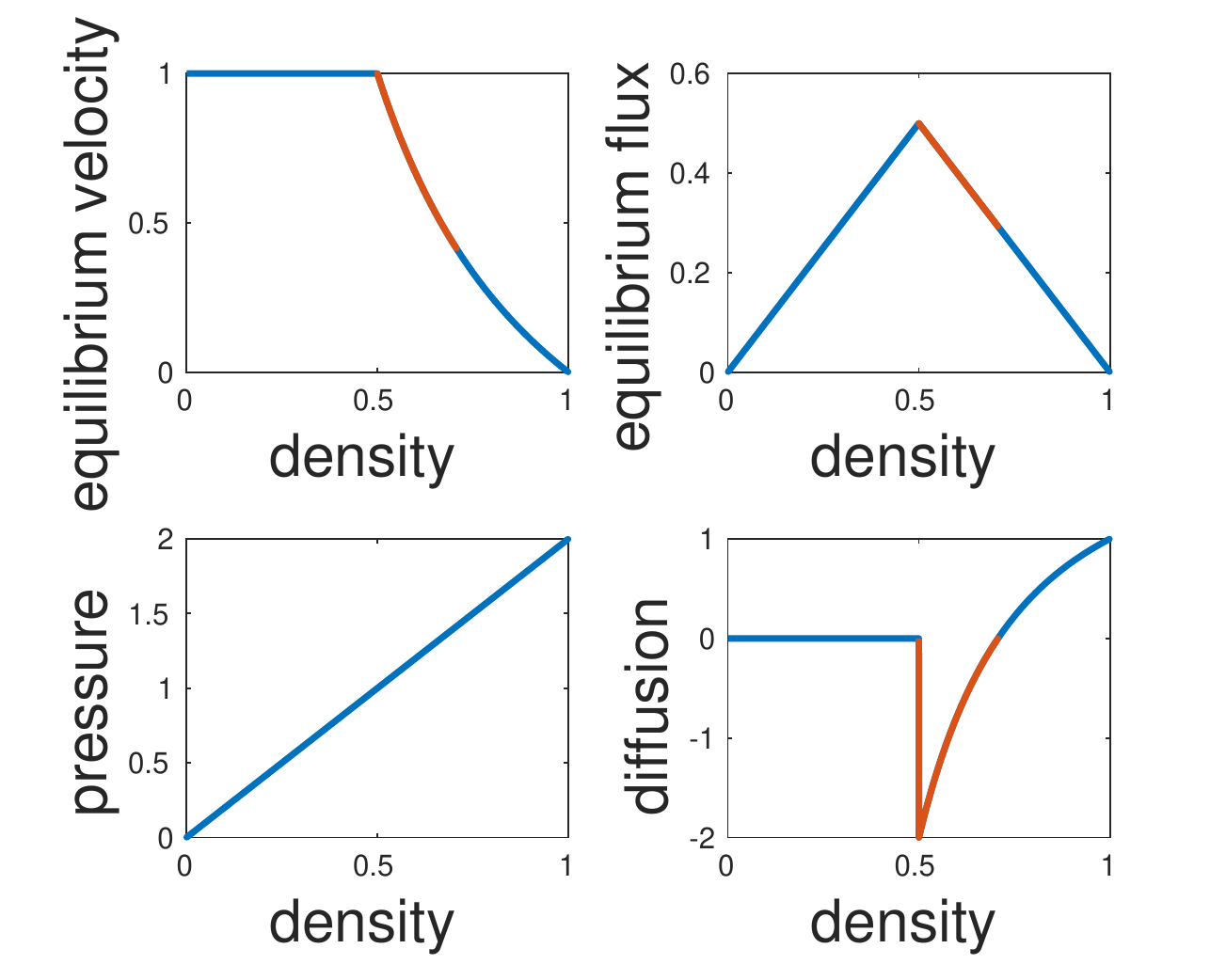}
		\caption{Closure $U_\text{eq}(\rho) = \frac{1}{\rho}\int_V vM_f(v;\rho)\mathrm{d}v$ with two speeds and pressure $h(\rho)=2\rho$.\label{fig:subchCondD}}
	\end{subfigure}
	\\
	\begin{subfigure}[t]{0.5\textwidth}
		\centering
		\includegraphics[width=\textwidth]{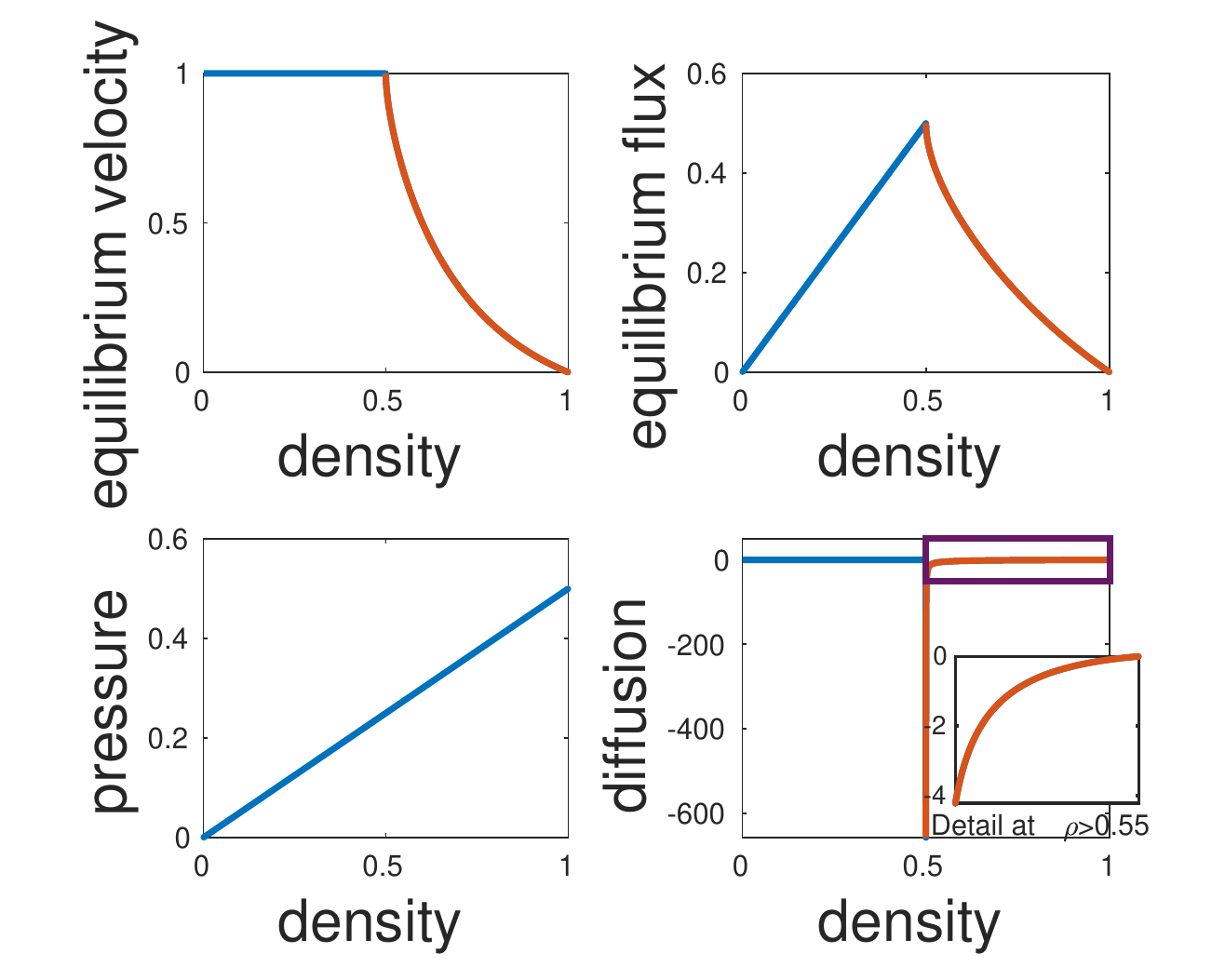}
		\caption{Closure $U_\text{eq}(\rho) = \frac{1}{\rho}\int_V vM_f(v;\rho)\mathrm{d}v$ with three speeds and pressure $h(\rho)=\rho/2$.}
	\end{subfigure}%
	~
	\begin{subfigure}[t]{0.5\textwidth}
		\centering
		\includegraphics[width=\textwidth]{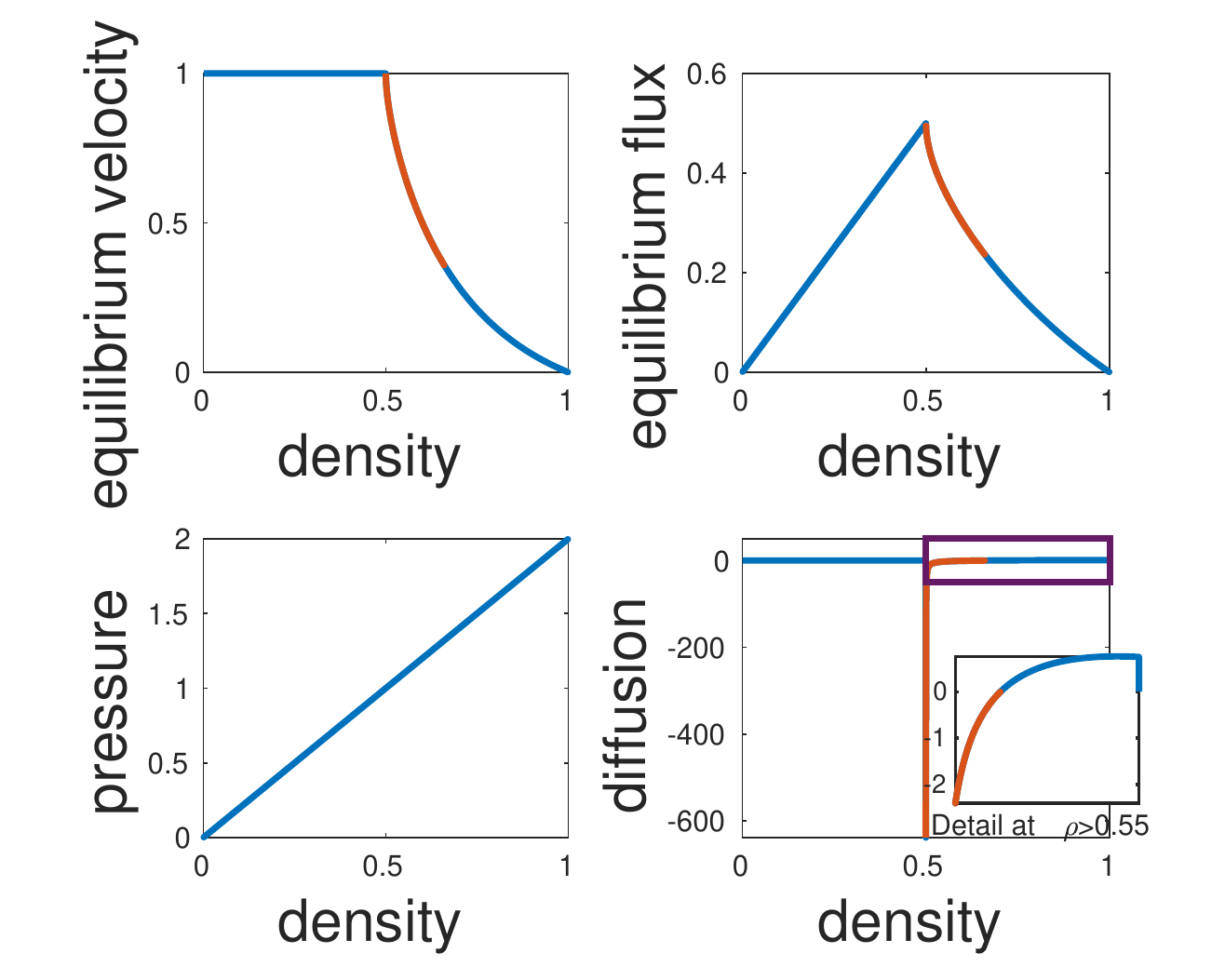}
		\caption{Closure $U_\text{eq}(\rho) = \frac{1}{\rho}\int_V vM_f(v;\rho)\mathrm{d}v$ with three speeds and pressure $h(\rho)=2\rho$.\label{fig:subchCondF}}
	\end{subfigure}
	%	\\
	%	\begin{subfigure}[t]{0.5\textwidth}
	%		\centering
	%		\includegraphics[width=\textwidth]{}
	%		\caption{Triangle equilibrium flux.\label{fig:ARZeulb}}
	%	\end{subfigure}%
	%	~
	%	\begin{subfigure}[t]{0.5\textwidth}
	%		\centering
	%		\includegraphics[width=\textwidth]{}
	%		\caption{Greenshields' closure.\label{fig:ARZeuld}}
	%	\end{subfigure}
	\caption{The right bottom panels show the sign of $\mu(\rho)$ for the ARZ model~\eqref{eq:modelARZnoncons} with different closures and pressure functions. Blue lines correspond to regimes where $\mu(\rho)>0$ and red lines to regimes where $\mu(\rho)<0$.\label{fig:subchCond}}
\end{figure}

In Figure~\ref{fig:subchCond} we numerically analyze the sign of the diffusion coefficient $\mu(\rho)$ of the ARZ model given in~\eqref{eq:diffARZeulnoncons} for several choices of the equilibrium velocity and  the hesitation function. We observe that standard choices  lead to regimes where the diffusion coefficient is negative in the congested regime and consequently the sub-characteristic condition is not verified. This produces instabilities of uniform flow as discussed above. However, for suitable choices of $U_\text{eq}$ and $h$, regimes where $\mu(\rho)<0$ can be properly contained in $[0,\rho_M]$, \revision{and in this case numerical evidence shows} the existence of a bounded density regime.

This result was already mentioned and analyzed in~\cite{SeiboldFlynnKasimovRosales2013} where the authors link the appearance of such regimes to the presence of nonlinear traveling wave solutions, called jamitons, and identified as the stop and go waves observed in real traffic situations. 
%In fact, everyday life experience suggests that unstable waves arise on highways in regimes close to the \emph{synchronized flow} when a small perturbation of the speed synchronization effect occurs. See the three phase traffic theory developed by Kerner~\cite{Kerner1998}. Thus, in view of this theory, one would expect to have only small density regimes, possibly across and just after the critical value of the density, when the flux of vehicles decreases, where the sub-characteristic condition is violated producing moving unstable waves.

\paragraph{Bounded waves.} Numerical evidence suggests that if the instability of the model is restricted on a set $(\rho_1,\rho_2)$ properly contained in $[0,\rho_M]$ \minor{the amplitude of unstable waves remains bounded}. In fact, the growth of a wave is smoothed when the solution enters in the regime $[0,\rho_M]\setminus(\rho_1,\rho_2)$ where the diffusion is positive.

We point-out that this consideration is purely heuristic and based on numerical investigations. To the best of our knowledge there are few contributions on the analysis of advection-diffusion equations with a \minor{changing} sign diffusion coefficient, see e.g.~\cite{CorliMalaguti2018,GildingTesei2010,LafitteMascia2012,MasciaTerracinaTesei2009}.

\begin{figure}[t!]
	\centering
	\includegraphics[width=0.49\textwidth]{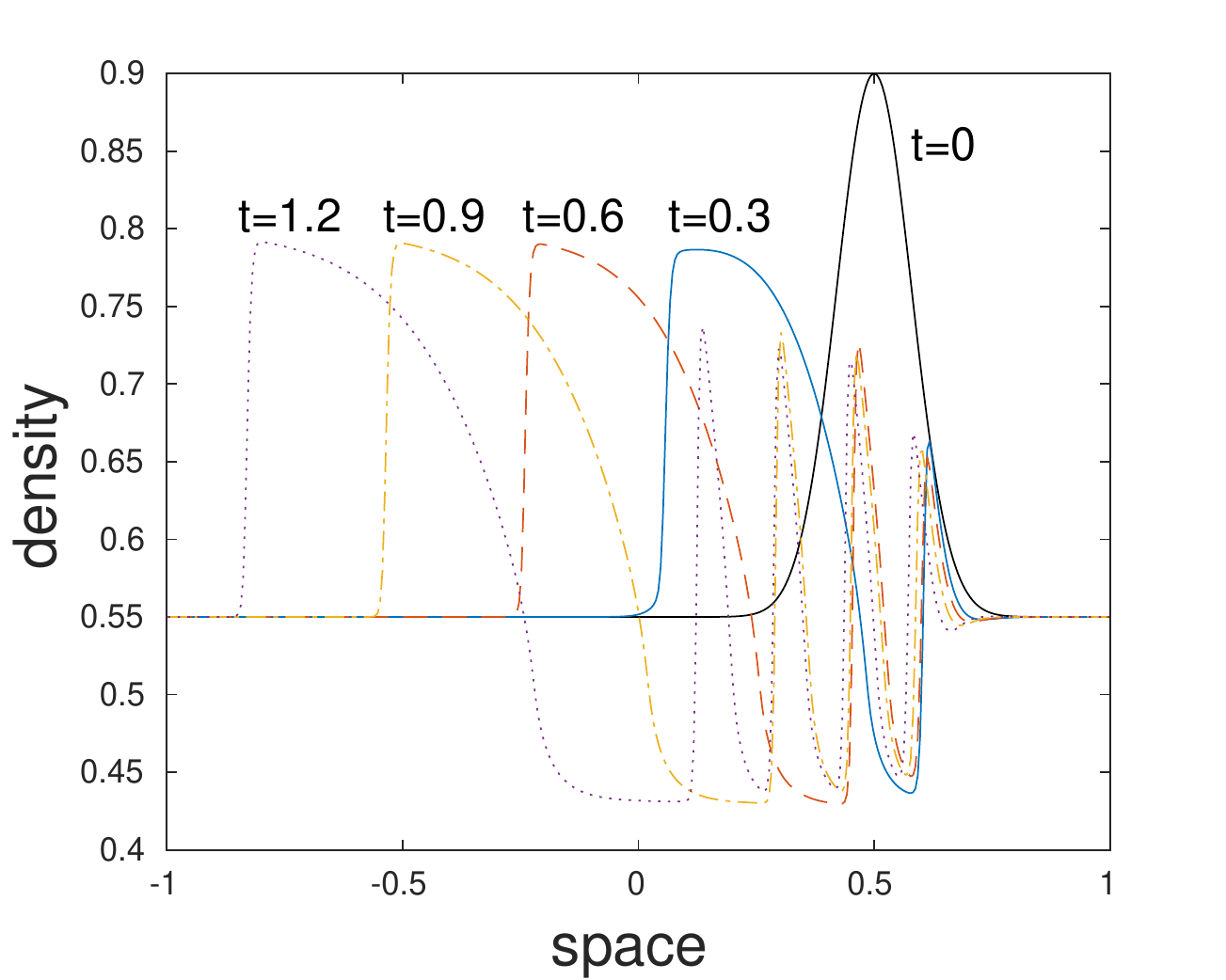}
	\includegraphics[width=0.49\textwidth]{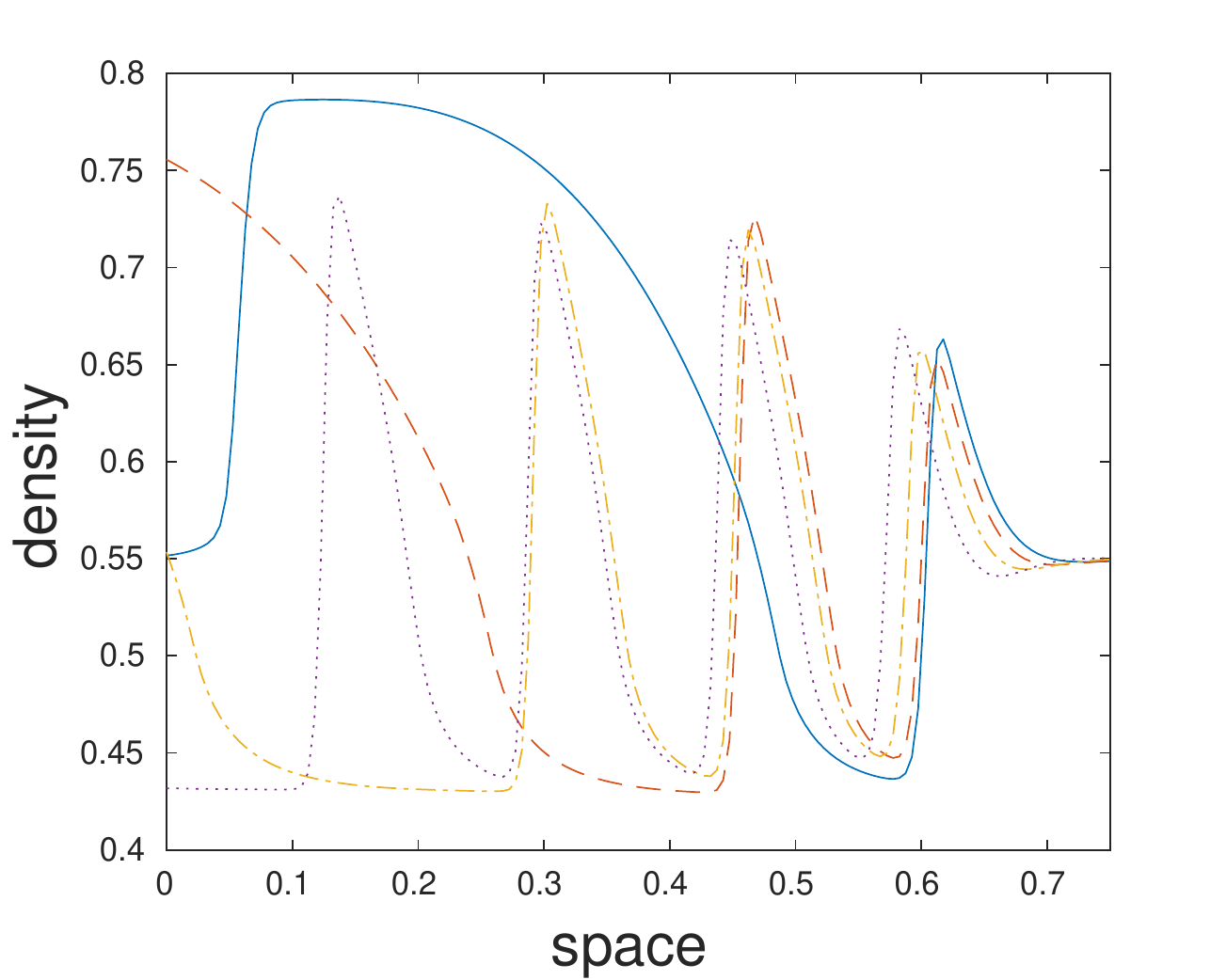}
	\\
	\includegraphics[width=0.49\textwidth]{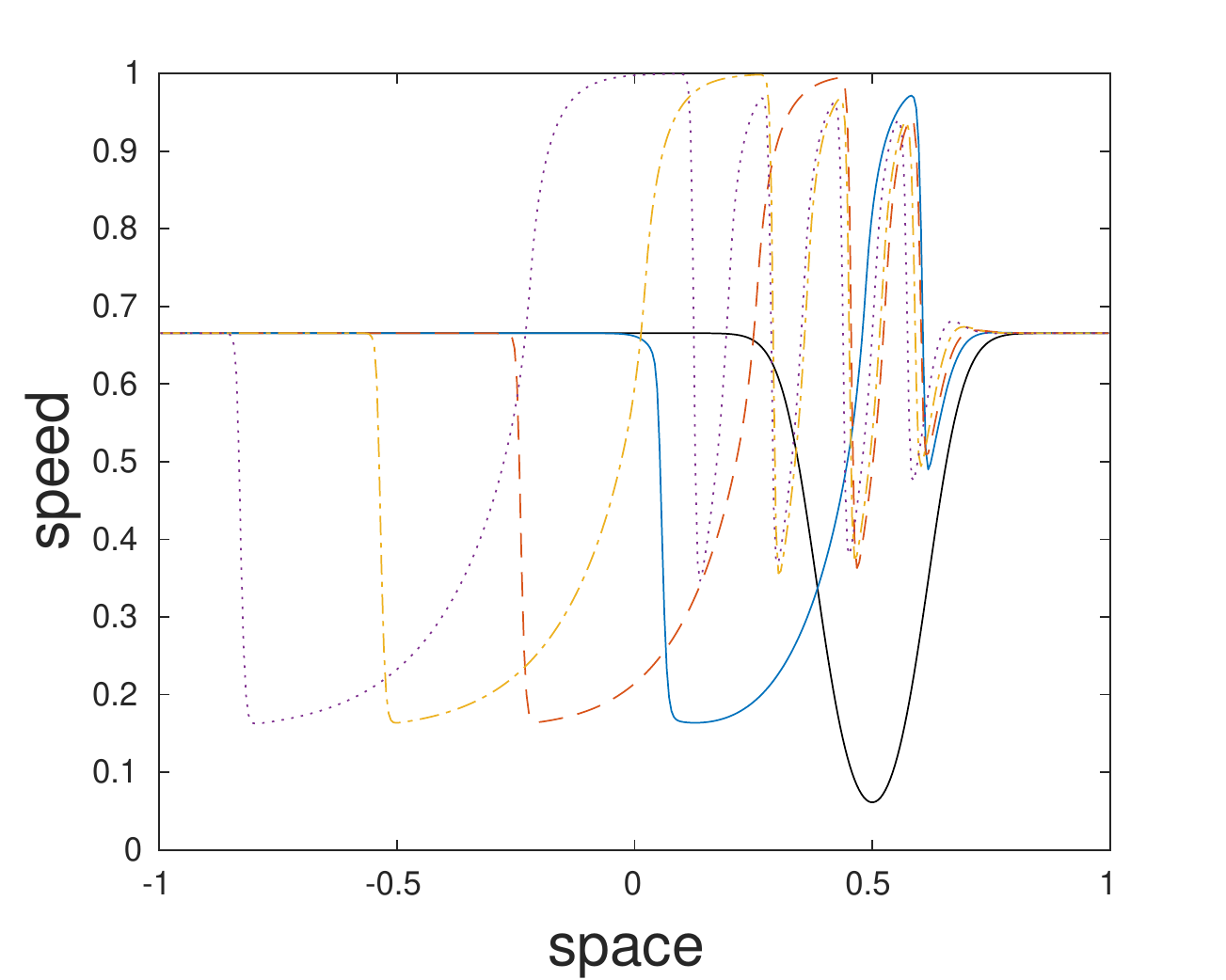}
	\includegraphics[width=0.49\textwidth]{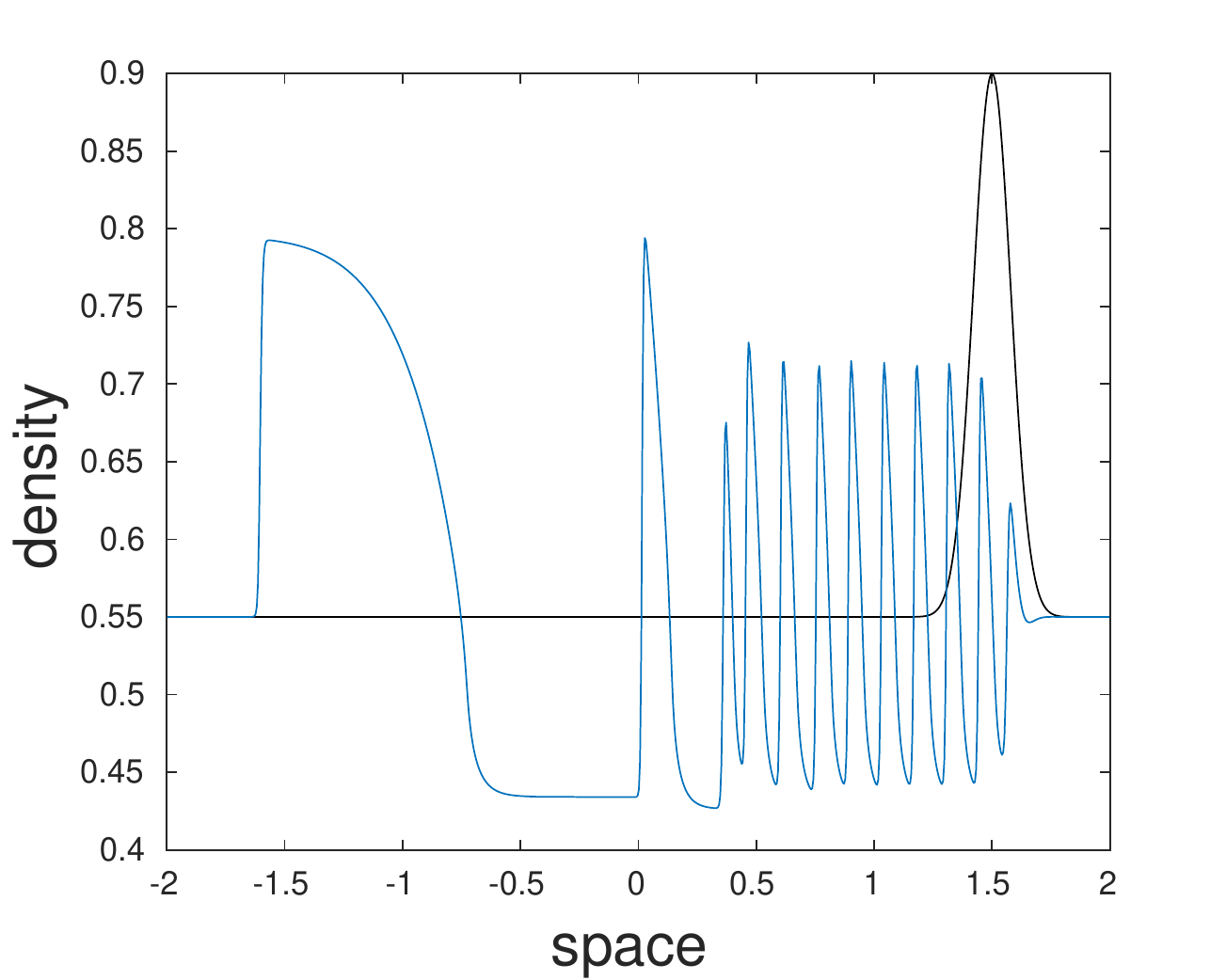}
	\caption{Evolution of a high density perturbation with the ARZ model. Top-left: density profile at different instants in time. Top-right: zoom of the density profile in the region where unstable waves occur. Bottom-left: speed profile at different instants in time. Bottom-right: density profile at final time $T_M=3$.\label{fig:boundedWaves}}
\end{figure}

In Figure~\ref{fig:boundedWaves} we numerically study the evolution of a density perturbation with the ARZ model as \minor{we already} did for the BGK model in the previous section. Here we consider the case in which the bump in the density \minor{intersects} a regime where the diffusion is negative. In particular, we consider the equilibrium flux $Q_\text{eq}$ and the equilibrium velocity $U_\text{eq}$ prescribed by the three-velocity kinetic distribution at equilibrium, see~\eqref{eq:macroQuantitiesEq}. The pressure or hesitation function is chosen as $h(\rho) = 2\rho$. This is the situation described in Figure~\ref{fig:subchCondF} where $\mu(\rho)<0$ for $\rho\in(0.5,0.66)$. The initial condition is given by~\eqref{eq:densityBump} with $\rho_{\min}=0.55$ and $\rho_{\max}=0.9$. The space domain is $[-1,1]$ and discretized with $N_x=400$ cells. The top-left panel shows the density profile at different instants in time. We observe that unstable waves appear in the regime where the diffusion coefficient is negative and they move backward. The top-right panel shows a zoom in the region where the unstable waves are rising and the bottom-left panels shows the evolution of the macroscopic speed. The growth is bounded as confirmed in the bottom-right panel where we show the density profile for a longer final time and using periodic boundary conditions.

In the following, we consider models that allow for a Chapman-Enskog expansion as advection-diffusion equation with a negative diffusion coefficient in \minor{a subset $(\rho_1,\rho_2)$} properly contained in $[0,\rho_M]$. \minor{The violation of the sub-characteristic condition lead to instabilities that in turn can be regarded as models for stop and go waves.} These waves are however unbounded if the regime of negative diffusion is not properly contained in $[0,\rho_M]$ as in the case of the classical BGK model. 
%which is unstable in the sense of Definition~\ref{def:stability}.

\section{Derivation of a modified BGK-type model for traffic flow} \label{sec:correctBGK}

In this section we address the issue of deriving a modified BGK-type equation for traffic flow satisfying the following properties.

\begin{property} \label{p:property1}
	A kinetic model for traffic flow \minor{should be} weakly-unstable in the sense of Definition~\ref{def:stability}.
\end{property}
%\begin{property} \label{p:property2}
%	A kinetic model for traffic flow has backward propagating perturbations occurring for congested traffic situations. \todo{I am not sure we should mention this!} 
%\end{property}
\begin{property} \label{p:property3}
	The \revision{second-order system of moments} of a kinetic model for traffic flow \minor{should} recover an  ARZ-type macroscopic model.
\end{property}

%In view of the discussion in Section~\ref{sec:macroModels}, Property~\ref{p:property1} allows to reproduce stop and go waves as bounded instabilities of small perturbations in the density. These solutions occur in the regime where the diffusion coefficient arising from the first order Chapman-Enskog expansion is negative.

The starting point for the derivation of the BGK-type model are microscopic follow-the-leader (FTL) models. The FTL model is proved to converge to the ARZ model in the macroscopic limit~\cite{aw2002SIAP}.
%In other words, we will design a hierarchy between microscopic, mesoscopic and macroscopic scales.

The kinetic model derived with this approach will also automatically guarantee Property~\ref{p:property3}. %i.e. the correct hydrodynamic limit to the Aw-Rascle and Zhang model. 
Moreover, since we have shown that \revision{an ARZ-type model} may have a negative diffusion coefficient in a small density regime, see for instance Figure~\ref{fig:subchCondD}-\ref{fig:subchCondF}, we expect that the BGK-type equation will also satisfy Property~\ref{p:property1}.

\revision{We will also observe that the ARZ-type second-order macroscopic model derived by the new mesoscopic structure is rather general since it keeps the details that kinetic provides, and it allows to recover the classic ARZ model as special case.}

\subsection{Microscopic follow-the-leader model} \label{sec:microFTL}

In~\cite{aw2002SIAP} a first-order semi-discretization in space of the Lagrangian formulation of the ARZ model is \minor{shown to be} equivalent to the microscopic follow-the-leader model:
\begin{equation} \label{eq:modelFTL}
\begin{aligned}
\dot{x}_i &= v_i\\
\dot{v}_i & = K(x_i,x_{i+1},v_i,v_{i+1}) +  \frac{1}{\epsilon} ( U_\text{eq}(\rho_i) - v_i ),
\end{aligned}
\end{equation}
with
\begin{equation} \label{eq:microInteractionTerm}
	K(x_i,x_{i+1},v_i,v_{i+1}) = C_\gamma \frac{v_{i+1}-v_i}{(x_{i+1}-x_i)^{\gamma+1}}
\end{equation}
and where $x_i$ and $v_i$ are the microscopic position and velocity of the vehicle $i$ at time $t\in\mathbb{R}^+$, $U_\text{eq}(\rho_i)$ is a target speed depending on the value of the local density $\rho_i = \frac{\Delta X}{x_{i+1}-x_i}$. The quantity $\Delta X$ is the typical length of a vehicle. In the following computations we assume $\Delta X$ normalized to $1$. The constants $C_\gamma>0$, $\gamma>0$ and the relaxation time $\epsilon$ are given parameters. The link between the microscopic and the macroscopic model is also studied in~\cite{DiFrancescoFagioliRosini2017,DiFrancescoEtAl2017,HertyMoutariVisconti2018}.

Following~\cite{aw2002SIAP}, we consider an alternative but equivalent formulation of the model~\eqref{eq:modelFTL}  based on the quantity $w_i := v_i + p(\rho_i)$.
The function $p = p(\rho_i)$ is the so-called traffic pressure and satisfies $p(\rho) \geq 0$, $p^\prime(\rho)>0$. In terms of the new variable $w_i$, model~\eqref{eq:modelFTL} reads 
\begin{equation} \label{eq:modelFTLw}
\begin{aligned}
\dot{x}_i &= v_i = w_i - p(\rho_i)\\
\dot{w}_i & = \frac{1}{\epsilon} ( U_\text{eq}(\rho_i) + p(\rho_i) - w_i ).
\end{aligned}
\end{equation}
provided $p(\rho)$ is chosen as
$$
	\frac{\mathrm{d}}{\mathrm{d}t} p(\rho_i) = - K(x_i,x_{i+1},v_i,v_{i+1})
$$
with $K$ given by~\eqref{eq:microInteractionTerm}. Model~\eqref{eq:modelFTLw} is identical to~\eqref{eq:modelFTL} but the introduction of $w_i$ allows us to rewrite the acceleration equation as a \minor{relaxation step.} % avoiding the explicit presence of the interaction term~\eqref{eq:microInteractionTerm}. We point-out 
\minor{Note that} the choice of the function $p(\rho)$ is determined by the modeling of $K$. Since a different interaction term leads to different pressure functions, \minor{we consider the case of a general function $K$.}

%\begin{remark}
%	The usual ansatz for the pressure function $p$ is that it is a non-decreasing function of the local density $\rho_i$. Thus, the interaction term $K$ has to be prescribed accordingly. Observe that this is the case of the choice given in~\eqref{eq:microInteractionTerm}. In fact, assume that the density is non-decreasing in time, then the distance between two vehicles $x_{i+1}-x_i$ is non-increasing in time and $v_{i+1}-v_i\geq0$. This implies that $\frac{\mathrm{d}}{\mathrm{d}t} p(\rho_i) = \dot{\rho_i} p^\prime(\rho_i) \geq 0$, namely $p^\prime(\rho_i) \geq 0$.
%\end{remark}

%\begin{remark}
%	The FTL model~\eqref{eq:modelFTL} differs from the Bando's %model~\eqref{eq:modelBando} since it does not describe only relaxation towards a target speed but it accounts also for acceleration due to interactions. The FTL model in terms of the quantity $w$, see~\eqref{eq:modelFTLw}, describes instead only a relaxation dynamic.
%\end{remark}

%The goal is to provide a formal derivation of a kinetic equation based on the microscopic model~\eqref{eq:modelFTLw}, which satisfies Property~\ref{p:property1} and~\ref{p:property3}. The remaining part of this section is therefore structured in the following steps.
Next we introduce stochastic particle interactions in terms of the desired speed $w_i$ that is an approximation of~\eqref{eq:modelFTLw}. Then, we derive the corresponding kinetic equation for the evolution of the distribution of the desired velocity via the relaxation algorithm, see~\cite[Section 4.2.2]{PareschiToscaniBOOK}. Finally, we rewrite the kinetic equation in terms of the actual microscopic speeds of vehicles and we verify that Property~\ref{p:property1} and~\ref{p:property3} hold true for the new kinetic model for traffic.

\subsection{A BGK-type model for traffic} \label{sec:derivationBGK}

Let $g = g(t,x,w) : \mathbb{R}^+ \times \mathbb{R} \times W \to \mathbb{R}^+$ be the kinetic distribution function such that $g(t,x,w)\mathrm{d}x\mathrm{d}w$ gives the number of vehicles at time $t\in\mathbb{R}^+$ with position in $[x,x+\mathrm{d}x] \subset \mathbb{R}$ and desired speed in $[w,w+\mathrm{d}w] \subset W$. The desired speed $w$ is assumed to be the speed that drivers want to keep in ``optimal'' situations. %: very low density regimes, good weather and environment conditions, etc. %Therefore,
We define $W:=[w_{\min},+\infty)$ the space of the microscopic desired speeds where $w_{\min}>0$ may be interpreted as the minimum speed limit in free-flow conditions. %and $w_{\max}=V_M$, where $V_M$ is the maximum speed limit.
%The assumption $w_{\min}>0$ reflects the fact that no one has desire of traveling with zero or very low speeds.

The macroscopic density, i.e. the number of vehicles per unit length, at time $t$ and position $x$ is defined by
\begin{equation} \label{eq:densityg}
\rho(t,x) := \int_W g(t,x,w) \mathrm{d}w,
\end{equation}
and we define the macroscopic quantity
\begin{equation} \label{eq:defq}
q(t,x) := \int_W w \, g(t,x,w) \mathrm{d}w
\end{equation}
%Note that $q$ differs from the usual macroscopic flux because $w$ is not the actual speed of vehicles.

In order to derive the evolution equation for the kinetic distribution $g=g(t,x,w)$, we first prescribe the update of the microscopic states in a probabilistic interpretation that has a close link with the microscopic particle model~\eqref{eq:modelFTLw}. Let us assume a set of $N$ particles $(x_i^0,w_i^0)$ is given at time $t_0$ characterized by the position $x_i^0$ and the desired speed $w_i^0$. We can think of $(x_i^0,w_i^0)$ as samples from the distribution $g(t=0,x,w)$. The update of the microscopic states $(x_i,w_i)$ at time $t_0+\Delta t$ is generated by the following algorithm.

\begin{description}
	\item[Step 1.] Reconstruct the density $\rho_i = \frac{1}{x_{i+1}-x_i}$ at time $t$
	\item[Step 2.] Compute $x_i = x_i^0 + (w_i^0 - p(\rho_i)) \Delta t$
	\item[Step 3.] For each $i$, with probability $\frac{\Delta t}{\epsilon}$ replace $w_i^0$ with a velocity sample from a distribution $M_g$ with density $\rho_i$; otherwise set $w_i = w_i^0$.
\end{description}

So far, the distribution $M_g = M_g(w;\rho)$ only has to fulfill the requirement
$$
	\int_W M_g(w;\rho) \mathrm{d}w = \rho(t,x).
$$
However, we additionally require that
\begin{equation} \label{eq:MgRelation}
	\minor{\frac{1}{\rho(t,x)} \int_W w \, M_g(w;\rho) \mathrm{d}w = U_\text{eq}(\rho) + p(\rho).}
\end{equation}
In fact, we observe that under this assumption the probabilistic microscopic algorithm prescribes an average speed which is equal to the output velocity prescribed by a first-order time discretization of~\eqref{eq:modelFTLw}. 
%In other words, the previous algorithm can be seen as a probabilistic interpretation of an approximation of~\eqref{eq:modelFTLw}. 

According to~\cite[Section 4.2.2]{PareschiToscaniBOOK}, we  interpret the interactions introduced above as a stochastic particle method solving
%equation for the update of the distribution $g$ at time $t_0+\Delta t$:
%\begin{equation} \label{eq:discretizedRelStep}
%	g(t_0+\Delta t,x,w) = \left(1-\frac{\Delta t}{\epsilon}\right) \widetilde{g}(t_0+\Delta t,x,w) + \frac{\Delta t}{\epsilon} M_g(w;\widetilde{\rho})
%\end{equation}
%where $\widetilde{g}(t_0+\Delta t,x,w)$ is the solution of the transport step
%\begin{equation} \label{eq:transportStep}
%\begin{aligned}
%\partial_t \widetilde{g}(t,x,w) + \partial_x \big[ (w-p(\rho)) \widetilde{g}(t,x,w) \big] &= 0\\
%\widetilde{g}(t_0,x,w) &= g(t_0,x,w)
%\end{aligned}
%\end{equation}
%and $\widetilde{\rho} = \int_W \widetilde{g}(t_0+\Delta t,x,w) \mathrm{d}w$.
%From~\eqref{eq:discretizedRelStep}, in the limit $\Delta t\to 0$ and since $t_0$ is arbitrary, we recover
%\begin{equation} \label{eq:relaxationStep}
%\begin{aligned}
%\partial_t g(t,x,w) &= \frac{1}{\epsilon} \left( M_g(w;\rho) - g(t,x,w) \right) \\
%g(t_0,x,w) &= \widetilde{g}(t_0+\Delta t,x,w).
%\end{aligned}
%\end{equation}
%Then, equations~\eqref{eq:transportStep} and~\eqref{eq:relaxationStep} can be seen as splitting method of the

\begin{equation} \label{eq:kineticW}
	\partial_t g(t,x,w) + \partial_x \big[ (w-p(\rho)) g(t,x,w) \big] = \frac{1}{\epsilon} \left( M_g(w;\rho) - g(t,x,w) \right)
\end{equation}
This equation  is still a BGK-type equation since the collision kernel is linear and describes the relaxation of $g$ towards a given distribution $M_g$ parameterized by the density $\rho$. 
%The only difference is in the formulation of the convective term. 
%and that here the microscopic state is not the actual speed $v$, but $w$.

\revision{It is important to point-out that, compared to classic kinetic theory, our approach is different in the sense that $M_g$ is an ``equilibrium distribution'' with a modified idea of microscopic velocity. Thanks to~\eqref{eq:MgRelation}, we impose a-priori $M_g$ but it is still based on the knowledge of the classical Maxwellian $M_f$, which is related to the classical concept of microscopic velocity, by means of $U_\text{eq}(\rho) := \frac{1}{\rho} \int v M_f \mathrm{d}v$. We do not impose $M_f$ a-priori (and so $U_\text{eq}(\rho)$ and consequently $M_g$), but we use the equilibrium distribution $M_f$ that comes out from the modeling of microscopic interactions of the spatially homogeneous kinetic model in Section~\ref{sec:homogeneous}. Therefore, the added details provided by a kinetic approach are still present.}
	
\revision{Moreover, we observe that Maxwellian $M_f$ of other kinetic models can be employed, allowing to define $M_g$ and the BGK model~\eqref{eq:kineticW} with the modified concept of microscopic speeds. Before concluding this section, we further investigate relation between $M_f$ and $M_g$.}

\minor{Let} $f=f(t,x,v)$ be the  kinetic distribution on \minor{the microscopic speed $v$ of vehicles}, such that $f(t,x,v)\mathrm{d}x\mathrm{d}v$ gives the number of vehicles at time $t$ with position in $[x,x+\mathrm{d}x] \subset \mathbb{R}$ and speed in $[v,v+\mathrm{d}v] \subset V:=[0,+\infty)$. From~\eqref{eq:kineticW} it is possible to derive the equation for $f$ via a nonlinear change of variables. %In fact, by construction from the microscopic follow-the-leader model, we can see $v$ as deviation from the desired speed due to road congestion, namely
\begin{equation} \label{eq:v}
	v = v(\rho) = w - p(\rho).
\end{equation}
The deviation is controlled by $p(\rho)$ in such a way that $p(0)=0$ so that $v=w$, i.e. drivers can behave as they wish in the vacuum regime, and $p(\rho_M) \leq w_{\min}$ in order to ensure the bound $v \geq 0$. %Thus, $v \in \widetilde{V}(\rho) \subset V$ and $\widetilde{V}(\rho)=[v_{\min}(\rho),v_{\max}(\rho)]$ where $v_{\min}(\rho)=w_{\min}-p(\rho)$ and $v_{\max}(\rho)=w_{\max}-p(\rho)$.
In terms of $f=f(t,x,v)$ we define the density as
\begin{equation} \label{eq:densityf}
	\rho(t,x) := \int_V f(t,x,v) \mathrm{d}v, %= \int_{\widetilde{V}} f(t,x,v) \mathrm{d}v,
\end{equation}
and the flux of vehicles as
\begin{equation} \label{eq:flux}
	(\rho u)(t,x) := \int_V v \, f(t,x,v) \mathrm{d}v. %= \int_{\widetilde{V}} v \, f(t,x,v) \mathrm{d}v.
\end{equation}
Further,  
$
	M_g(w;\rho) = M_g(v+p(\rho);\rho), \ w\in W, \ v\in V%\widetilde{V}.
$
and $M_f=M_f(v;\rho)$, \revision{and setting 
\begin{equation} \label{eq:equilibriumf}
	M_g(v+p(\rho);\rho) := M_f(v;\rho), \quad v\in V
\end{equation}
relation~\eqref{eq:MgRelation} is satisfied if $U_\text{eq}(\rho)$ is the equilibrium velocity provided by the Maxwellian $M_f$ of a kinetic model for traffic. In fact,} for each fixed density $\rho$, we have that
$$
	\rho = \int_W M_g(w;\rho) \mathrm{d}w = \int_V M_g(v+p(\rho);\rho) \mathrm{d}v = \int_V M_f(v;\rho) \mathrm{d}v
$$
and
$$
	\rho U_\text{eq}(\rho) + \rho p(\rho) = \int_W w M_g(w;\rho) \mathrm{d}w = \int_V (v+p(\rho)) M_g(v+p(\rho);\rho) \mathrm{d}v = \int_V v M_f(v;\rho) \mathrm{d}v + \rho p(\rho).
$$
Hence, $
	\int_V v M_f(v;\rho) \mathrm{d}v = \rho U_\text{eq}(\rho).
$

\subsection{Properties of the BGK-type model} \label{sec:properties}

We verify that Property~\ref{p:property1} and Property~\ref{p:property3} hold true for the kinetic equation~\eqref{eq:kineticW}.
\paragraph{Property~\ref{p:property3} - Macroscopic limits.}

\revision{We show that Property~\ref{p:property3} holds by computing the system of moment equations of~\eqref{eq:kineticW}, up to second order.}

The continuity equation is obtained by integration in $w$ space as
$$
\partial_t \rho(t,x) + \partial_x \left( q(t,x) - \rho p(\rho) \right) = 0
$$
where $q$ is defined by~\eqref{eq:defq}. If one assumes that $g$ is at equilibrium , $g=M_g$, we obtain the classical first-order model for traffic flow
\begin{equation} \label{eq:firstOrdLimit}
	\partial_t \rho(t,x) + \partial_x \left(\rho U_\text{eq}(\rho)\right) = 0.
\end{equation}
%If one assumes to not close the continuity equation with $M_g$, it is possible to compute from~\eqref{eq:kineticW} an evolution equation for the quantity $q$ by multiplying~\eqref{eq:kineticW} by $w$ and integrating over the space $W$. We obtain the second-order model for traffic flow given by the system of equations
Instead, the second-order moment equation gives the system
\begin{equation} \label{eq:secondOrdLimit}
	\begin{aligned}
		\partial_t \rho(t,x) + \partial_x \left( q(t,x) - \rho p(\rho) \right) &= 0\\
		\partial_t q(t,x) + \partial_x \left( \int_W w^2 \, g(t,x,w) \mathrm{d}w - p(\rho) q(t,x) \right) &= \frac{1}{\epsilon} \left( \rho U_\text{eq}(\rho) + \rho p(\rho) - q(t,x) \right).
	\end{aligned}
\end{equation}
System~\eqref{eq:secondOrdLimit} is not closed since the knowledge of the second moment of $g$ is required. In particular, the ARZ model~\eqref{eq:modelARZ} is recovered by closing as 
\begin{equation} \label{eq:closureARZ}
	\int_W w^2 \, g(t,x,w) \mathrm{d}w \approx \frac{q(t,x)^2}{\rho(t,x)}
\end{equation}
and taking the function $p(\rho)=h(\rho)$. 
%Clearly, the closure~\eqref{eq:closureARZ} is not the only possible and the kinetic approach offers more freedom since choosing different closures leads to different second-order models. 

\paragraph{Property~\ref{p:property1} - Chapman-Enskog expansion.} In order to verify that Property~\ref{p:property1} holds true for the kinetic equation~\eqref{eq:kineticW} we use Chapman-Enskog expansion approximating
$$
	g(t,x,w) = M_g(w;\rho) + \epsilon g_1(t,x,w), \quad \text{with} \ \int_W g_1(t,x,w) \mathrm{d}w = 0
$$
and defining $Q_\text{eq}(\rho) = \rho U_\text{eq}(\rho)$.
Following similar computations given in Section~\ref{sec:motivation} for the classical BGK equation,  we obtain
\begin{align*}
	\partial_t \rho(t,x) + \partial_x Q_\text{eq}(\rho) =& - \epsilon \partial_x \int_W (w-p(\rho)) g_1(t,x,w) \mathrm{d}w\\
	=& \epsilon \partial_x \int_W (w-p(\rho)) \big[ \partial_t M_g(w;\rho) + \partial_x (w-p(\rho)) M_g(w;\rho) \big] \mathrm{d}w\\
%	=& \epsilon \partial_x \Big[ \partial_t \int_W w M_g(w;\rho) \mathrm{d}w - p(\rho) \partial_t \int_W M_g(w;\rho) \mathrm{d}w \Big.\\
%		&\Big. + \partial_x \int_W w^2 M_g(w;\rho) \mathrm{d}w - \partial_x p(\rho) \int_W w M_g(w;\rho) \mathrm{d}w \Big.\\
%		&\Big. -p(\rho) \partial_x \int_W w M_g(w;\rho) \mathrm{d}w + p(\rho) \partial_x p(\rho) \int_W M_g(w;\rho) \mathrm{d}w \Big]\\
	&= \epsilon \partial_x \Big[ \partial_t \left( Q_\text{eq}(\rho) + \rho p(\rho) \right) -p(\rho) \partial_t \rho + \partial_x \int_W w^2 M_g(w;\rho) \mathrm{d}w \Big.\\
		&\Big. -\partial_x p(\rho) \left( Q_\text{eq}(\rho) + \rho p(\rho) \right) - p(\rho) \partial_x \left( Q_\text{eq}(\rho) + \rho p(\rho) \right) + p(\rho) \partial_x \left(\rho p(\rho) \right) \Big]\\
	&=\epsilon \partial_x \Big[ Q^\prime_\text{eq}(\rho) \partial_t \rho + \rho p^\prime(\rho) \partial_t \rho + \partial_x \int_W w^2 M_g(w;\rho) \mathrm{d}w \Big.\\
		&\Big. -2 p(\rho) Q^\prime_\text{eq}(\rho) \partial_x \rho - Q_\text{eq}(\rho) p^\prime(\rho) \partial_x \rho - 2 \rho p(\rho) p^\prime(\rho) \partial_x \rho - p(\rho)^2 \partial_x \rho \Big]\\
	&=\epsilon \partial_x \Big[ -Q^\prime_\text{eq}(\rho)^2 \partial_x \rho - \rho p^\prime(\rho) Q^\prime_\text{eq}(\rho) \partial_x \rho + \partial_x \int_W w^2 M_g(w;\rho) \mathrm{d}w \Big.\\
		&\Big. -2 p(\rho) Q^\prime_\text{eq}(\rho) \partial_x \rho - Q_\text{eq}(\rho) p^\prime(\rho) \partial_x \rho - 2 \rho p(\rho) p^\prime(\rho) \partial_x \rho - p(\rho)^2 \partial_x \rho \Big].
\end{align*}
Note that, using the definition~\eqref{eq:equilibriumf} of the equilibrium distribution for the velocity $v$, we  compute
\begin{align*}
	\partial_x \int_W w^2 M_g(w;\rho) \mathrm{d}w =& \partial_x \int_V w^2 M_g(v+p(\rho);\rho) \mathrm{d}v \\
	=&\partial_x \int_V (v+p(\rho))^2 M_f(v;\rho) \mathrm{d}v\\
	=&\Big[ \int_V v^2 \partial_\rho M_f(v;\rho) \mathrm{d}v + p(\rho)^2 + 2 \rho p(\rho) p^\prime(\rho) \Big.\\
	& + 2 p(\rho) Q_\text{eq}^\prime(\rho) + 2 Q_\text{eq}(\rho) p^\prime(\rho) \Big] \partial_x \rho
\end{align*}
Thus, the kinetic equation~\eqref{eq:kineticW} solves the advection-diffusion equation
$$
\partial_x \rho(t,x) + \partial_x Q_\text{eq}(\rho(t,x)) = \epsilon \partial_x \Big[ \mu(\rho) \partial_x \rho(t,x) \Big]
$$
with
\begin{equation} \label{eq:kineticDiffCoeff}
\mu(\rho) = -Q^\prime_\text{eq}(\rho)^2 + \int_V v^2 \partial_\rho M_f(v;\rho) \mathrm{d}v - \rho p^\prime(\rho) Q^\prime_\text{eq}(\rho) + Q_\text{eq}(\rho) p^\prime(\rho).
\end{equation}
It is easy to verify that~\eqref{eq:kineticDiffCoeff} is exactly~\eqref{eq:diffARZeulnoncons} if the closure~\eqref{eq:closureARZ} is considered. Observe that, compared to~\eqref{eq:diffBGK}, the diffusion coefficient~\eqref{eq:kineticDiffCoeff} contains two additional terms which depend on the function $p(\rho)$. Therefore, it is possible, for a given distribution $M_f$, to find a suitable $p(\rho)$ such that $\mu(\rho) > 0$ also in the congested regime. Recall that $\mu(\rho)$ given in~\eqref{eq:diffBGK} was unconditionally negative in the congested phase of traffic for the classical BGK model~\eqref{eq:BGK}. In particular, it is possible to find $p(\rho)$ in order to guarantee that Property~\ref{p:property1} holds for the equation~\eqref{eq:kineticW} and thus have a small regime where instabilities can occur but they are bounded.

\begin{figure}[t!]
	\centering
	\begin{subfigure}[t]{0.5\textwidth}
		\centering
		\includegraphics[width=\textwidth]{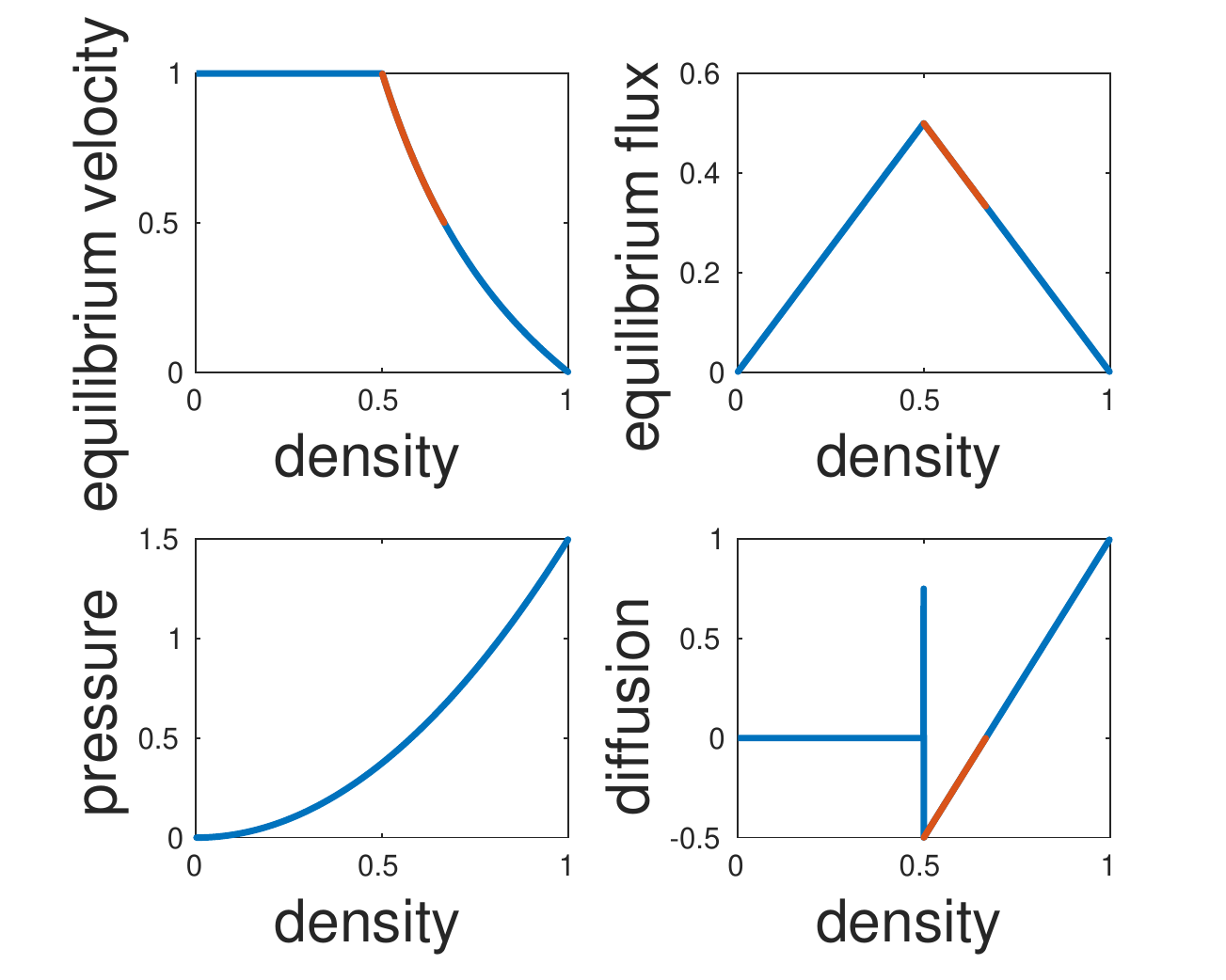}
		\caption{Closure $U_\text{eq}(\rho) = \frac{1}{\rho} \int_V v M_f(v;\rho)\mathrm{d}v$ with two speeds and pressure $p(\rho)=\frac32\rho^2$.\label{fig:BGK2speeds1}}
	\end{subfigure}%
	~
	\begin{subfigure}[t]{0.5\textwidth}
		\centering
		\includegraphics[width=\textwidth]{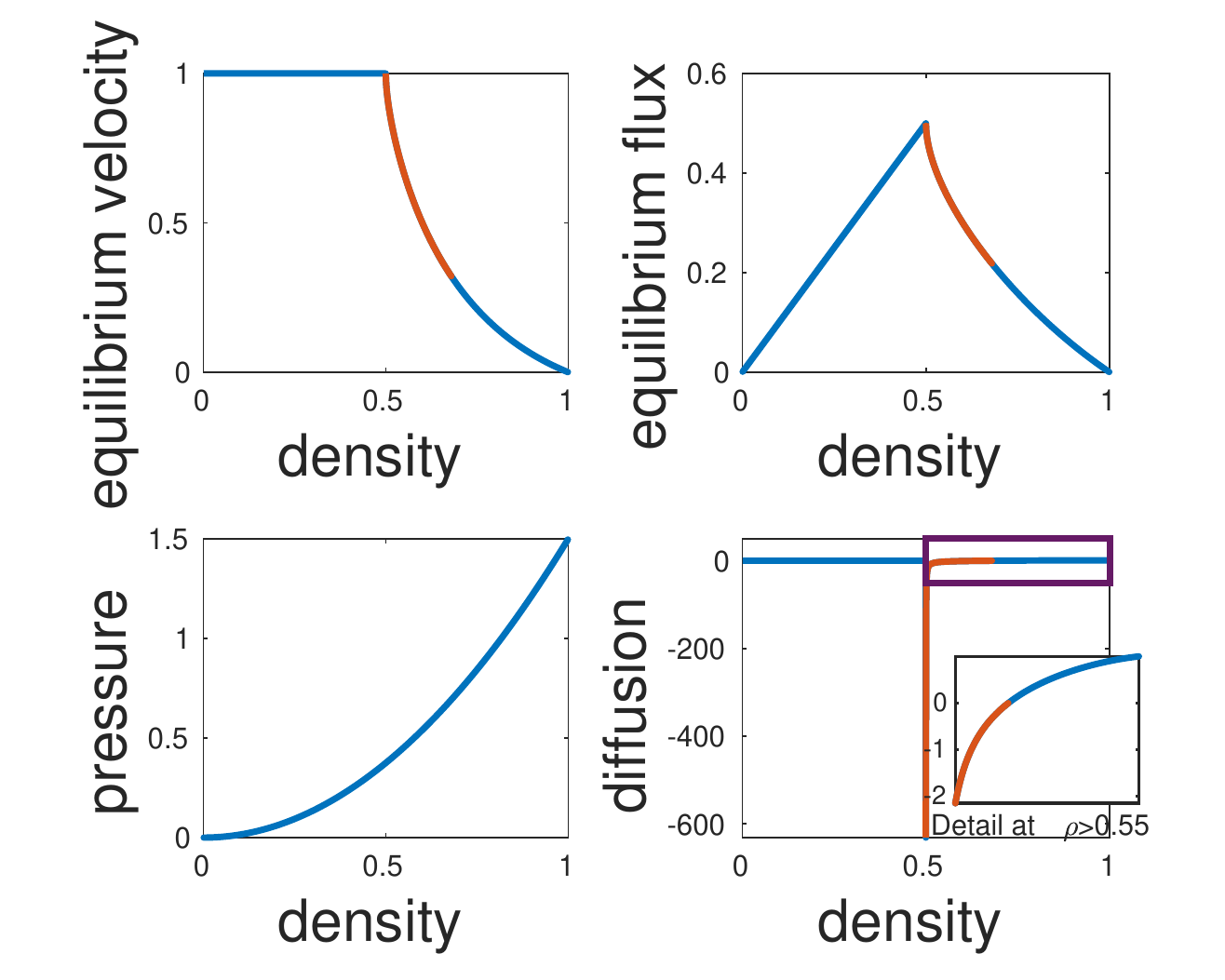}
		\caption{Closure $U_\text{eq}(\rho) = \frac{1}{\rho} \int_V v M_f(v;\rho)\mathrm{d}v$ with three speeds and pressure $p(\rho)=\frac32\rho^2$.\label{fig:BGK3speeds1}}
	\end{subfigure}
	\\
	\begin{subfigure}[t]{0.5\textwidth}
		\centering
		\includegraphics[width=\textwidth]{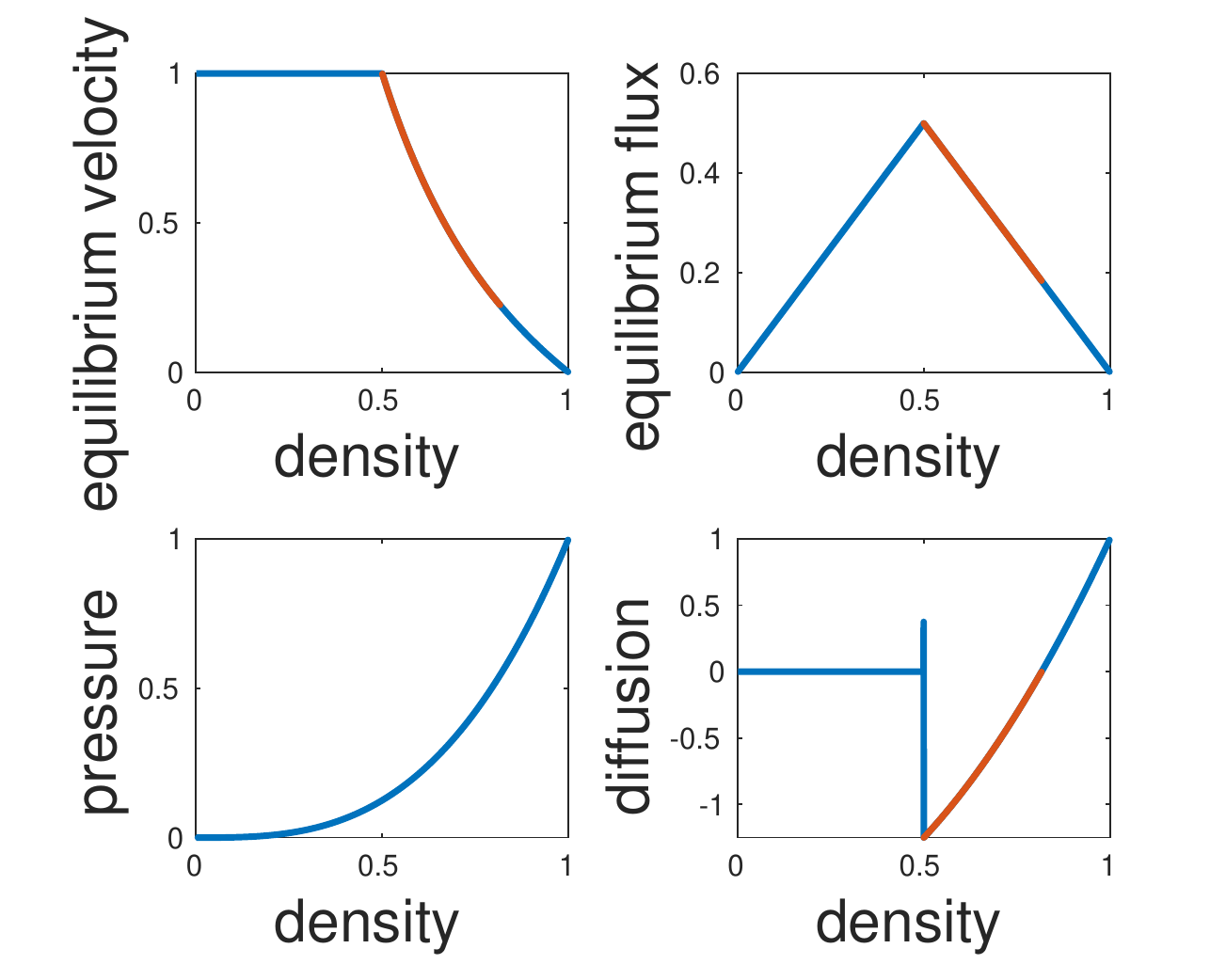}
		\caption{Closure $U_\text{eq}(\rho) = \frac{1}{\rho} \int_V v M_f(v;\rho)\mathrm{d}v$ with two speeds and pressure $p(\rho)=\rho^3$.\label{fig:BGK2speeds2}}
	\end{subfigure}%
	~
	\begin{subfigure}[t]{0.5\textwidth}
		\centering
		\includegraphics[width=\textwidth]{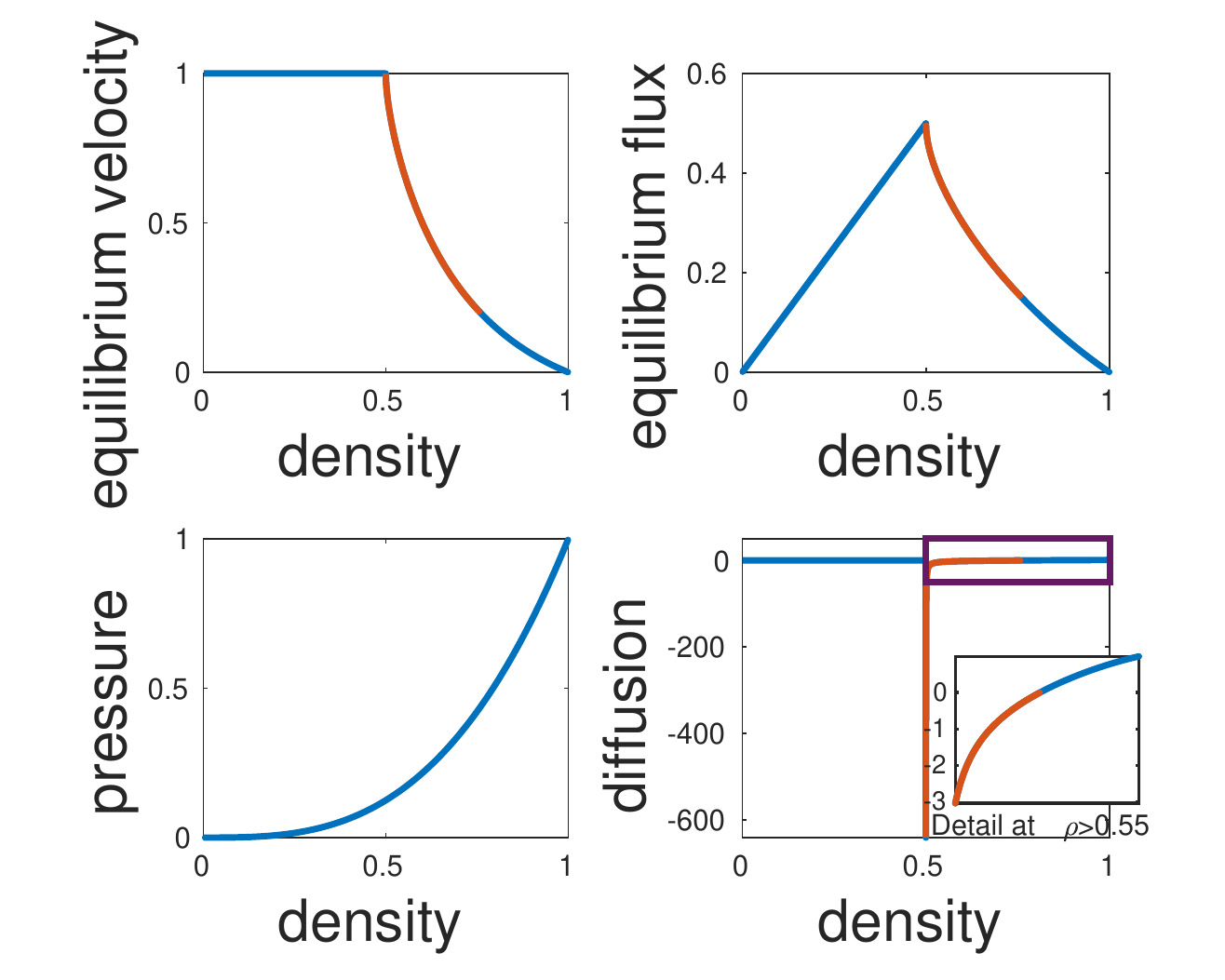}
		\caption{Closure $U_\text{eq}(\rho) = \frac{1}{\rho} \int_V v M_f(v;\rho)\mathrm{d}v$ with three speeds and pressure $p(\rho)=\rho^3$.\label{fig:BGK3speeds2}}
	\end{subfigure}
	\caption{The right bottom panels show the sign of the diffusion coefficient~\eqref{eq:kineticDiffCoeff} for the BGK-type model~\eqref{eq:kineticW} with the closure provided by the homogeneous model in Section~\ref{sec:homogeneous} and different pressure functions. Blue lines correspond to regimes where the coefficient is positive and red lines to regimes where it is negative.\label{fig:diffusionNewBGK}}
\end{figure}

Setting $Q_\text{eq}(\rho) = \rho U_\text{eq}(\rho)$ the second two terms of the diffusion coefficient~\eqref{eq:kineticDiffCoeff} can be written in terms of the equilibrium speed function as
$$
	\mu(\rho) = -Q^\prime_\text{eq}(\rho)^2 + \int_V v^2 \partial_\rho M_f(v;\rho) \mathrm{d}v - \rho^2 p^\prime(\rho) U_\text{eq}^\prime(\rho).
$$
Therefore, $\mu(\rho)=\mu_\text{BGK}(\rho)+C(\rho)$ where $C(\rho) = - \rho^2 p^\prime(\rho) U_\text{eq}^\prime(\rho)\geq 0$ since $p$ and $U_\text{eq}$ are an increasing and a non-increasing function of the density, respectively. This means that, for enough large $C(\rho)$, the additional term helps to make the diffusion coefficient~\eqref{eq:diffBGK} positive. In Figure~\ref{fig:diffusionNewBGK} we numerically show this result for the case of the homogeneous kinetic model reviewed in Section~\ref{sec:homogeneous} with $2$ and $3$ microscopic speeds and two different pressure functions, i.e. $p(\rho) = \frac32 \rho^2$ and $p(\rho) = \rho^3$. Observe that, in particular, there is a small density regime where the diffusion is negative and thus equation~\eqref{eq:kineticW} unstable in the viscous limit. Compare Figure~\ref{fig:diffusionNewBGK} with Figure~\ref{fig:diffBGK}. The given formulations of the pressure functions provide exactly the interaction term~\eqref{eq:microInteractionTerm} in the microscopic follow-the-leader model for suitable choices of the parameters $C_\gamma$ and $\gamma$.

%\section{Numerical simulations}
%
%\todo[inline]{Here we can describe the numerical method on the new BGK, show the stop and go waves and show that FTL, BGK and ARZ give the same results}

\section{Conclusions} \label{sec:conclusions}

In this paper we  studied the stability of the classical BGK formulation of a kinetic model for traffic flow. Via Chapman-Enskog expansion we have shown that the BGK model leads to an advection-diffusion equation with a negative diffusion coefficient for a density greater than the critical one. This leads to unbounded growth of waves.

The issue has been addressed by deriving a modified BGK-type equation as mesoscopic description of the classical microscopic follow-the-leader model. The new model \revision{has a second-order moment system of ARZ--type} and, as direct consequence, it result to be conditionally well-posed. The instabilities occurring in the regime of high densities have now a bounded growth and can be regarded as model for stop and go waves. Further, with the new kinetic model, we have also introduced the mesoscopic description between the microscopic follow-the-leader model and the macroscopic ARZ model. 

\section*{Acknowledgments}
This work has been supported by HE5386/13-15.
\smallskip
\\M.~Herty and G.~Visconti would like to thank the German Research Foundation (DFG) for the kind support within the Cluster of Excellence ``Internet of Production'' (IoP ID390621612).
\smallskip
\\G.~Puppo and G.~Visconti are members of the ``National Group for Scientific Computation (GNCS-INDAM)''.

\bibliographystyle{plain}
\bibliography{AppuntiTraBib}

\begin{thebibliography}{10}

\bibitem{aw2002SIAP}
A.~Aw, A.~Klar, T.~Materne, and M.~Rascle.
\newblock Derivation of continuum traffic flow models from microscopic
  follow-the-leader models.
\newblock {\em SIAM J. Appl. Math.}, 63(1):259--278, 2002.

\bibitem{aw2000SIAP}
A.~Aw and M.~Rascle.
\newblock Resurrection of ``second order'' models of traffic flow.
\newblock {\em SIAM J. Appl. Math.}, 60(3):916--938 (electronic), 2000.

\bibitem{Bando1995}
M.~Bando, K.~Hasebe, A.~Nakayama, A.~Shibata, and Y.~Sugiyama.
\newblock Dynamical model of traffic congestion and numerical simulation.
\newblock {\em Phys. Rev. E}, 51(2):1035--1042, 1995.

\bibitem{BGK1954}
P.~L. Bhatnagar, E.~P. Gross, and M.~Krook.
\newblock A {M}odel for {C}ollision {P}rocesses in {G}ases. {I}. {S}mall
  {A}mplitude {P}rocesses in {C}harged and {N}eutral {O}ne-{C}omponent
  {S}ystems.
\newblock {\em Phys. Rev.}, 94(3):511--525, 1954.

\bibitem{BressanBook}
A.~Bressan.
\newblock {\em Hyperbolic systems of conservation laws. The one-dimensional
  {C}auchy problem}.
\newblock Oxford University Press, 2000.

\bibitem{Chen92hyperbolicconservation}
G.-q. Chen, C.~D. Levermore, and T.-P. Liu.
\newblock Hyperbolic conservation laws with stiff relaxation terms and entropy.
\newblock {\em Comm. Pure Appl. Math}, 47:787--830, 1992.

\bibitem{CorliMalaguti2018}
A.~Corli and L.~Malaguti.
\newblock Viscous profiles in models of collective movements with negative
  diffusivities.
\newblock Preprint. arXiv:1806.00652, 2018.

\bibitem{Daganzo1995}
C.~F. Daganzo.
\newblock Requiem for second-order fluid approximation to traffic flow.
\newblock {\em Transport. Res. B-Meth.}, 29(4):277--286, 1995.

\bibitem{DelitalaTosin2007}
M.~Delitala and A.~Tosin.
\newblock Mathematical modeling of vehicular traffic: a discrete kinetic theory
  approach.
\newblock {\em Math. Models Methods Appl. Sci.}, 17(6):901--932, 2007.

\bibitem{DiFrancescoFagioliRosini2017}
M.~Di~Francesco, S.~Fagioli, and Rosini.
\newblock Many particle approximation of the {A}w-{R}ascle-{Z}hang second order
  model for vehicular traffic.
\newblock {\em Math. Biosci. Eng.}, 14(1):127--141, 2017.

\bibitem{DiFrancescoEtAl2017}
M.~Di~Francesco, S.~Fagioli, M.~D. Rosini, and G.~Russo.
\newblock Follow-the-leader approximations of macroscopic models for vehicular
  and pedestrian flows.
\newblock In {\em Active particles. {V}ol. 1. {A}dvances in theory, models, and
  applications}, Model. Simul. Sci. Eng. Technol., pages 333--378.
  Birkh\"{a}user/Springer, Cham, 2017.

\bibitem{FermoTosin13}
L.~Fermo and A.~Tosin.
\newblock A fully-discrete-state kinetic theory approach to modeling vehicular
  traffic.
\newblock {\em SIAM J. Appl. Math.}, 73(4):1533--1556, 2013.

\bibitem{FermoTosin14}
L.~Fermo and A.~Tosin.
\newblock Fundamental diagrams for kinetic equations of traffic flow.
\newblock {\em Discrete Contin. Dyn. Syst. Ser. S}, 7(3):449--462, 2014.

\bibitem{FilbetJin2010}
F.~Filbet and S.~Jin.
\newblock A class of asymptotic-preserving schemes for kinetic equations and
  related problems with stiff sources.
\newblock {\em J. Comput. Phys.}, 229(20):7625--7648, 2010.

\bibitem{FTL1961}
D.~Gazis, R.~Herman, and R.~Rothery.
\newblock Nonlinear follow-the-leader models of traffic flow.
\newblock {\em Oper. Res.}, 9(4):545--567, 1961.

\bibitem{GildingTesei2010}
B.~H. Gilding and A.~Tesei.
\newblock The {R}iemann problem for a forward-backward parabolic equation.
\newblock {\em Phys. D}, 239(6):291--311, 2010.

\bibitem{HertyMoutariVisconti2018}
M.~Herty, S.~Moutari, and G.~Visconti.
\newblock Macroscopic modeling of multilane motorways using a two-dimensional
  second-order model of traffic flow.
\newblock {\em SIAM J. Appl. Math.}, 78(4):2252--2278, 2018.

\bibitem{HertyPareschi10}
M.~Herty and L.~Pareschi.
\newblock Fokker-{P}lanck {A}symptotics for {T}raffic {F}low {M}odels.
\newblock {\em Kinet. Relat. Models}, 1(3):165--179, 2010.

\bibitem{HertyPareschiSeaid2006}
M.~Herty, L.~Pareschi, and M.~Seaid.
\newblock Discrete velocity models and relaxation schemes for traffic flows.
\newblock {\em SIAM J. Sci. Comput.}, 28:1582--1596, 2006.

\bibitem{HoldenRisebro2018b}
H.~Holden and N.~H. Risebro.
\newblock The continuum limit of {F}ollow-the-{L}eader models — a short
  proof.
\newblock {\em Discrete Cont. Dyn-A}, 38(2), 2018.

\bibitem{HoldenRisebro2018a}
N.~H. Holden, H.~Risebro.
\newblock Follow-the-{L}eader models can be viewed as a numerical approximation
  to the {L}ighthill-{W}hitham-{R}ichards model for traffic flow.
\newblock {\em Netw. Heterog. Media}, 13(3):409--421, 2018.

\bibitem{IllnerKlarMaterne}
R.~Illner, A.~Klar, and T.~Materne.
\newblock Vlasov-{F}okker-{P}lanck models for multilane traffic flow.
\newblock {\em Commun. Math. Sci.}, 1(1):1--12, 2003.

\bibitem{Jin95therelaxation}
S.~Jin and Z.~Xin.
\newblock The relaxation schemes for systems of conservation laws in arbitrary
  space dimensions.
\newblock {\em Comm. Pure Appl. Math}, 48:235--277, 1995.

\bibitem{kerner2004BOOK}
B.~S. Kerner.
\newblock {\em The {P}hysics of {T}raffic}.
\newblock Understanding {C}omplex {S}ystems. Springer, Berlin, 2004.

\bibitem{KimZhang2008}
T.~Kim and H.~M. Zhang.
\newblock A stochastic wave propagation model.
\newblock {\em Transport. Res. B-Meth.}, 42(7-8):619--634, 2008.

\bibitem{KlarWegener96}
A.~Klar and R.~Wegener.
\newblock A kinetic model for vehicular traffic derived from a stochastic
  microscopic model.
\newblock {\em Transport. Theor. Stat.}, 25:785--798, 1996.

\bibitem{klar1997Enskog}
A.~Klar and R.~Wegener.
\newblock Enskog-like kinetic models for vehicular traffic.
\newblock {\em J. Stat. Phys.}, 87:91, 1997.

\bibitem{LafitteMascia2012}
P.~Lafitte and C.~Mascia.
\newblock Numerical exploration of a forward-backward diffusion equation.
\newblock {\em Math. Models Methods Appl. Sci.}, 22(6):1250004, 2012.

\bibitem{LevequeBook}
R.~J. LeVeque.
\newblock {\em Numerical methods for conservation laws}.
\newblock Lectures in Mathematics: ETH Z\"{u}rich. Birkh\"{a}user, Basel, 1992.

\bibitem{lighthill1955PRSL}
M.~J. Lighthill and G.~B. Whitham.
\newblock On kinematic waves. {II}. {A} theory of traffic flow on long crowded
  roads.
\newblock {\em Proc. Roy. Soc. London. Ser. A.}, 229:317--345, 1955.

\bibitem{LoSchiavo2002}
M.~Lo~Schiavo.
\newblock A personalized kinetic model of traffic flow.
\newblock {\em Math. Comput. Modelling}, 35(5-6):607--622, 2002.

\bibitem{MasciaTerracinaTesei2009}
C.~Mascia, A.~Terracina, and A.~Tesei.
\newblock Two-phase entropy solutions of a forward-backward parabolic equation.
\newblock {\em Arch. Ration. Mech. Anal.}, 194(3):887--925, 2009.

\bibitem{MendezVelasco13}
A.~R. M\'endez and R.~M. Velasco.
\newblock Kerner's free-synchronized phase transition in a macroscopic traffic
  flow model with two classes of drivers.
\newblock {\em J. Phys. A: Math. Theor.}, 46, 2013.

\bibitem{PareschiRusso2005}
L.~Pareschi and G.~Russo.
\newblock Implicit-explicit runge-kutta methods and applications to hyperbolic
  systems with relaxation.
\newblock {\em J. Sci. Comput.}, 25(1-2):129--155, 2005.

\bibitem{PareschiRusso2011}
L.~Pareschi and G.~Russo.
\newblock Efficient asymptotic preserving deterministic methods for the
  {B}oltzmann equation.
\newblock Lecture series, von Karman Institute, Rhode St. Gen\`{e}se, Belgium,
  January 2011.
\newblock AVT-194 RTO AVT/VKI, Models and Computational Methods for Rarefied
  Flows.

\bibitem{PareschiToscaniBOOK}
L.~Pareschi and G.~Toscani.
\newblock {\em Interacting {M}ultiagent {S}ystems. {K}inetic equations and
  {M}onte {C}arlo methods}.
\newblock Oxford University Press, 2013.

\bibitem{paveri1975TR}
S.~L. Paveri-Fontana.
\newblock On {B}oltzmann-like treatments for traffic flow: a critical review of
  the basic model and an alternative proposal for dilute traffic analysis.
\newblock {\em Transport. Res.}, 9(4):225--235, 1975.

\bibitem{Payne1971}
H.~J. Payne.
\newblock Models of freeway traffic and control.
\newblock {\em Math. Models Publ. Sys., Simulation Council Proc. 28}, 1:51--61,
  1971.

\bibitem{PieracciniPuppo2006}
S.~Pieraccini and G.~Puppo.
\newblock Implicit-explicit schemes for {BGK} kinetic equations.
\newblock {\em J. Sci. Comput.}, 32(1):1--28, 2006.

\bibitem{Prigogine61}
I.~Prigogine.
\newblock A {B}oltzmann-like approach to the statistical theory of traffic
  flow.
\newblock In R.~Herman, editor, {\em Theory of traffic flow}, pages 158--164,
  Amsterdam, 1961. Elsevier.

\bibitem{PrigogineHerman}
I.~Prigogine and R.~Herman.
\newblock {\em Kinetic theory of vehicular traffic}.
\newblock American Elsevier Publishing Co., New York, 1971.

\bibitem{PgSmTaVg3}
G.~Puppo, M.~Semplice, A.~Tosin, and G.~Visconti.
\newblock Analysis of a multi-population kinetic model for traffic flow.
\newblock {\em Commun. Math. Sci.}, 15(2):379--412, 2017.

\bibitem{PSTV2}
G.~Puppo, M.~Semplice, A.~Tosin, and G.~Visconti.
\newblock Kinetic models for traffic flow resulting in a reduced space of
  microscopic velocities.
\newblock {\em Kinet. Relat. Mod.}, 10(3):823--854, 2017.

\bibitem{richards1956OR}
P.~I. Richards.
\newblock Shock waves on the highway.
\newblock {\em Operations Res.}, 4:42--51, 1956.

\bibitem{TesiRonc}
Sebastiano Roncoroni.
\newblock Kinetic modelling of vehicular traffic flow.
\newblock Technical report, Universit\`a degli Studi dell'Insubria, 2017.
\newblock Master Thesis.

\bibitem{SeiboldFlynnKasimovRosales2013}
B.~Seibold, M.~R. Flynn, A.~R. Kasimov, and R.~R. Rosales.
\newblock Constructing set-valued fundamental diagrams from jamiton solutions
  in second order traffic models.
\newblock {\em Netw. Heterog. Media}, 8(3):745--772, 2013.

\bibitem{Whitham1959}
G.~B. Whitham.
\newblock Some comments on wave propagation and shock wave structure with
  application to magnetohydrodynamics.
\newblock {\em Comm. Pure Appl. Math.}, 12(1):113--158, 1959.

\bibitem{Zhang2002}
H.~M. Zhang.
\newblock A non-equilibrium traffic model devoid of gas-like behavior.
\newblock {\em Transport. Res. B-Meth.}, 36(3):275--290, 2002.

\bibitem{ZhangKim2005}
H.~M. Zhang and T.~Kim.
\newblock A car-following theory for multiphase vehicular traffic flow.
\newblock {\em Transport. Res. B-Meth.}, 39(5):385--399, 2005.

\end{thebibliography}
\end{document}